\definecolor{Red}{rgb}{1,0,0}
\definecolor{Blue}{rgb}{0,0,1}
\definecolor{Olive}{rgb}{0.41,0.55,0.13}
\definecolor{Green}{rgb}{0,1,0}
\definecolor{MGreen}{rgb}{0,0.8,0}
\definecolor{DGreen}{rgb}{0,0.55,0}
\definecolor{Yellow}{rgb}{1,1,0}
\definecolor{Cyan}{rgb}{0,1,1}
\definecolor{Magenta}{rgb}{1,0,1}
\definecolor{Orange}{rgb}{1,.5,0}
\definecolor{Violet}{rgb}{.5,0,.5}
\definecolor{Purple}{rgb}{.75,0,.25}
\definecolor{Brown}{rgb}{.75,.5,.25}
\definecolor{Grey}{rgb}{.5,.5,.5}
\theoremstyle{plain}
\newtheorem{theorem}{Theorem}[section] 
\newtheorem{corollary}{Corollary}[section]
\newtheorem{claim}{Claim}[section]
\newtheorem{lemma}{Lemma}[section]
\newtheorem{conj}{Conjecture}
\theoremstyle{remark}
\newtheorem{remark}{Remark}
\theoremstyle{definition}
\title{The Entropy Power Inequality and Mrs. Gerber's Lemma for Abelian Groups of Order ${2^n}$}
\author{ Varun  Jog,  Venkat Anantharam\\
Department of Electrical Engineering and Computer Sciences \\ University of California, Berkeley\\
\texttt{\small jogvarun@gmail.com,ananth@eecs.berkeley.edu}}
\begin{document}
\maketitle
\allowdisplaybreaks

\begin{abstract}

Shannon's Entropy Power Inequality can be viewed as characterizing the minimum differential entropy achievable by the sum of two independent 
random variables with fixed differential entropies. 
The entropy power inequality has played a key role in resolving a number of problems in information theory. It is therefore interesting to examine the existence of a similar inequality for discrete random variables.
In this paper we obtain an entropy power inequality for random variables taking values in an abelian group of order $2^n$, i.e. for such a group $G$ we explicitly characterize the function $f_G(x,y)$ giving
the minimum entropy of the sum of two independent $G$-valued random variables with respective entropies $x$ and $y$. Random variables achieving the extremum in this inequality 
are thus the analogs of Gaussians in this case, and these are also determined. It turns out that $f_G(x,y)$ is convex in $x$ for fixed $y$ and, by symmetry, convex in $y$ for fixed $x$. This is a generalization to abelian 
groups of order $2^n$ of the result known as Mrs. Gerber's Lemma.\\ 

\textit{Keywords}: Entropy, Entropy power inequality, Mrs. Gerber's Lemma, Finite abelian groups.
\end{abstract}

\section{Introduction}
The Entropy Power Inequality (EPI) relates to the so called ``entropy power" of $\mathbb{R}^n$-valued random variables having densities with well defined differential entropies. It was first proposed by Shannon in $1948$ \cite{shannon1948},  who also gave sufficient conditions for equality to hold. The entropy power of 
an $\mathbb{R}^n$-valued random variable $\mathbf{X}$ is defined as the per-coordinate variance of a circularly symmetric $\mathbb{R}^n$-valued Gaussian random variable with the same differential entropy as $\mathbf{X}$.
\\
\begin{theorem}[Entropy Power Inequality]
For an $\mathbb{R}^n$-valued random variable $\mathbf{X}$, the entropy power of $\mathbf{X}$ is defined to be
\begin{equation}\label{ep}
N(\mathbf{X}) = \frac{1}{2\pi e}e^{\frac{2}{n}h(\mathbf{X})},
\end{equation}
where $h(\mathbf{X})$ stands for the differential entropy of $X$. 
Now let $\mathbf{X}$ and $\mathbf{Y}$ be independent $\mathbb{R}^n$-valued random variables. The EPI states that entropy power is  a super-additive function, that is
\begin{equation}\label{epi}
N(\mathbf{X}) + N(\mathbf{Y}) \leq N(\mathbf{X+Y}),
\end{equation}
with equality if and only if $\mathbf{X}$ and $\mathbf{Y}$ are Gaussian with proportional covariance matrices.
\end{theorem}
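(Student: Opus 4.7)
\medskip

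\noindent\textbf{Proof proposal.} The plan is to follow the classical Stam--Blachman route via Fisher information and de Bruijn's identity, culminating in a ``heat flow'' monotonicity argument of the Dembo--Cover--Thomas type. First I introduce the Fisher information $J(X)=\mathrm{E}\bigl[\|\nabla\log f_X(X)\|^2\bigr]$ for an $\mathbb{R}^n$-valued random variable with a smooth density $f_X$, and reformulate the EPI in the equivalent exponential form $e^{2h(X+Y)/n}\geq e^{2h(X)/n}+e^{2h(Y)/n}$. By a standard approximation (convolve with a tiny Gaussian and take limits) we may assume $X$ and $Y$ have smooth densities with finite Fisher information.

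Second, I establish the two analytic ingredients. (i) de Bruijn's identity: for $Z\sim\mathcal{N}(0,I_n)$ independent of $X$, one has $\frac{d}{dt}h\bigl(X+\sqrt{t}\,Z\bigr)=\tfrac{1}{2}J\bigl(X+\sqrt{t}\,Z\bigr)$, proved by differentiating under the integral sign using that $f_{X+\sqrt{t}Z}$ satisfies the heat equation and integrating by parts. (ii) The Fisher Information Inequality (FII): for independent $X,Y$,
\[
\frac{1}{J(X+Y)}\geq\frac{1}{J(X)}+\frac{1}{J(Y)},
\]
with equality iff $X$ and $Y$ are Gaussian with proportional covariances. I would prove this by writing the score of $X+Y$ as a conditional expectation of scores of the summands (a consequence of the convolution $f_{X+Y}=f_X*f_Y$), then applying Cauchy--Schwarz to the convex combination that minimizes variance.

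Third, the core argument. Set $\lambda=N(X)/(N(X)+N(Y))$ and define the perturbations $X_s=X+\sqrt{\lambda s}\,Z_1$ and $Y_s=Y+\sqrt{(1-\lambda)s}\,Z_2$ with $Z_1,Z_2\sim\mathcal{N}(0,I_n)$ independent; note $X_s+Y_s\stackrel{d}{=}(X+Y)+\sqrt{s}\,Z$. Consider
\[
\Delta(s)=N(X_s+Y_s)-N(X_s)-N(Y_s).
\]
Using de Bruijn's identity to compute $\Delta'(s)$ and the FII to bound the Fisher information term for $X_s+Y_s$, I would show $\Delta'(s)\leq 0$. As $s\to\infty$ each entropy power is dominated by the variance of the added Gaussian, so $N(X_s)\sim\lambda s$, $N(Y_s)\sim(1-\lambda)s$, and $N(X_s+Y_s)\sim s$, whence $\Delta(s)\to 0$. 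Monotonicity then forces $\Delta(0)\geq 0$, which is precisely the EPI. The equality case is tracked through the FII: equality at $s=0$ propagates along the flow, and the FII equality condition forces $X,Y$ to be Gaussian with proportional covariances.

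The main obstacle I anticipate is the FII and its equality condition. The inequality itself is delicate because it asserts a super-additivity of inverse Fisher informations from a convolution, so picking the correct convex combination of the scores (with weights $J(Y)/(J(X)+J(Y))$ and $J(X)/(J(X)+J(Y))$) is essential; getting the right weights is what makes Cauchy--Schwarz tight exactly at proportional-covariance Gaussians. Pushing this equality characterization through the heat-flow step, and handling densities without enough smoothness via a mollification argument, will require the most care.
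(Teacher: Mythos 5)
This theorem is classical background in the paper: the authors do not prove it, they cite Shannon, Stam, Blachman, Lieb, Verd\'u--Guo and Rioul for its proofs. So your proposal can only be judged on its own terms, and on those terms the route you pick (de Bruijn's identity plus the Fisher Information Inequality plus an asymptotic normalization as the added Gaussian noise dominates) is indeed the classical Stam--Blachman strategy, and your statements of de Bruijn's identity and of the FII, including the optimal weights $J(Y)/(J(X)+J(Y))$ and $J(X)/(J(X)+J(Y))$ in the score decomposition, are correct.

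There is, however, a genuine gap in the core monotonicity step. With your fixed split $\lambda=N(X)/(N(X)+N(Y))$, de Bruijn gives
\begin{equation*}
\Delta'(s)=\frac{1}{n}\Bigl[N(X_s+Y_s)\,J(X_s+Y_s)-\lambda\,N(X_s)J(X_s)-(1-\lambda)\,N(Y_s)J(Y_s)\Bigr],
\end{equation*}
and the FII only provides an \emph{upper bound on $J(X_s+Y_s)$}; it says nothing about the factor $N(X_s+Y_s)$ multiplying it, which is precisely the quantity the EPI is trying to lower-bound and for which no useful a priori upper bound is available. So $\Delta'(s)\leq 0$ does not follow from the FII as you describe, and the difference $\Delta(s)$ with a time-independent proportional split is not the quantity the classical proof makes monotone. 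The standard repair is twofold: (i) work with the \emph{ratio} $s(t)=\bigl(N(X_t)+N(Y_t)\bigr)/N(X_t+Y_t)$, so that the troublesome factor $N(X_t+Y_t)$ cancels out of the sign analysis of the numerator of $s'(t)$; and (ii) use the Blachman path, i.e.\ add noise at the time-varying rates $f'(t)=1/J(X_t)$ and $g'(t)=1/J(Y_t)$ rather than in fixed proportions, after which one application of the FII (or a Cauchy--Schwarz step with the time-$t$ weights) shows $s'(t)\geq 0$, and $s(t)\to 1$ gives $s(0)\leq 1$, which is the EPI. (Lieb's equivalent normalization $h(\sqrt{\lambda}X+\sqrt{1-\lambda}Y)\geq\lambda h(X)+(1-\lambda)h(Y)$ is another way to avoid the uncontrolled factor.) The equality characterization also needs more than "tracking through the flow": one must argue that strict inequality in the FII at some positive time forces strict inequality at time zero, which requires the continuity and positivity estimates you have deferred to the mollification step.
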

Shannon used a variational argument to show that $\mathbf{X}$ and $\mathbf{Y}$ being Gaussian with proportional covariance matrices and having the required entropies is a stationary point for $h(\mathbf{X+Y})$, but this did not exclude the possibility of it being a local minimum or a saddle point. The first rigorous proof of (\ref{epi}) was given by Stam \cite{stam1959} in $1959$ based on an identity
communicated to him by N. G. De Bruijn, which couples Fisher information with  differential entropy.  Stam's proof was further simplified by Blachman \cite{blachman1965}. Lieb \cite{lieb1978} gave a proof of the EPI using a strengthened Young's inequality. More recently, Verd{\'u} and Guo \cite{verdu2006} gave a  proof without invoking Fisher information, by using the relationship between mutual information and minimum mean square error (MMSE) for Gaussian channels. Rioul \cite{rioul2011} managed to give a proof sidestepping Fisher information as well as MMSE estimates.
\smallskip

The EPI has a played a key role in the solution of a number of communication problems. It is generally used to prove converses of coding theorems when Fano's inequality is insufficient to prove optimality. Some famous examples consist of Bergmans's solution to the Gaussian broadcast channel problem \cite{bergmans1973}, Leung-Yan-Cheong and Hellman's determination of  the secrecy capacity of a Gaussian wire-tap channel \cite{leung1978}, Ozarow's solution to the scalar Gaussian source two-description problem \cite{ozarow1980}, Oohama's solution to the quadratic Gaussian CEO problem \cite{oohama1998}, and recently Weingarten, Steinberg and Shamai's solution to the multiple-input multiple-output Gaussian broadcast channel problem \cite{weingarten2006mimo}.
\smallskip

The EPI has been generalized in a number of ways. Costa \cite{costa1985} strengthened the inequality when one of random variables was Gaussian. In particular, Costa showed that if independent Gaussian noise is added to an arbitrary multivariate random variable, the entropy power of the resulting random variable is concave in the variance of the added noise. Dembo \cite{dembo1989} reduced Costa's inequality to an equivalent inequality in terms of Fisher information and proved this inequality. Vilani \cite{villani2000} further simplified Dembo's proof. Zamir and Feder \cite{zamir1993generalization} generalized the scalar EPI using linear transformations of random variables. T. Liu and Viswanath \cite{viswanathliu2007} obtained a generalization of the EPI by considering a covariance-constrained optimization problem, motivated by the problems of the capacity region of the vector Gaussian broadcast channel and of distributed source coding with a single quadratic distortion constraint. R. Liu, T. Liu, Poor and Shamai \cite{liu2010vector} gave a vector generalization of Costa's EPI. The EPI for general independent random variables and the corresponding Fisher information inequalities have also been used to prove strong versions of the central limit theorm, with convergence in relative entropy. Artstein, Ball, Barthe, and Naor \cite{artstein2004} showed that the non-Gaussianness (divergence with respect to a Gaussian random variable with identical first and second moments) of the sum of independent and identically distributed random variables is monotonically non-increasing.. Simplified proofs of this result were later given Tulino and Verd\'{u} \cite{tulino2006} and by Madiman and Barron \cite{madiman2007generalized}. 
\smallskip

There have also been several attempts to obtain discrete versions of the EPI. For the binary symmetric channel (BSC), Wyner and Ziv \cite{wyner1973part1},\cite{wyner1973part2} proved a result called Mrs. Gerber's Lemma (MGL), see Theorem \ref{mgl} below, which was extended to arbitrary binary input-output channels by Witsenhausen \cite{witsenhausen1974}. Shamai and Wyner \cite{shamai1990} used MGL to give a binary analog of the EPI. Harremo\"{e}s and Vignat \cite{harremoes2003epi} proved a version of the EPI for binomial random variables with parameter $\frac{1}{2}$. Sharma, Das and Muthukrishnan \cite{sharma2011epi} expanded the class of binomial random variables for which Harremo\"{e}s's EPI holds. Johnson and Yu \cite{johnson2010monotonicity} gave a version of the EPI for discrete random variables using the notion of Renyi thinning. 
\smallskip

In this paper we take a different approach towards getting a discrete analog of the EPI. Notice that even though the EPI is interpreted as an inequality in terms of the ``entropy power" of random variables, it is essentially a sharp lower bound on the differential entropy of a sum of independent random variables in terms of their individual differential entropies. If we are dealing with discrete random variables, as long the ``sum" operation is defined we can arrive at an analogous lower bound, except with entropies instead of differential entropies. A natural case to consider is when the random variables take values an abelian group $G$ and to define the function $f_G: [0, \log |G|] \times [0, \log |G|] \to [0, \log |G|] $ by 
\begin{equation}		\label{fgdefn}
f_G(x,y) = \min_{H(X)=x, H(Y)=y} H(X+Y).
\end{equation} 
We can then exploit the group structure and try to arrive at the explicit form of $f_G$.\\ 
 
A closely related function has been studied by Tao \cite{tao2010sumset} in which the sumset theory of Plunnecke and Ruzsa \cite{taovu2006} has been reinterpreted using  entropy as a proxy for the cardinality of a set. The sumset and inverse sumset inequalities in \cite{tao2010sumset} were further proved for differential entropy in \cite{kontoyiannis2012sumset}. 
\smallskip

Let us now consider two special cases:  $ G = \mathbb{Z}_2$ and $G = \mathbb{R}$. In the first case we note that on $\mathbb{Z}_2$, there is a unique distribution (up to rotation) corresponding to a fixed value of entropy. We can use this to simplify $f_{\mathbb{Z}_2}$ by writing it in terms of the inverse of binary entropy, $h^{-1}:[0,\log 2] \to [0, \frac{1}{2}]$
\begin{equation}
f_{\mathbb{Z}_2}(x,y) = h(h^{-1}(x) \star h^{-1}(y)).
\end{equation}
This is precisely the function for which Wyner and Ziv's MGL is applicable, in fact we can restate MGL in terms of $f_{\mathbb{Z}_2}$:
\begin{theorem}[Mrs. Gerber's Lemma] \label{mgl}
$f_{\mathbb{Z}_2}(x,y)$ is convex in $y$ for a fixed $x$, and by symmetry convex in $x$ for a fixed $y$.
\end{theorem}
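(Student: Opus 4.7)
The plan is to prove convexity of $\Phi(y) := f_{\mathbb{Z}_2}(x,y) = h(a \star h^{-1}(y))$ for fixed $x$ and $a := h^{-1}(x) \in [0, 1/2]$, by a direct second-derivative calculation enabled by a hyperbolic substitution. The case $a > 1/2$ reduces to $a \leq 1/2$ via the identity $(1-a)\star q = 1 - a\star q$ together with $h(t) = h(1-t)$. Convexity in $x$ for fixed $y$ then follows from the obvious symmetry $f_{\mathbb{Z}_2}(x,y) = f_{\mathbb{Z}_2}(y,x)$.

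First I parameterize $y$ by $q := h^{-1}(y) \in [0, 1/2]$ and apply the chain rule. With $h'(t) = \log\frac{1-t}{t}$, I obtain $\Phi'(y) = \frac{(1-2a)\,h'(a\star q)}{h'(q)}$. Since $h'(q) > 0$ on $(0, 1/2)$, the sign of $\Phi''(y)$ equals the sign of $(1-2a)\cdot\frac{d}{dq}R(q)$, where $R(q) := h'(a\star q)/h'(q)$. It therefore suffices to show that $R$ is non-decreasing on $(0, 1/2)$.

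Next I substitute $v := \tfrac{1}{2} h'(q)$, so that $1 - 2q = \tanh v$, $v \in (0, \infty)$, and $q$ decreases as $v$ increases. The identity $1 - 2(a\star q) = (1-2a)(1-2q) = s\tanh v$ with $s := 1 - 2a \in [0,1]$ yields $h'(a\star q) = 2\,\mathrm{atanh}(s\tanh v)$, and hence $R = F(v)/v$ with $F(v) := \mathrm{atanh}(s \tanh v)$. Monotonicity of $R$ in $q$ is therefore equivalent to $F(v)/v$ being non-increasing on $(0, \infty)$. Since $F(0) = 0$, it suffices to show $F$ is concave on $[0, \infty)$, since then the secant slopes $F(v)/v$ are automatically non-increasing. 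A short calculation gives $F''(v) = -\frac{2s(1-s^2)\,\mathrm{sech}^2 v \, \tanh v}{(1 - s^2\tanh^2 v)^2}$, which is $\leq 0$ for $s \in [0, 1]$ and $v \geq 0$, completing the argument.

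The main obstacle is the choice of parametrization. A naive attempt to differentiate $\Phi$ twice directly in $q$ produces a tangled combination of logarithmic and rational terms whose sign is not manifest, and a purely probabilistic argument via a mixing auxiliary $U$ and the data-processing inequality $I(Y \oplus Z; U) \leq I(Y; U)$ turns out to give the wrong direction (it compares concavity deficits rather than bounding $\Phi$ from below). The hyperbolic substitution $q = (1 - \tanh v)/2$ sidesteps both difficulties because it linearizes binary convolution in the $\tanh$ coordinate (the map $q \mapsto a\star q$ becomes the scalar multiplication $\tanh v \mapsto s\tanh v$), reducing the problem to the elementary concavity of $v \mapsto \mathrm{atanh}(s\tanh v)$ for $s \in [0, 1]$.
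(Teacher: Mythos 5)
Your proof is correct, but it takes a genuinely different route from the paper for the simple reason that the paper does not prove Theorem \ref{mgl} at all: it is stated as a known result of Wyner and Ziv and used as a black box throughout (the lengthy calculus in the appendix is devoted to the different and stronger Lemma \ref{dfdx} about monotonicity of $\partial f/\partial x$ along rays, not to MGL itself). Your argument is a clean, self-contained replacement for that citation. I checked the details: $\Phi'(y) = (1-2a)h'(a\star q)/h'(q)$ is right, the sign reduction to monotonicity of $R(q)=h'(a\star q)/h'(q)$ is valid because $1-2a\ge 0$ for $a=h^{-1}(x)\in[0,\tfrac12]$ and $h'(q)>0$ on $(0,\tfrac12)$, the identity $1-2(a\star q)=(1-2a)(1-2q)$ correctly turns $R$ into $F(v)/v$ with $F(v)=\mathrm{atanh}(s\tanh v)$, the secant-slope argument from $F(0)=0$ and concavity is standard, and your formula for $F''$ checks out and is $\le 0$ for $s\in[0,1]$, $v\ge 0$ (with the degenerate cases $s=0,1$ giving the linear/constant boundary behavior, consistent with the paper's remark that the partials are $(1,0)$ and $(0,1)$ on the axes). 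The comparison with the classical route is worth making explicit: Wyner--Ziv's original proof differentiates twice in the $q$ (or $p$) coordinate and must then verify the sign of a tangled logarithmic expression by further calculus, much like the computations this paper carries out in its appendix for Lemma \ref{dfdx}; your substitution $1-2q=\tanh v$ buys a genuine simplification because binary convolution becomes scalar multiplication of $\tanh$'s, so the entire convexity statement collapses to the one-line concavity of $v\mapsto\mathrm{atanh}(s\tanh v)$. The only cosmetic point is that the second-derivative computation lives on the open square, so one should append the (immediate) remark that convexity extends to the closed interval by continuity.
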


For the case of $G = \mathbb{R}$ it is worthwhile to note that the function $f_\mathbb{R}$, which can be written explicitly as
\begin{equation}
f_\mathbb{R}(x,y) = \frac{1}{2} \log \left(e^{2x}+e^{2y} \right) 
\end{equation}
satisfies the convexity property described by MGL. In fact $f_\mathbb{R}$ is jointly convex in $(x,y)$. We can however easily check that $f_{\mathbb{Z}_2}$ is not jointly convex in $(x,y)$ since $f_{\mathbb{Z}_2}(x,x) > x = \frac{x}{\log 2} f_{\mathbb{Z}_2}(\log 2, \log 2) + \left(1 - \frac{x}{\log 2}\right) f_{\mathbb{Z}_2}(0,0)$. 

It seems natural to make the following conjecture:
\begin{conj}[Generalized MGL]\label{conjecture}
If $G$ is a finite abelian group, then $f_G(x,y)$ is convex in $x$ for a fixed $y$, and by symmetry convex in $y$ for a fixed $x$.
\end{conj}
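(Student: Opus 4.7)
The plan is to attempt Conjecture~\ref{conjecture} by combining (i) the structure theorem for finite abelian groups, (ii) an inductive analysis for cyclic groups of prime-power order via a subgroup filtration, and (iii) a base case on $\mathbb{Z}_p$ for general primes $p$. I would first reduce to $G = \mathbb{Z}_{p^k}$, then induct on $k$, then handle $k=1$ directly.

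For the reduction, by the fundamental theorem $G \cong \prod_{i=1}^m \mathbb{Z}_{p_i^{k_i}}$. I would try to show that if the conjecture holds for each $\mathbb{Z}_{p_i^{k_i}}$ then it holds for $G$, by exhibiting, for each admissible entropy pair $(x,y)$, an optimal pair $(X,Y)$ in (\ref{fgdefn}) whose projections to the different factors are mutually independent. Such independence would cause both $H$ and the convolution operation to split across factors, reducing the problem to one factor at a time. In the prime-power cyclic case $\mathbb{Z}_{p^k}$, I would use the canonical chain $0 < p^{k-1}\mathbb{Z}_{p^k} < \cdots < p\mathbb{Z}_{p^k} < \mathbb{Z}_{p^k}$, whose successive quotients are all copies of $\mathbb{Z}_p$. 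Decomposing $X$ into its image $\bar{X} \in \mathbb{Z}_{p^k}/p\mathbb{Z}_{p^k} \cong \mathbb{Z}_p$ and its residual on $p\mathbb{Z}_{p^k} \cong \mathbb{Z}_{p^{k-1}}$, the chain rule yields $H(X)=H(\bar{X})+H(X\mid \bar{X})$, and analogously for $Y$ and $X+Y$; the hope is to deduce convexity on $\mathbb{Z}_{p^k}$ from convexity on $\mathbb{Z}_p$ together with the inductive hypothesis on $\mathbb{Z}_{p^{k-1}}$, carefully tracking how entropy is allocated between the quotient and kernel pieces.

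For the base case $\mathbb{Z}_p$, since this group has only the trivial subgroups, I would conjecture that the entropy-minimizing distributions in (\ref{fgdefn}) arise as mixtures of point masses and the uniform distribution on $\mathbb{Z}_p$. I would identify extremizers via Karush--Kuhn--Tucker conditions on the probability simplex, aiming to obtain an explicit parametric description of $f_{\mathbb{Z}_p}(x,y)$ from which convexity in $x$ at fixed $y$ can be checked by direct calculation.

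I expect the principal obstacle to be the reduction step from general $G$ to cyclic prime-power groups: the optimal joint distribution on a product group $G_1 \times G_2$ need not factor as a product of independent components on $G_1$ and $G_2$, because correlations between the projections to the factors can reduce $H(X+Y)$ without altering $H(X)$ or $H(Y)$. Absent such independence, entropy fails to tensorize and the naive reduction to cyclic prime-power groups collapses. A secondary obstacle is the base case for odd primes: without a canonical binary noise model there is no direct analog of the Wyner--Ziv conditioning argument that underlies the classical MGL on $\mathbb{Z}_2$, so even the shape of $f_{\mathbb{Z}_p}$ is not evident a priori. Surmounting either obstacle appears to require genuinely new ideas beyond those sufficient for the $2^n$ case treated in the body of this paper.
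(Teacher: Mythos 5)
The statement you are attacking is labeled a \emph{Conjecture} in the paper and is not proved there in the generality you aim for: the paper establishes it only for abelian groups of order $2^n$ (Theorems \ref{z4}, \ref{z2n}, \ref{abelian} and Corollaries \ref{mgl_z4}, \ref{mgl_z2n}) and explicitly leaves the general case open, reporting only numerical evidence for $\mathbb{Z}_3$ and $\mathbb{Z}_5$. Your proposal does not close that gap either, and you are right to flag both of your obstacles as essential. On the base case: for $p=2$ the argument does not merely use convexity of $f_{\mathbb{Z}_2}$; the induction leans on delicate analytic facts about $f_{\mathbb{Z}_2}$ proved in Section 2 and the Appendix --- strict concavity along rays through the origin (Lemma \ref{concavity}), strict monotonicity of $\partial f/\partial x$ along such rays (Lemma \ref{dfdx}), injectivity of the gradient map in the interior (Lemma \ref{unique}), and the bound $\bigl|\partial f_2/\partial x\bigr|\le 1$ (Claim \ref{dfdx<1}). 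These are what make the two-variable minimization $\min_{u,v}\{f_2(u,v)+f_{2^{n-1}}(x-u,y-v)\}$ tractable. No analogue of any of these is known for $\mathbb{Z}_p$ with $p$ odd, and your KKT guess that extremizers are mixtures of point masses and the uniform distribution is unsupported; for odd $p$ a fixed entropy value no longer determines the distribution up to symmetry, so there is not even an obvious one-parameter formula for $f_{\mathbb{Z}_p}$ analogous to $h(h^{-1}(x)\star h^{-1}(y))$.

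On the reduction: the paper does not reduce a direct sum to its factors by exhibiting an optimizer with independent projections. Instead it proves a one-sided bound (Lemma \ref{lowerbound}): if $f_G$ satisfies the generalized MGL, then $f_{G\oplus H}(x,y)\ge \min_{u,v}\{f_H(u,v)+f_G(x-u,y-v)\}$, obtained by decomposing distributions over cosets of $G$ and using concavity of entropy together with convexity of $f_G$. Crucially, turning this infimal-convolution-type bound into an equality, and then reading off convexity of the result, requires the explicit closed form of $f$ on the factors (the staircase-of-$f_2$ formula of Theorem \ref{z2n}); abstract convexity of $f_H$ and $f_G$ in each variable separately does not imply that the partial minimum is convex in $x$ for fixed $y$, because the standard preservation of convexity under partial minimization needs joint convexity, and $f$ is not jointly convex (as the paper itself notes for $f_{\mathbb{Z}_2}$). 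So even granting your base case and your prime-power induction, the assembly step for $G\cong\prod_i\mathbb{Z}_{p_i^{k_i}}$ with distinct primes would still need new input. Your self-diagnosis is accurate: the proposal is a research plan, not a proof, and the conjecture remains open beyond the order-$2^n$ case.
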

Witsenhausen \cite{witsenhausen1974} and Ahlswede and K{\"o}rner \cite{ahlswede1974} attempted to generalize MGL by defining $g(x)$ to be   the minimum output entropy of a channel subject to a fixed input entropy $x$. They showed that $g(x)$ is convex for all binary input - binary output channels, but that counterexamples to this convexity exist for  other channels. They resolve this issue by providing a version of MGL based on the convex envelope of $g(x)$. Our function $f_G(x,y)$ can be thought of as related to the $g$ function in this line of work, but it differs in the key aspect that the `channel' is not fixed. To connect to this line of work, we can think of the capacity of the channel as being fixed (subject to it being an additive noise channel). We are then looking at the worst possible (in terms of minimum mutual information $I(X+Y;X)$) input and channel distributions, while fixing the input entropy and the channel capacity. 
\smallskip

We have carried out simulations to test Conjecture \ref{conjecture} for $\mathbb{Z}_3$ and $\mathbb{Z}_5$ and it appears to hold for these groups. In this paper we prove Conjecture \ref{conjecture} for all abelian groups $G$ of order $2^n$. In fact we arrive at an explicit description of $f_G$ in terms of $f_{\mathbb{Z}_2}$ for such groups. We also characterize those distributions where the minimum entropy is attained -- these distributions are in this sense analogous to Gaussians in the real case. Our results support the intuition that to minimize the entropy of the sum, the random variables $X$ and $Y$ should be supported on the smallest possible subgroup of $G$ (or cosets of the same) which can support them while satisfying the constraints $H(X) = x$ and $H(Y) = y$.
\smallskip

The structure of the document is as follows. In section $2$ we consider the function $f_{\mathbb{Z}_2}$  and derive certain lemmas regarding the behaviour of $f_{\mathbb{Z}_2}$ along lines passing through the origin. In section $3$, we use the preceding lemmas to explicitly compute $f_{\mathbb{Z}_4}$. This can be thought of as the induction step toward evaluating $f_{\mathbb{Z}_{2^n}}$. In section $4$ we use induction and determine the form of $f_{\mathbb{Z}_{2^n}}$. In section $5$ we show that if $G$ is abelian and of order $2^n$, then $f_G = f_{\mathbb{Z}_{2^n}}$.
Since $f_G$ is explicitly determined for all abelian groups of order $2^n$ we have in effect proved an EPI for such groups. Further, the $f_G$ we find verifies Conjecture \ref{conjecture} and so proves MGL for all abelian groups of order $2^n$. In section $6$ we provide some generalizations of our result that are likely to be of interest. Notably, we study the minimum entropy of a sum of $k\geq 2$ independent $G$-valued random variables of fixed entropies for $G$ of order $2^n$, and give an iterative expression to compute this minimum in terms of $f_G$.

\section{Preliminary Inequalities}

In this section we prove a few key lemmas which are needed to prove our EPI and MGL for $\mathbb{Z}_4$, then for $\mathbb{Z}_{2^n}$, and finally for abelian groups $G$ of order $2^n$. Consider $f:[0, \log 2] \times [0, \log 2] \to [0, \log 2]$ given by
$$f(x,y) = h(h^{-1}(x) \star h^{-1}(y))~.$$
Of course $f = f_{\mathbb{Z}_2}$, where $f_{\mathbb{Z}_2}$ is
defined in equation (\ref{fgdefn}), but it is convenient to drop the subscript in this section.

For our first lemma,  we consider lines of slope 
$0 \le \theta \le \infty$ passing through the origin. The result we wish to prove is:
\begin{lemma} \label{dfdx}
$\frac{\partial f}{\partial x}$  strictly decreases along lines through the origin having slope $\theta$, where $0<\theta<\infty$.
\end{lemma}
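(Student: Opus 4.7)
The plan is to reduce the monotonicity claim to a concrete analytic inequality via a tailored substitution, and then verify it.

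I would first write $p := h^{-1}(x)$, $q := h^{-1}(y) \in [0,\tfrac12]$, so that $f(x,y) = h(p \star q)$ with $p \star q = p + q - 2pq$. Using $h'(t) = \psi(t) := \log\tfrac{1-t}{t}$ and the chain rule,
\[
\frac{\partial f}{\partial x} \;=\; \frac{\psi(p\star q)(1-2q)}{\psi(p)}.
\]
The substitution $u := 1-2p$, $v := 1-2q$ is tailored to the identity $1 - 2(p\star q) = uv$. Since $\psi(t) = 2\,\mathrm{artanh}(1-2t)$, this collapses $\partial f/\partial x$ into the clean form
\[
\frac{\partial f}{\partial x} \;=\; F(u,v) \;:=\; \frac{v\,\mathrm{artanh}(uv)}{\mathrm{artanh}(u)}.
\]
Moreover $x = h\!\left(\tfrac{1-u}{2}\right) = I(u) := \int_u^1 \mathrm{artanh}(t)\,dt$ and similarly $y = I(v)$, so the ray $y = \theta x$ becomes the implicit curve $I(v) = \theta\,I(u)$ in the $(u,v)$-square. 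Since $dx/du = -\mathrm{artanh}(u) < 0$, the lemma is equivalent to showing that $F(u,v(u))$ is strictly increasing in $u$ along this curve.

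Implicit differentiation of $I(v) = \theta I(u)$ yields $dv/du = \theta\,\mathrm{artanh}(u)/\mathrm{artanh}(v)$. Substituting this and $\theta = I(v)/I(u)$ into $dF/du = F_u + (dv/du)F_v$ and clearing positive factors reduces $dF/du > 0$ to the explicit inequality
\[
I(v)\,\mathrm{artanh}(u)\,F_v \;+\; I(u)\,\mathrm{artanh}(v)\,F_u \;>\; 0.
\]
Here MGL for $\mathbb{Z}_2$ (Theorem~\ref{mgl}) enters essentially: a short computation gives $f_{xx} = v\,\Phi(u,v)/\bigl[\mathrm{artanh}(u)^3(1-u^2)(1-u^2v^2)\bigr]$, where $\Phi(u,v) := (1-u^2v^2)\mathrm{artanh}(uv) - v(1-u^2)\mathrm{artanh}(u)$, so $f_{xx} \ge 0$ is equivalent to $\Phi \ge 0$. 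This in turn makes $F_u \le 0$, while $F_v > 0$ is immediate. The lemma thus reduces to a strict quantitative claim: the positive $F_v$-contribution, weighted by $I(v)/\mathrm{artanh}(v)$, dominates the magnitude of the non-positive $F_u$-contribution, weighted by $I(u)/\mathrm{artanh}(u)$, throughout the interior of the square.

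The main obstacle is proving this final inequality, which is asymmetric in $u$ and $v$ and couples the elementary function $\mathrm{artanh}$ with the non-elementary $I$. A natural attack is to use the integral representation $I(w) = \int_w^1 \mathrm{artanh}(t)\,dt$ to rewrite both sides as integrals and reduce matters to a pointwise integrand bound, possibly after integration by parts exploiting $\mathrm{artanh}'(t) = 1/(1-t^2)$. Boundary checks both support the claim and pinpoint the tight case: as $u \to 1^-$ one has $F \to 1$ with $1 - v \sim \theta(1-u)$, so $F$ attains its supremum at the origin, while at the other endpoint $F < 1$, consistent with strict decrease along the ray. If the direct pointwise reduction proves too delicate, a backup is to study $G(u) := F(u,v(u))$ through its first and second derivatives and upgrade a local sign condition to global strict monotonicity using the boundary values.
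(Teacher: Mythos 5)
Your reductions are correct as far as they go: the substitution $u=1-2p$, $v=1-2q$, the identity $1-2(p\star q)=uv$, the resulting formula $\frac{\partial f}{\partial x}=\frac{v\,\mathrm{artanh}(uv)}{\mathrm{artanh}(u)}$, the parametrization $x=I(u)=\int_u^1\mathrm{artanh}(t)\,dt$ of the ray, and the target inequality $I(v)\,\mathrm{artanh}(u)\,F_v+I(u)\,\mathrm{artanh}(v)\,F_u>0$ all check out, as do the sign determinations $F_v>0$ and $F_u\le 0$ (the latter via MGL, which is a legitimate external input here). But the argument stops exactly where the lemma's difficulty begins. The final two-variable inequality, coupling $\mathrm{artanh}$ with its non-elementary antiderivative $I$, \emph{is} the content of the lemma, and you offer only candidate strategies for it (``a natural attack is\dots'', ``a backup is\dots'') with no evidence that any of them closes. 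This is a genuine gap, not a deferred routine verification: in the paper's treatment the analogous residual inequality occupies the entire appendix, requiring a maximization over $k=p\star q\ge p$ to reduce to the slice $q=0$, and then a one-variable inequality $F(p)\le 0$ established by tracking zeros of derivatives up to fifth order via Rolle's theorem, polynomial upper bounds on $\log p$ and $\log(1-p)$, and Sturm sequences. Your own boundary check shows the inequality is asymptotically tight as $(u,v)\to(1,1)$ (the origin in the $(x,y)$-plane), which is exactly the regime where crude pointwise or term-by-term estimates tend to fail, so there is no reason to believe the proposed attacks succeed without substantial further work.

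If you do try to complete it, note a structural choice the paper makes that your setup forgoes. The paper does not treat $\frac{\partial f}{\partial x}$ as a whole: it first peels off the factor $(1-2q)=v$, which is strictly decreasing along the ray, so that only (non-strict) monotonicity of the remaining quotient $\log\!\big(\tfrac{1-p\star q}{p\star q}\big)\big/\log\!\big(\tfrac{1-p}{p}\big)$ must be proved --- this is how strictness is obtained cheaply. It then freezes $k=p\star q$ and shows the relevant expression is monotone in $p$ at fixed $k$, collapsing the problem to one variable before the hard estimates begin. Your formulation keeps both $u$ and $v$ live together with the transcendental weight $I$, which makes your remaining target at least as hard as, and arguably harder than, the paper's.
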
 
\begin{remark}When $\theta = 0$, $\frac{\partial f}{\partial x} $ is constant and is equal to $1$ and when $\theta = +\infty$, $\frac{\partial f}{\partial x}$ is constant and equal to $0$. The above lemma claims that for all other values $\theta \in (0, \infty)$, $\frac{\partial f}{\partial x}(x, \theta x)$ strictly decreases in $x$.
\end{remark}
\begin{proof}
For the proof, refer to Appendix \ref{lemma1}.
\end{proof}
\begin{lemma}\label{concavity}
$f(x,y)$ is concave along lines through the origin. More precisely, $f(x,y)$ is concave along the line $y = \theta x$ when $0 \leq \theta \leq \infty$, and strictly concave along this line for $0 < \theta < \infty$.
\end{lemma}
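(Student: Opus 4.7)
The plan is to reduce strict concavity along a line of slope $\theta \in (0,\infty)$ through the origin to two applications of Lemma \ref{dfdx}, using the symmetry $f(x,y) = f(y,x)$ (inherited from the commutativity of $\star$, since $a \star b = a + b - 2ab$) as the bridge between them. I would set $g(x) := f(x,\theta x)$ on the appropriate interval of $x$ and aim to show that $g'(x)$ is strictly decreasing in $x$, from which strict concavity is immediate.

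By the chain rule, and using $\frac{\partial f}{\partial y}(x,\theta x) = \frac{\partial f}{\partial x}(\theta x, x)$ (which follows from $f(x,y) = f(y,x)$),
\begin{equation}
g'(x) = \frac{\partial f}{\partial x}(x,\theta x) + \theta\, \frac{\partial f}{\partial x}(\theta x, x).
\end{equation}
The first summand is $\frac{\partial f}{\partial x}$ evaluated along the line of slope $\theta$ through the origin, so Lemma \ref{dfdx} applied with that slope shows it is strictly decreasing in $x$. The second summand is $\theta$ times $\frac{\partial f}{\partial x}$ evaluated at the point $(\theta x, x)$, which lies on the line of slope $1/\theta$ through the origin. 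Lemma \ref{dfdx} applied with slope $1/\theta$, parametrised by $s = \theta x$, gives that $\frac{\partial f}{\partial x}(\theta x, x)$ is strictly decreasing in $\theta x$; since $\theta > 0$, it is strictly decreasing in $x$. Both summands therefore strictly decrease in $x$, so $g'$ is strictly decreasing and $g$ is strictly concave, which is exactly the $0 < \theta < \infty$ claim.

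The degenerate slopes are handled directly. For $\theta = 0$, using $a \star 0 = a$ one has $g(x) = f(x,0) = h(h^{-1}(x)) = x$, which is linear, and by the symmetry $f(x,y)=f(y,x)$ the case $\theta = \infty$ (the line $x = 0$) is identical. In these two cases $g$ is concave but not strictly so, consistent with the statement.

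I do not anticipate a genuine obstacle: the lemma is essentially a symmetric repackaging of Lemma \ref{dfdx}. The only delicate point is the bookkeeping that lets one reinterpret $\frac{\partial f}{\partial y}$ along the line $y = \theta x$ as $\frac{\partial f}{\partial x}$ along the reflected line $y = x/\theta$ through the origin, so that both summands of $g'$ fall into the regime covered by Lemma \ref{dfdx}. Once this symmetry substitution is made, the remainder of the argument is just monotone addition.
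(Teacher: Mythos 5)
Your proposal is correct and follows essentially the same route as the paper: both compute $\frac{d}{dx}f(x,\theta x) = \frac{\partial f}{\partial x}(x,\theta x) + \theta\,\frac{\partial f}{\partial y}(x,\theta x)$, invoke Lemma \ref{dfdx} for the first summand, and use the symmetry $f(x,y)=f(y,x)$ to reduce the second summand to Lemma \ref{dfdx} along the reflected line of slope $1/\theta$. Your write-up merely makes explicit the reparametrisation bookkeeping that the paper compresses into the phrase ``by symmetry.''
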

\begin{proof}[Proof of Lemma \ref{concavity}]
When $\theta = 0$ or $\infty$, $f(x,y)$ is linear along the line $y = \theta x$, thus concave. For $0 < \theta < \infty$, by Lemma \ref{dfdx}, we have that $\frac{\partial f}{\partial x}$ strictly decreases along lines through the origin. By symmetry, it follows that $\frac{\partial f}{\partial y}$ also strictly decreases along lines through the origin. Since
\begin{equation}
\frac{df(x,\theta x)}{dx} =  \frac{\partial f}{\partial x} (x, \theta x) + \theta  \frac{\partial f}{\partial y}(x, \theta x)~,
\end{equation}
it is immediate that $\frac{df(x,\theta x)}{dx}$ also strictly decreases in $x$, which means that $f(x,y)$ is strictly concave along the line $y = \theta x$.
\end{proof}

\begin{lemma} \label{unique}
If  $(x_1,y_1), (x_2,y_2) \in (0,\log2)\times(0,\log2)$ and $(\frac{\partial f}{\partial x} ,\frac{\partial f}{\partial y})\bigg|_{(x_1,y_1)} = (\frac{\partial f}{\partial x} ,\frac{\partial f}{\partial y})\bigg|_{(x_2,y_2)}$ then $(x_1,y_1)=(x_2,y_2)$.
\end{lemma}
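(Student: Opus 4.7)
Assume for contradiction that two distinct points $(x_1,y_1), (x_2,y_2) \in (0,\log 2)^2$ satisfy $\nabla f(x_1,y_1) = \nabla f(x_2,y_2)$. The plan is to combine Lemma \ref{dfdx} with coordinate-wise monotonicities of the partial derivatives derived from Wyner--Ziv's MGL (Theorem \ref{mgl}), then exhaust all configurations by case analysis. Theorem \ref{mgl} gives $\partial^2 f/\partial x^2 \ge 0$ and $\partial^2 f/\partial y^2 \ge 0$ on the open square. Evaluating the strict inequality of Lemma \ref{dfdx} at $(x_0,y_0)$ with $\theta = y_0/x_0 > 0$ yields
\[
0 > \frac{\partial^2 f}{\partial x^2}(x_0,y_0) + \theta\,\frac{\partial^2 f}{\partial x\,\partial y}(x_0,y_0),
\]
which combined with $\partial^2 f/\partial x^2 \ge 0$ forces $\partial^2 f/(\partial x\,\partial y) < 0$ strictly. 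Hence $\partial f/\partial x$ is strictly decreasing in $y$ and, by the symmetry $f(x,y)=f(y,x)$, $\partial f/\partial y$ is strictly decreasing in $x$.

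Next I dispatch the easy configurations. If $x_1 = x_2$, the strict monotonicity of $\partial f/\partial x$ in $y$ forces $y_1 = y_2$; if $y_1 = y_2$, the strict monotonicity of $\partial f/\partial y$ in $x$ forces $x_1 = x_2$. So we may assume $x_1 \ne x_2$ and $y_1 \ne y_2$, and WLOG $x_1 < x_2$. If $y_1 > y_2$, then
\[
\frac{\partial f}{\partial x}(x_2,y_2) \ge \frac{\partial f}{\partial x}(x_1,y_2) > \frac{\partial f}{\partial x}(x_1,y_1)
\]
(by non-decreasingness in $x$ from Theorem \ref{mgl} and strict decrease in $y$), contradicting equality.

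The main obstacle is the remaining case $x_1 < x_2$ and $y_1 < y_2$, where the coordinate-wise monotonicities alone are consistent with gradient equality. Let $\theta_i := y_i/x_i$. If $\theta_1 = \theta_2$, the points share a common ray through the origin and Lemma \ref{dfdx} yields $\frac{\partial f}{\partial x}(x_2,y_2) < \frac{\partial f}{\partial x}(x_1,y_1)$, contradiction. If $\theta_1 < \theta_2$, introduce the auxiliary point $P := (x_2, \theta_1 x_2)$, which lies on the ray $y = \theta_1 x$ (the same ray as $(x_1,y_1)$) with $x$-coordinate $x_2 > x_1$. By Lemma \ref{dfdx}, $\frac{\partial f}{\partial x}(P) < \frac{\partial f}{\partial x}(x_1,y_1)$. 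The $y$-coordinate of $P$ is $\theta_1 x_2 < \theta_2 x_2 = y_2$, so strict monotonicity of $\partial f/\partial x$ in $y$ at $x = x_2$ gives $\frac{\partial f}{\partial x}(x_2,y_2) < \frac{\partial f}{\partial x}(P)$. Chaining yields $\frac{\partial f}{\partial x}(x_2,y_2) < \frac{\partial f}{\partial x}(x_1,y_1)$, the desired contradiction. The case $\theta_1 > \theta_2$ is symmetric: apply the same detour to $\partial f/\partial y$ using a point on the ray $y = \theta_1 x$ at $y$-coordinate $y_2$, together with strict decrease of $\partial f/\partial y$ in $x$ and (by symmetry of $f$) strict decrease of $\partial f/\partial y$ along rays through the origin from Lemma \ref{dfdx}.
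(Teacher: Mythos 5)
Your overall strategy coincides with the paper's: the same case split (equal coordinates, opposite orderings, and the main case $x_1<x_2$, $y_1<y_2$ resolved by comparing the slopes $\theta_i=y_i/x_i$ and detouring through an auxiliary point on the ray through $(x_1,y_1)$), using Lemma \ref{dfdx} along the ray together with monotonicity of the relevant partial derivative in the other coordinate. In the subcases $\theta_1=\theta_2$, $\theta_1<\theta_2$, $\theta_1>\theta_2$ your chaining is exactly the paper's, and there only the \emph{non-strict} decrease of $\partial f/\partial x$ in $y$ (resp.\ of $\partial f/\partial y$ in $x$) is needed, since the strict inequality is already supplied by Lemma \ref{dfdx}.

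The gap is in your derivation of the \emph{strict} decrease of $\partial f/\partial x$ in $y$ for fixed $x$. Lemma \ref{dfdx} asserts that $t\mapsto\frac{\partial f}{\partial x}(t,\theta t)$ is strictly decreasing; for a differentiable function this yields only
\[
\frac{\partial^2 f}{\partial x^2}+\theta\,\frac{\partial^2 f}{\partial x\,\partial y}\le 0,
\]
not the strict inequality you write (a strictly decreasing smooth function can have vanishing derivative at some points). Combined with $\partial^2 f/\partial x^2\ge 0$ this gives only $\partial^2 f/(\partial x\,\partial y)\le 0$, i.e.\ non-strict decrease in $y$. That is not enough for the two configurations in which strictness in $y$ is the sole source of the contradiction: $x_1=x_2$ with $y_1\ne y_2$, and $x_1<x_2$ with $y_1>y_2$ (there MGL gives only $\frac{\partial f}{\partial x}(x_2,y_2)\ge\frac{\partial f}{\partial x}(x_1,y_2)$, so without strictness of the second step you get $\ge$ throughout and no contradiction). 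The paper closes exactly this point by direct computation rather than by second derivatives: writing $x=h(p)$, $y=h(q)$ with $0\le p,q\le\tfrac12$,
\[
\frac{\partial f}{\partial x}=(1-2q)\log\left(\frac{1-p\star q}{p\star q}\right)\cdot\frac{1}{\log\left(\frac{1-p}{p}\right)}~,
\]
and for fixed $p\in(0,\tfrac12)$ both positive factors $(1-2q)$ and $\log\left(\frac{1-p\star q}{p\star q}\right)$ strictly decrease in $q$, so $\partial f/\partial x$ strictly decreases in $y$. Substituting this computation for your second-derivative argument repairs the proof; the rest of your case analysis then goes through as in the paper.
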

\begin{remark}
The above lemma says that in the interior of the unit square, the pair of partial derivatives at a point uniquely determine the point. That this fails on the boundary is seen from the fact that for any point of the form $(x,0)$ the pair of partial derivatives evaluates to $(1,0)$ and for every point of the form $(0,y)$ it is $(0,1)$.
\end{remark}
\begin{proof}[Proof of Lemma \ref{unique}]
Without loss of generality, assume $x_1 \leq x_2$. We consider two cases: $y_1 \geq y_2$ or $y_1 < y_2$.\\ Suppose $y_1 \geq y_2$, in this case we have
\begin{equation}
\frac{\partial f}{\partial x}\bigg|_{(x_1,y_1)} \leq \frac{\partial f}{\partial x}\bigg|_{(x_2,y_1)} \leq  \frac{\partial f}{\partial x}\bigg|_{(x_2,y_2)}~.
\end{equation}
The first inequality follows from Mrs. Gerber's Lemma. To see why the second inequality is true, note that
\begin{align}
\frac{\partial f}{\partial x} =& \frac{\partial f}{\partial p}  \frac{\partial p}{\partial x}\\
=&(1-2q)\log\left(\frac{1-p \star q}{p \star q}\right) \times \frac{1}{\log\left(\frac{1-p}{p}\right)}~,
\end{align}
where $x = h(p)$ and $y = h(q)$ with $0 \le p, q \le \frac{1}{2}$. Thus, for a fixed $p$, as $q$ increases $\frac{\partial f}{\partial x}$ strictly decreases, i.e. for fixed $x$, as $y$ increases $\frac{\partial f}{\partial x}$ strictly decreases. Note also that at least one of the two inequalities is strict as $(x_1,y_1) \neq (x_2,y_2)$. Thus
\begin{equation}\label{ineq1}
\frac{\partial f}{\partial x}\bigg|_{(x_1,y_1)} < \frac{\partial f}{\partial x}\bigg|_{(x_2,y_2)}~.
\end{equation}
It remains to consider the case $y_1 < y_2$. We can also assume $x_1 < x_2$, since $x_1 = x_2$ combined with $y_1 < y_2$ gives 
$$\frac{\partial f}{\partial x}\bigg|_{(x_1,y_1)} > \frac{\partial f}{\partial x}\bigg|_{(x_2,y_2)}~.$$
The only remaining case is thus $(x_1,y_1) < (x_2,y_2)$. Consider the line passing through the origin and $(x_1,y_1)$. We again break this up into two cases: either $y_2 \geq x_2 \frac{y_1}{x_1}$ or  $y_2  \leq x_2 \frac{y_1}{x_1}$.  \\
If  $y_2 \geq x_2 \frac{y_1}{x_1}$, 
\begin{equation}
\frac{\partial f}{\partial x}\bigg|_{(x_1,y_1)} > \frac{\partial f}{\partial x}\bigg|_{(x_2, x_2 \frac{y_1}{x_1})} \geq  \frac{\partial f}{\partial x}\bigg|_{(x_2,y_2)}~,
\end{equation}
where the first inequality follows from Lemma \ref{dfdx}, and the second follows from $\frac{\partial f}{\partial x}$ decreasing for a fixed $x$ and an increasing $y$.\\
If   $y_2  \leq x_2 \frac{y_1}{x_1}$, we have
\begin{equation}
\frac{\partial f}{\partial y}\bigg|_{(x_1,y_1)} > \frac{\partial f}{\partial y}\bigg|_{(y_2\frac{x_1}{y_1}, y_2)} \geq  \frac{\partial f}{\partial y}\bigg|_{(x_2,y_2)}~,
\end{equation}
where the first inequality follows from Lemma \ref{dfdx} and the fact that $y_2\frac{x_1}{y_1} > x_1$. The second inequality follows from the symmetric analogue of $\frac{\partial f}{\partial x}$ decreasing for a fixed $x$ and an increasing $y$, which is that $\frac{\partial f}{\partial y}$ decreases for a fixed $y$ and an increasing $x$.. This completes the proof of Lemma \ref{unique}.
\end{proof}

\section{An EPI and MGL for $\mathbb{Z}_4$-valued random variables}

Analogous to the framework for Shannon's EPI in the case of continuous random variables, we consider two independent random variables $X$ and $Y$ taking values in the cyclic group $\mathbb{Z}_4$ and seek to determine the minimum possible entropy of the random variable $X+Y$, where $+$ stands for the group addition, and we a priori fix the entropy of $X$ and that of $Y$.

Formally, we define $f_4: [0,\log4] \times [0,\log4] \to [0,\log4]$ by
\begin{equation}\label{f4}
f_4(x,y) = \min_{H(X) = x, H(Y)=y} H(X+Y)~.
\end{equation}
Thus $f_4 = f_{\mathbb{Z}_4}$, where $f_{\mathbb{Z}_4}$ is defined in equation (\ref{fgdefn}). In this section we 
will also use the notation $f_2$ for $f_{\mathbb{Z}_2}$,
so we have $f_2: [0,\log2] \times [0,\log2] \to [0,\log2]$
given by
\begin{equation}\label{f2}
f_2(x,y) = \min_{H(X) = x, H(Y)=y} H(X+Y)~.
\end{equation}

We will prove:
\begin{theorem} \label{z4}
$$f_4(x,y) = \begin{cases} x, & \mbox{if } \log2 \leq x \leq \log4, 0 \leq y \leq \log2~, \\ y, & \mbox{if } 0 \leq x \leq \log2, \log2 \leq y \leq \log4~,\\ f_2(x,y), & \mbox{if } 0 \leq x,y \leq \log2~,\\f_2(x-\log2,y-\log2) + \log2, & \mbox{ if } \log2 \leq x,y \leq \log4~. \end{cases}$$
\end{theorem}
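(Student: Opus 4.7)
The plan is to combine explicit extremal constructions with a binary place-value converse.

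Achievability: In case 3 ($x,y\le\log 2$), support $X,Y$ on the subgroup $\{0,2\}\cong\mathbb Z_2$, giving $H(X+Y)=f_2(x,y)$ directly from the definition. In case 4 ($x,y\ge \log 2$), write $X=A_X+2B_X$ where $A_X\sim\mathrm{Ber}(\alpha)$ with $h(\alpha)=x-\log 2$ and $B_X$ is uniform on $\{0,1\}$ independent of $A_X$, and similarly for $Y$ with parameter $\beta$. A direct convolution shows $X+Y$ has pmf $((1-s)/2,s/2,(1-s)/2,s/2)$ with $s=\alpha\star\beta$, so $H(X+Y)=h(s)+\log 2=f_2(x-\log 2,y-\log 2)+\log 2$. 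Cases 1 and 2 reuse the case-4 construction for the variable with entropy at least $\log 2$ and put the other variable on $\{0,2\}$; convolution shows $X+Y$ has the same pmf as the variable of larger entropy.

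The converse in cases 1 and 2 is immediate from $H(X+Y)\ge H(X+Y\mid Y)=H(X)$ and symmetry. For cases 3 and 4, I would decompose in binary: $X=V_X+2U_X$, $Y=V_Y+2U_Y$ with each component in $\{0,1\}$. Then binary addition with carry expresses $X+Y\bmod 4$ as having low bit $W=V_X\oplus V_Y$ and high bit $C=U_X\oplus U_Y\oplus V_X V_Y$. Setting $p=P(V_X=1)$, $q=P(V_Y=1)$, $\tilde a_u=H(U_X\mid V_X=u)$, and $\tilde b_v=H(U_Y\mid V_Y=v)$, the chain rule plus the observation that $C$ conditional on $(V_X,V_Y)=(u,v)$ is an XOR of two independent Bernoullis (plus a constant $uv$ that is irrelevant to entropy) give
\[
H(X+Y)\ge H(W)+H(C\mid V_X,V_Y)=h(p\star q)+\sum_{u,v}p_u q_v\, f_2(\tilde a_u,\tilde b_v).
\]
Applying Theorem \ref{mgl} (the original MGL for $\mathbb Z_2$) twice collapses the double sum to $f_2(x-h(p),y-h(q))$, and using $h(p\star q)=f_2(h(p),h(q))$ yields $H(X+Y)\ge g(r,s):=f_2(r,s)+f_2(x-r,y-s)$ with $r=h(p)$, $s=h(q)$.

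The main obstacle is the two-variable minimization of $g$ over the rectangle $[0,x]\times[0,y]$ in case 3 and $[x-\log 2,\log 2]\times[y-\log 2,\log 2]$ in case 4. Setting $\nabla g=0$ gives $\nabla f_2(r,s)=\nabla f_2(x-r,y-s)$, which by Lemma \ref{unique} forces the unique interior critical point $(r,s)=(x/2,y/2)$. I would then show this critical point is a saddle: Theorem \ref{mgl} gives $\partial_r^2 f_2,\partial_s^2 f_2>0$ in the interior, while Lemma \ref{concavity} applied to the line through the origin of slope $y/x$ yields $\partial_r^2 f_2+2(y/x)\partial_{rs}^2 f_2+(y/x)^2\partial_s^2 f_2<0$ at $(x/2,y/2)$; hence the Hessian of $f_2$ (and so the Hessian of $g$, which is twice it at this critical point) is indefinite. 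Since $g$ is continuous on a compact rectangle with a unique interior critical point that is not a local minimum, the minimum lies on the boundary. Each edge reduces to a one-variable problem: for case 4, $g(\log 2,s)=\log 2+f_2(x-\log 2,y-s)$ is monotone in $s$ because $f_2$ is nondecreasing in each argument; for case 3, $g(0,s)=s+f_2(x,y-s)$ is monotone in $s$ because the explicit formula for $\partial f_2/\partial y$ from the proof of Lemma \ref{unique}, combined with $p\star q\ge q$, yields the pointwise bound $\partial f_2/\partial y\le 1$. Evaluating at the appropriate corners gives $\min g=f_2(x,y)$ in case 3 and $\min g=f_2(x-\log 2,y-\log 2)+\log 2$ in case 4, matching achievability.
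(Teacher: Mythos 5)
Your proposal is correct and follows essentially the same route as the paper: the bit-decomposition with carry is the paper's coset decomposition of $\mathbb{Z}_4$ over $\{0,2\}$ in different notation, the double application of Theorem \ref{mgl} reproduces the paper's chain of inequalities leading to $g(u,v)=f_2(u,v)+f_2(x-u,y-v)$, and the minimization uses the same ingredients (Lemma \ref{unique} to pin the interior critical point at $(x/2,y/2)$, Lemma \ref{concavity} along the line of slope $y/x$ to exclude it, and the bound $|\partial f_2/\partial x|\le 1$ on the boundary). The only differences are cosmetic, e.g.\ deriving $\partial f_2/\partial y\le 1$ directly from $p\star q\ge q$ rather than via convexity plus a limit.
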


The following corollary is immediate from Theorem \ref{z4} and Mrs. Gerber's Lemma.
\begin{corollary}\label{mgl_z4}
$f_4(x,y)$ is convex in $x$ for a fixed $y$, and by symmetry convex in $y$ for a fixed $x$.
\end{corollary}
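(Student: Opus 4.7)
The plan is to fix $y \in [0, \log 4]$ and show that $x \mapsto f_4(x, y)$ is convex on $[0, \log 4]$; the analogous statement for fixed $x$ then follows from the symmetry $f_4(x, y) = f_4(y, x)$, which is immediate from the definition (\ref{f4}) since $H(X+Y) = H(Y+X)$.

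First I would split on whether $y \in [0, \log 2]$ or $y \in [\log 2, \log 4]$, so that Theorem \ref{z4} presents $f_4(\cdot, y)$ as a two-piece function with break at $x = \log 2$. On each of the two sub-intervals $[0, \log 2]$ and $[\log 2, \log 4]$, the relevant piece is either affine (the cases where $f_4$ equals $x$ or equals $y$) or equal to $f_2$ applied to an affinely shifted first coordinate (the cases $f_2(x,y)$ and $f_2(x-\log 2, y-\log 2) + \log 2$); in the latter cases, convexity in $x$ is exactly Mrs. Gerber's Lemma (Theorem \ref{mgl}). So convexity on each piece is automatic.

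The only real issue, and what I expect to be the main obstacle, is convexity at the junction $x = \log 2$, which amounts to showing that the left one-sided slope of $f_4(\cdot, y)$ at $\log 2$ is at most the right one. I would handle this uniformly using the operational lower bound $f_2(a, b) \ge \max(a, b)$, which follows from $H(X+Y) \ge H(X)$ and $H(X+Y) \ge H(Y)$ for independent $X, Y$. For $y \in [0, \log 2]$ the left piece is $f_2(x, y)$ and the right piece is $x$ of slope $1$; since $f_2(x, y) \ge x$ and $f_2(\log 2, y) = \log 2$, the difference quotient $(f_2(\log 2, y) - f_2(x, y))/(\log 2 - x)$ is at most $1$ for all $x < \log 2$, so the left slope at $\log 2$ is at most $1$. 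For $y \in [\log 2, \log 4]$ the left piece is the constant $y$ of slope $0$ and the right piece is $f_2(x - \log 2, y - \log 2) + \log 2$; since $f_2(\cdot, y - \log 2) \ge y - \log 2$ with equality at the first argument $0$, the right slope at $\log 2$ is at least $0$. In both cases the junction slope condition holds, so $f_4(\cdot, y)$ is convex on $[0, \log 4]$, which finishes the argument.
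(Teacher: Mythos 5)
Your proposal is correct and follows the same overall strategy as the paper's proof: use Theorem \ref{z4} to write the section of $f_4$ as a two-piece function with break at $\log 2$, observe that each piece is convex (by Mrs.\ Gerber's Lemma or because it is affine), and then check that the one-sided slopes match up correctly at the junction. The only genuine difference is in how the junction condition is verified. The paper invokes Claim \ref{dfdx<1}, i.e.\ the derivative bound $\bigl|\frac{\partial f_2}{\partial x}\bigr| \leq 1$ (and its nonnegativity), which is proved by an explicit calculus computation but is in any case already needed for the proof of Theorem \ref{z4} itself. You instead bound the one-sided difference quotients directly using the elementary operational inequality $f_2(a,b) \geq \max(a,b)$ together with the exact boundary values $f_2(\log 2, y) = \log 2$ and $f_2(0,b) = b$, and then use convexity of each piece to pass from difference quotients to one-sided derivatives. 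Your route is slightly more self-contained for this particular corollary, since it needs nothing about $f_2$ beyond MGL, the trivial lower bound, and two boundary evaluations; the paper's route is shorter given that Claim \ref{dfdx<1} is already on the table. Both arguments are complete.
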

\begin{proof}[Proof of Theorem \ref{z4}]
We deal with the initial two cases first. Without loss of generality, assume $ \log2 \leq x \leq \log4, 0 \leq y \leq \log2$. Note that we have the trivial lower bound
\begin{equation}
f_4(x,y) \geq x
\end{equation}
obtained from $H(X+Y) \geq H(X)$. 
Thus if we can find distributions for $X$ and $Y$ such that this lower bound is achieved, then it implies $f_4(x,y) = x$. This is exactly what we do. Since $y \leq \log2$, let $\beta = h^{-1}(y)$ and consider the distribution of $Y$
$$p_Y := (\beta, 0, 1-\beta, 0)~.$$
Also,  as $ \log2 \leq x$, we can find $\alpha$  such that $\log2 + H(2\alpha,1- 2\alpha) = x$. Using this $\alpha$, define
$$p_X := (\alpha, 1-\alpha, \alpha, 1-\alpha)~.$$
The distribution of $X+Y$ is given by the cyclic convolution $p_X \circledast_4 p_Y$, which in this case is $p_X$ again. Thus $H(X+Y) = H(X)$, and $f_4(x,y) = x$.\\
\smallskip

Before starting on the other two cases, we derive some preliminary inequalities. We'll think of distributions on $\mathbb{Z}_4$ as a combination of distributions supported on $\{0,2\}$ and $\{1,3\}$. For a random variable $X$, we write its distribution $p_X$ as
$$p_X = \alpha(p_0,0,p_2,0) + (1-\alpha)(0,p_1,0,p_3) = (\alpha p_0, (1-\alpha)p_1, \alpha p_2, (1-\alpha)p_3)~,$$
where $1 \geq p_0,p_1,p_2,p_3, \alpha \geq 0$ and
\begin{align*}
p_0 + p_2 &= 1~,\\
p_1+p_3 &= 1~.
 \end{align*}
 Similary we write 
 $$p_Y = \beta(q_0,0,q_2,0) + (1-\beta)(0,q_1,0,q_3) = (\beta q_0, (1-\beta)q_1, \beta q_2, (1-\beta)q_3)~,$$
 where $1 \geq q_0,q_1,q_2,q_3, \beta \geq 0$ and
\begin{align*}
q_0 + q_2 &= 1~,\\
q_1+q_3 &= 1~.
 \end{align*}
Let $X+Y = Z$. The distribution of $Z$ is given by
\begin{align}
p_Z &= p_X \circledast_4 p_Y\\
&= \bigg(\alpha(p_0,0,p_2,0) + (1-\alpha)(0,p_1,0,p_3)\bigg) \circledast_4 \bigg(\beta(q_0,0,q_2,0) + (1-\beta)(0,q_1,0,q_3)\bigg)\\
&= \bigg(\alpha\beta(p_0,0,p_2,0) \circledast_4 (q_0,0,q_2,0) +  (1-\alpha)(1-\beta)(0,p_1,0,p_3)\circledast_4(0,q_1,0,q_3)\bigg)\\
&+  \bigg(\alpha(1-\beta)(p_0,0,p_2,0) \circledast_4 (0,q_1,0,q_3) +  (1-\alpha)\beta(0,p_1,0,p_3)\circledast_4(q_0,0,q_2,0)\bigg)~. \nonumber
\end{align}
Thus
\begin{align}
H(p_Z) &= h(\alpha \star \beta) \label{h}\\
&+ (1-\alpha\star\beta)H\left(\frac{\alpha\beta}{1-\alpha\star\beta }(p_0,p_2) \circledast_2 (q_0,q_2) + \frac{(1-\alpha)(1-\beta)}{1-\alpha\star\beta} (p_1,p_3)\circledast_2(q_1,q_3)\right) \nonumber\\
&+ (\alpha\star\beta)H\left(\frac{\alpha(1-\beta)}{\alpha\star\beta }(p_0,p_2) \circledast_2 (q_1,q_3) + \frac{(1-\alpha)\beta}{\alpha\star\beta} (p_1,p_3)\circledast_2(q_0,q_2)\right) \nonumber\\
&\geq h(\alpha\star\beta) + \alpha\beta H\bigg((p_0,p_2) \circledast_2 (q_0,q_2)\bigg) + (1-\alpha)(1-\beta)H\bigg((p_1,p_3)\circledast_2(q_1,q_3)\bigg) \label{concaveh}\\ 
 &+ \alpha(1-\beta)H\bigg((p_0,p_2) \circledast_2 (q_1,q_3)\bigg) + (1-\alpha)\beta H\bigg((p_1,p_3)\circledast_2(q_0,q_2)\bigg) \nonumber \\
 &= f_2\bigg(h(\alpha), h(\beta)\bigg) + \alpha\beta f_2\bigg(H(p_0,p_2),H(q_0,q_2)\bigg) +  (1-\alpha)(1-\beta)f_2\bigg(H(p_1,p_3),H(q_1,q_3)\bigg) \label{deff2}\\ 
 &+ \alpha(1-\beta)f_2\bigg(H(p_0,p_2), H(q_1,q_3)\bigg) + (1-\alpha)\beta f_2\bigg(H(p_1,p_3),H(q_0,q_2)\bigg) \nonumber\\
 &\geq  f_2\bigg(h(\alpha), h(\beta)\bigg) + \alpha f_2\bigg(H(p_0,p_2), \beta H(q_0,q_2) + (1-\beta)H(q_1,q_3) \bigg) \label{convexf21}\\ 
 &+ (1-\alpha)f_2\bigg(H(p_1,p_3), \beta H(q_0,q_2) + (1-\beta)H(q_1,q_3) \bigg) \nonumber\\
 &\geq f_2\bigg(h(\alpha), h(\beta)\bigg) + f_2\bigg(\alpha H(p_0,p_2) + (1-\alpha)H(p_1,p_3),  \beta H(q_0,q_2) + (1-\beta)H(q_1,q_3)\bigg) \label{convexf22}\\
 &= f_2\bigg(h(\alpha), h(\beta)\bigg) + f_2\bigg(H(X)-h(\alpha), H(Y)-h(\beta)\bigg)~.
 \end{align}
 In this sequence of inequalities, (\ref{h}) is a simple expansion of entropy, (\ref{concaveh}) is got via concavity of entropy, (\ref{deff2}) is simply a restatement in terms of $f_2$, (\ref{convexf21}) and (\ref{convexf22}) are obtained using convexity in Mrs. Gerber's Lemma, and the last equality follows from the chain rule of 
entropy.
\smallskip

Coming back to the remaining two cases of Theorem \ref{z4}, we can write down the following inequalities as consequences of the above inequalities:\\
For $0 \leq x,y \leq \log2$
\begin{equation}\label{c3}
f_4(x,y) \geq \min_{u,v} f_2(u,v) + f_2(x-u,y-v)~,
\end{equation}
where $0 \leq u \leq x$ and $0 \leq v \leq y$.\\
 For $\log2 \leq x,y \leq \log4$,
\begin{equation}\label{c4}
f_4(x,y) \geq \min_{u,v} f_2(u,v) + f_2(x-u,y-v)~,
\end{equation}
where $x-\log2 \leq u \leq 1$ and $y - \log2 \leq v \leq 1$. \\

\smallskip

Consider the third case, $0 \leq x,y \leq \log2$. We'll show that the minimum in (\ref{c3}) is when $u, v$ are both equal to $0$ (or by symmetry $u=x$, $v=y$) and the value of the minimum is $f_2(x,y)$.\\ 

We'll first prove a small claim.

\begin{claim}\label{dfdx<1}
$\bigg|\frac{\partial f_2}{\partial x}\bigg | \leq 1$, with strict inequality if $(x,y)$ lies in the interior of the square $[0,\log2]\times[0,\log2]$.
\end{claim}  

\remark{By symmetry, $\bigg|\frac{\partial f_2}{\partial y}\bigg | \leq 1$, with strict inequality in the interior.}

\begin{proof}[Proof of Claim \ref{dfdx<1}] 
We note that when $y=0$, $f_2(x,0) = x$ which gives $\bigg|\frac{\partial f_2}{\partial x}\bigg |_{(x,0)} = 1$. Now fix $y > 0$. By Mrs. Gerber's Lemma, we know that $f_2(x,y)$ is convex is $x$ for a fixed $y$. This means that $\frac{\partial f_2}{\partial x}$ increases with $x$ and is maximum when $x=1$. Writing $x = h(p)$ and $y = h(q)$ with $0 \le p, q \le \frac{1}{2}$, we have $f_2(x,y) = h(p \star q)$, and
\begin{equation}
\frac{\partial f_2}{\partial x} =(1-2q)\log\left(\frac{1-p \star q}{p \star q}\right) \times \frac{1}{\log\left(\frac{1-p}{p}\right)}~.
\end{equation}
Taking the limit as $x \to 1$ is the same as taking the limit as $p \to \frac{1}{2}$. Using L'H\^{o}pital's rule, we get
\begin{equation}
\lim_{p \to \frac{1}{2}}(1-2q)\log\left(\frac{1-p \star q}{p \star q}\right) \times \frac{1}{\log\left(\frac{1-p}{p}\right)} = \lim_{p \to \frac{1}{2}} (1-2q)^2\frac{p(1-p)}{(p\star q)(1- p\star q)}~.
\end{equation} 
This is easily seen to be $(1-2q)^2$ which has magnitude $< 1$ for $ q \neq 0$. This establishes the claim.
\end{proof}

Now for  $0 \leq x,y \leq \log2$, consider the function $g: [0,x] \times [0,y] \to \mathbb{R}$ given by
$$g(u,v) := f_2(u,v) + f_2(x-u,y-v)~.$$
As per (\ref{c3}), we want to minimise $g$ over its domain. We can think of the domain as a rectangle with corner points $(0,0)$ and $(x,y)$ in $\mathbb{R}^2$. Suppose the minimum is achieved strictly in the interior of this rectangle, at a point say $(u^\star, v^\star)$, then we must have
\begin{align}
\frac{\partial g}{\partial u}\bigg|_{(u^\star, v^\star)} &= 0~,\\
\frac{\partial g}{\partial v}\bigg|_{(u^\star, v^\star)} &= 0~,
\end{align}
which implies
\begin{align}
\frac{\partial f_2}{\partial u}\bigg|_{(u^\star, v^\star)} = \frac{\partial f_2}{\partial u}\bigg|_{(x-u^\star,y- v^\star)}~,\\
\frac{\partial f_2}{\partial v}\bigg|_{(u^\star, v^\star)} = \frac{\partial f_2}{\partial v}\bigg|_{(x-u^\star,y- v^\star)}~.
\end{align}
By Lemma \ref{unique}, we infer that
\begin{equation}
(u^\star, v^\star) = (x-u^\star,y- v^\star)~.
\end{equation}
Thus $(u^\star, v^\star) = \left(\frac{x}{2},\frac{y}{2}\right)$. Now let $\theta = \frac{y}{x}$, and consider the function $g$ over the line with slope $\theta$ passing through the origin. By Lemma \ref{concavity}, we know that $f_2(t, \theta t)$ is concave, and thus so is $f_2(x -t, y-\theta t)$ and so is their addition $g(t,\theta t)$. Thus, the minimum value of $g(t, \theta t)$ must be attained at the extreme points and not in the interior. Note that since $(u^\star, v^\star)$ lies on this line, it cannot be the global minimum of $g$ on its domain. This leads us to conclude that the global minimum of $g$ is not attained anywhere in the interior of the rectangle and therefore must be attained on the boundary.\\
\smallskip

Now consider a point $(u_0,0)$ along the boundary. Taking the partial derivative with respect to $u$, 
\begin{equation}
\frac{\partial g}{\partial u}\bigg|_{(u_0,0)} = \frac{\partial f_2}{\partial u}\bigg|_{(u_0,0)} - \frac{\partial f_2}{\partial u}\bigg|_{(x-u_0,y)}  > 0~,\\
\end{equation}
where the inequality follows from $\frac{\partial f_2}{\partial u}\bigg|_{(u_0,0)}  = 1$ and $ \frac{\partial f_2}{\partial u}\bigg|_{(x-u_0,y)} < 1$ by Claim \ref{dfdx<1}. Similarly, for a boundary point of the form $(0,v_0)$ we can see that $\frac{\partial g}{\partial v}\bigg|_{(0,v_0)} > 0$. Hence, we conclude that the minimum value on the boundary is attained when $u=0,v=0$ and the value  is $f_2(x,y)$. Thus inequality (\ref{c3}) reduces to
\begin{equation}
f_4(x,y) \geq f_2(x,y)~.
\end{equation}
Clearly, $f_2(x,y)$ is achieved if the random variables are supported on the $\{0,2\}$, and therefore we get
\begin{equation}
f_4(x,y) = f_2(x,y) \mbox{ for } 0 \leq x,y \leq \log2~.
\end{equation}
This completes the proof for the third case.\\
\smallskip

Moving on to the last case, define $\tilde{u} = u - (x - \log2)$ and $\tilde{v} = v - (y-\log2)$. Rewriting (\ref{c4}), 
\begin{equation}
f_4(x,y) \geq \min_{\tilde{u},\tilde{v}} f_2(\log2 - \tilde{u}, \log2 - \tilde{v}) + f_2(\tilde{u}+(x - \log2),\tilde{v}+(y - \log2))~,
\end{equation}
where $$0 \leq \tilde{u} \leq 2\log2 - x~,$$  $$0 \leq \tilde{v} \leq 2\log2-y~.$$
Just as in the previous case, define 
$$g(\tilde{u},\tilde{v}) := f_2(\log2 - \tilde{u}, \log2 - \tilde{v}) + f_2(\tilde{u}+(x - \log2),\tilde{v}+(y - \log2))~.$$
The domain of $(\tilde{u},\tilde{v})$ can be thought of as a rectangle in $\mathbb{R}^2$ with corner points $(0,0), (2\log2 - x, 2\log2-y)$. Suppose the minimum value is attained at $(\tilde{u}^\star, \tilde{v}^\star)$ lying in the interior of this rectangular domain. In such a case we must have
\begin{align}
\frac{\partial g}{\partial \tilde{u}}\bigg|_{(\tilde{u}^\star, \tilde{v}^\star)} &= 0~,\\
\frac{\partial g}{\partial \tilde{v}}\bigg|_{(\tilde{u}^\star, \tilde{v}^\star)} &= 0~,
\end{align} 
which implies
\begin{align}
\frac{\partial f_2}{\partial \tilde{u}}\bigg|_{(\log2 - \tilde{u^\star}, \log2 - \tilde{v}^\star)} = \frac{\partial f_2}{\partial \tilde{u}}\bigg|_{(\tilde{u}^\star+(x - \log2),\tilde{v}^\star+(y - \log2))}~,\\
\frac{\partial f_2}{\partial \tilde{v}}\bigg|_{(\log2 - \tilde{u^\star}, \log2 - \tilde{v}^\star)} = \frac{\partial f_2}{\partial \tilde{v}}\bigg|_{(\tilde{u}^\star+(x - \log2),\tilde{v}^\star+(y - \log2))}~.
\end{align}
By Lemma \ref{unique}, we infer that
\begin{equation}
(\log2 - \tilde{u^\star}, \log2 - \tilde{v}^\star) = (\tilde{u}^\star+(x - \log2),\tilde{v}^\star+(y - \log2))~,
\end{equation}
which implies $(\tilde{u}^\star, \tilde{v}^\star) = \left(\log2 - \frac{x}{2}, \log2 - \frac{y}{2}\right)$. Now if $(\tilde{u}^\star, \tilde{v}^\star)$ were indeed the global minimum, then we must have for every $\theta$
\begin{equation}
\frac{d^2 }{dt^2}g(\tilde{u}^\star + t, \tilde{v}^\star+\theta t)\bigg|_{t=0} \geq 0~.
\end{equation} 
Note that
\begin{equation}
g(\tilde{u}^\star + t, \tilde{v}^\star+\theta t) = f_2\left(\frac{x}{2}-t, \frac{y}{2}-\theta t\right) +  f_2\left(\frac{x}{2}+t, \frac{y}{2}+\theta t\right)~. 
\end{equation}
Now choose $\theta = \frac{y}{x}$. By Lemma \ref{concavity}, we have
\begin{equation}
\frac{d^2 }{dt^2}g(\tilde{u}^\star + t, \tilde{v}^\star+\theta t)\bigg|_{t=0} = 2\frac{d^2 }{dt^2}f_2\left(\frac{x}{2}-t, \frac{y}{2}-\theta t\right)\bigg|_{t=0} < 0~.
\end{equation}
This means that  $(\tilde{u}^\star, \tilde{v}^\star)$ cannot be the global minimum, and the global minimum therefore must lie on the boundary. For a boundary point of the form $(\tilde{u}_0, 0)$ 
\begin{equation}
\frac{\partial g}{\partial \tilde{u}}\bigg|_{(\tilde{u}_0,0)} = -\frac{\partial f_2}{\partial \tilde{u}}\bigg|_{(\log2 - \tilde{u}_0,\log{2})} + \frac{\partial f_2}{\partial \tilde{u}}\bigg|_{(\tilde{u}_0 + x - \log2, y - \log2)}  > 0~,\\
\end{equation}
where the inequality follows from $\frac{\partial f_2}{\partial \tilde{u}}\bigg|_{(\log2 - \tilde{u}_0,\log{2})} = 0$ and $\frac{\partial f_2}{\partial \tilde{u}}\bigg|_{(\tilde{u}_0 + x - \log2, y - \log2)} > 0$. Similarly for a boundary point of the form $(0, \tilde{v}_0)$ we have
\begin{equation}
\frac{\partial g}{\partial \tilde{v}}\bigg|_{(0, \tilde{v}_0)} = -\frac{\partial f_2}{\partial \tilde{v}}\bigg|_{(\log2, \log2 - \tilde{v}_0)} + \frac{\partial f_2}{\partial \tilde{v}}\bigg|_{(x - \log2, \tilde{v}_0 + y - \log2)}  > 0~.\\
\end{equation}
In both cases we see that the minimum is attained when $(\tilde{u}, \tilde{v}) = (0,0)$ and the value of the minimum is $$f_2(\log2, \log2) + f_2(x - \log2, y - \log2) = \log2 + f_2(x-\log2, y-\log2)~.$$
Thus inequality (\ref{c4}) reduces to
\begin{equation}\label{c4bound}
f_4(x,y) \geq  \log2 + f_2(x-\log2, y-\log2)~.
\end{equation}
Since $\log2 \leq x,y \leq \log4$, we can find distributions $p_X = (\frac{\alpha}{2}, \frac{1-\alpha}{2}, \frac{\alpha}{2}, \frac{1-\alpha}{2})$ and $p_Y = (\frac{\beta}{2}, \frac{1-\beta}{2}, \frac{\beta}{2}, \frac{1-\beta}{2})$ such that $H(p_X) = x$ and $H(p_Y) = y$, with 
\begin{align}
x &= h(\alpha) + \log2~,\\
y &= h(\beta) + \log2~,\\
H(X+Y) = H(p_X \circledast_4 p_Y) &= H\left(\frac{1-\alpha \star \beta}{2}, \frac{\alpha \star \beta}{2}, \frac{1-\alpha \star \beta}{2}, \frac{\alpha \star \beta}{2}  \right)\\
&= h(\alpha \star \beta) + \log2 \nonumber \\
&= f_2(h(\alpha), h(\beta)) + \log2 \nonumber\\
&=  f_2(x-\log2, y-\log2) + \log2~.
\end{align}
Thus the bound in (\ref{c4bound}) is achieved, and we conclude that
\begin{equation}
f_4(x,y) =  \log2 + f_2(x-\log2, y-\log2)~.
\end{equation}
This completes the proof of Theorem \ref{z4}.
\end{proof}

\begin{proof}[Proof of Corollary \ref{mgl_z4}]
Consider the function $f_x(y) = f(x,y)$. We look at two cases, $0 \leq x \leq \log2$ and $\log2 \leq x \leq \log4$. In the first case, 
$$f_x(y) = \begin{cases} f_2(x,y), & \mbox{if } 0 \leq y \leq \log2~,\\ y & \mbox{if } \log2 \leq y \leq \log4~. \end{cases}$$
Now $f_2(x,y)$ for a fixed $x$ and $0 \leq y \leq \log2$ is convex by MGL, and for values of $y$ beyond $\log2$ the function $f_x$ is linear with slope $1$. By Claim \ref{dfdx<1}, attaching this linear part to a convex function will not affect the convexity since the slope of the linear part ($=1$) is greater than or equal to the derivative of the convex part. Similarly for the second case,
$$f_x(y) = \begin{cases} x, & \mbox{if } 0 \leq y \leq \log2~,\\ f_2(x-\log2,y-\log2)+\log2 & \mbox{if } \log2 \leq y \leq \log4~. \end{cases}$$
This too, has a linear part with slope $0$ attached before a convex part with slope greater equal $0$ everywhere, thus the overall function continues being convex.
\end{proof}

\section{An EPI and MGL for $\mathbb{Z}_{2^n}$ valued random variables}
Analogous to the $\mathbb{Z}_4$ case, we consider two independent random variables $X$ and $Y$ taking values in the cyclic group $\mathbb{Z}_{2^n}$ and seek to determine the minimum possible entropy of the random variable $X+Y$ where $+$ stands for the group addition, where we a priori fix the value of the $H(X)$ and
the value of $H(Y)$.

Formally, we define $f_{2^n}: [0,n\log2] \times [0,n\log2] \to [0,n\log2]$ by
\begin{equation}\label{f2n}
f_{2^n}(x,y) = \min_{H(X) = x, H(Y)=y} H(X+Y)~.
\end{equation}
Thus $f_{2^n} = f_{\mathbb{Z}_{2^n}}$, where 
$f_{\mathbb{Z}_{2^n}}$ is defined in equation (\ref{fgdefn}).

$f_{2^n}$ is completely determined in the following theorem:
\begin{theorem}\label{z2n}
$$f_{2^n}(x,y) = \begin{cases} f_{2}(x-k\log2, y-k\log2)+k\log2, & \mbox{if } k\log2 \leq x,y \leq (k+1)\log2~,\\  \max(x,y) & \mbox{otherwise. } \end{cases}$$
\end{theorem}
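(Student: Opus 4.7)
The plan is to prove Theorem \ref{z2n} by induction on $n$, with Theorem \ref{z4} and Corollary \ref{mgl_z4} supplying the base case $n=2$. The induction must simultaneously carry MGL for $\mathbb{Z}_{2^{n-1}}$ (that is, convexity of $f_{2^{n-1}}$ in each argument), since that convexity is indispensable in the lower bound argument, exactly as MGL for $\mathbb{Z}_2$ was used in the proof of Theorem \ref{z4}. Thus the joint inductive hypothesis is the explicit formula for $f_{2^{n-1}}$ together with the corresponding MGL; MGL for $\mathbb{Z}_{2^n}$ is then deduced from the derived formula for $f_{2^n}$ by the same piecewise slope-matching argument used in Corollary \ref{mgl_z4}, with Claim \ref{dfdx<1} ensuring that the slopes of linear and $f_2$-like pieces connect monotonically across levels.

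For the lower bound I would mimic the parity decomposition. The even elements form a subgroup $2\mathbb{Z}_{2^n}\cong\mathbb{Z}_{2^{n-1}}$ of $\mathbb{Z}_{2^n}$, and the odd elements its nontrivial coset. Writing $p_X = \alpha p_X^{(0)} + (1-\alpha)p_X^{(1)}$ and $p_Y = \beta p_Y^{(0)} + (1-\beta)p_Y^{(1)}$, where $(0)$ and $(1)$ denote parity, and identifying each coset with $\mathbb{Z}_{2^{n-1}}$, we run the chain of inequalities (\ref{h})--(\ref{convexf22}) with the inductive formula and MGL for $f_{2^{n-1}}$ playing the roles of the $\mathbb{Z}_4$ argument's uses of $f_2$, obtaining
\begin{equation*}
H(X+Y) \geq f_2\bigl(h(\alpha), h(\beta)\bigr) + f_{2^{n-1}}\bigl(H(X)-h(\alpha), H(Y)-h(\beta)\bigr).
\end{equation*}
Setting $u = h(\alpha)$ and $v = h(\beta)$ in $[0,\log 2]$ yields
\begin{equation*}
f_{2^n}(x,y) \geq \min_{(u,v)\in R(x,y)} f_2(u,v) + f_{2^{n-1}}(x-u, y-v),
\end{equation*}
where $R(x,y)$ is the feasible rectangle imposed by $0\leq u \leq \min(x,\log 2)$, $u \geq \max(0, x-(n-1)\log 2)$, and the symmetric conditions on $v$.

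The case analysis of this minimization is the technical core. In the same-level case $k\log 2 \leq x,y \leq (k+1)\log 2$ with $k=0$, the inductive formula gives $f_{2^{n-1}} = f_2$ on the relevant region, so the argument is word-for-word that of the third case in Theorem \ref{z4}, using Lemmas \ref{dfdx}, \ref{concavity}, \ref{unique} and Claim \ref{dfdx<1} to force the minimum to a corner, yielding $f_2(x,y)$. For $k\geq 1$, the substitution $u=\log 2-u'$, $v=\log 2-v'$ together with the inductive formula applied on level $k-1$ rewrites the objective as
\begin{equation*}
f_2(\log 2 - u', \log 2 - v') + f_2\bigl((x-k\log 2) + u', (y-k\log 2) + v'\bigr) + (k-1)\log 2,
\end{equation*}
which is precisely the fourth-case optimization of Theorem \ref{z4} plus the constant $(k-1)\log 2$; its minimum is $k\log 2 + f_2(x-k\log 2, y-k\log 2)$. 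In the mixed-level case $x\in [k\log 2, (k+1)\log 2]$, $y\in[j\log 2,(j+1)\log 2]$ with $j<k$, the trivial bound $H(X+Y) \geq H(X) = x = \max(x,y)$ already matches the claim. Achievability uses, for the same-level case, the iterated construction of the fourth case of Theorem \ref{z4} on the subgroup of order $2^{k+1}$; for the mixed-level case, $Y$ supported on a subgroup $H$ of order $2^k$ and $X$ uniform within each coset of $H$ with coset-entropy $x-k\log 2$, which makes $X+Y$ identically distributed to $X$ and therefore yields $H(X+Y) = x$. I expect the main obstacle to be the same-level $k\geq 1$ reduction, specifically verifying that the substitution does map the feasible rectangle precisely onto the fourth-case domain of Theorem \ref{z4} once the boundary constraint $u \geq x-(n-1)\log 2$ (active when $x$ approaches $n\log 2$) is carefully incorporated.
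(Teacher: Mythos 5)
Your overall strategy coincides with the paper's: induction on $n$ carrying the formula and MGL for $f_{2^{n-1}}$ together, the even/odd coset decomposition, the same chain of inequalities leading to $f_{2^n}(x,y)\geq\min_{u,v}f_2(u,v)+f_{2^{n-1}}(x-u,y-v)$, and reduction of the resulting minimizations to the two optimizations already solved in the $\mathbb{Z}_4$ proof. The mixed-level case, the $k=0$ case, and the achievability constructions are all handled exactly as in the paper.

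There is, however, a genuine gap in your treatment of the same-level case with $1\leq k\leq n-2$. There the feasible set for $(u,v)$ is the full square $[0,\log 2]^2$, and as $(u,v)$ ranges over it the point $(x-u,y-v)$ sweeps across \emph{three} qualitatively different regions for the piecewise-defined $f_{2^{n-1}}$: the level-$(k-1)$ diagonal box (when $u\geq x-k\log2$ and $v\geq y-k\log2$), the level-$k$ diagonal box (when $u\leq x-k\log2$ and $v\leq y-k\log2$), and the off-diagonal boxes, where $f_{2^{n-1}}(x-u,y-v)=\max(x-u,y-v)$. Your substitution $u=\log2-u'$, $v=\log2-v'$ with ``the inductive formula applied on level $k-1$'' is valid only on the first of these sub-rectangles; on the rest of the square the objective is \emph{not} the fourth-case optimization of Theorem \ref{z4}, and $f_2\bigl((x-k\log2)+u',(y-k\log2)+v'\bigr)$ is not even defined once $(x-k\log2)+u'>\log2$. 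Two further steps are needed, both present in the paper's proof of Claim \ref{case2}: (i) a monotonicity argument showing that a minimizer cannot lie in an off-diagonal box (if, say, $x-u^\star$ sits one level below $y-v^\star$, decreasing $u^\star$ until $x-u^\star$ reaches the box boundary strictly decreases $f_2(u,v^\star)$ while leaving $f_{2^{n-1}}(x-u,y-v^\star)=y-v^\star$ unchanged); and (ii) a separate minimization over the level-$k$ diagonal sub-rectangle, which reduces to the \emph{third}-case optimization of the $\mathbb{Z}_4$ proof (minimum at the corner $(0,0)$) and happens to yield the same value $k\log2+f_2(x-k\log2,y-k\log2)$. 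Finally, the obstacle you flag --- the constraint $u\geq x-(n-1)\log2$ --- is a red herring: it is binding only in the top case $(n-1)\log2\leq x,y\leq n\log2$, where it actually \emph{helps} by confining $(u,v)$ to exactly the fourth-case domain; the real difficulty is the piecewise structure just described in the middle levels, where that constraint is vacuous.
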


The following corollary is an immediate consequence of Theorem
\ref{z2n} and Mrs. Gerber's Lemma:
\begin{corollary}\label{mgl_z2n}
$f_{2^n}(x,y)$ is convex in $x$ for a fixed $y$, and by symmetry is convex in $y$ for a fixed $x$.
\end{corollary}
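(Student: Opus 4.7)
The plan is to fix $y$ and decompose $x \mapsto f_{2^n}(x,y)$ into at most three consecutive pieces on $[0, n\log 2]$, each convex by inspection, and then verify that continuity and the correct one-sided slope inequality hold at the two junction points. This is the natural generalization of the $\mathbb{Z}_4$ argument in Corollary \ref{mgl_z4}, now with up to $n$ strips to consider.

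First I would locate the strip containing $y$: say $y \in [k\log 2, (k+1)\log 2]$ for some $0 \leq k \leq n-1$. Applying Theorem \ref{z2n} for this fixed $y$ as $x$ sweeps $[0, n\log 2]$ yields three pieces: (i) a constant piece equal to $y$ on $[0, k\log 2]$, from the ``otherwise'' max case with $y \geq x$; (ii) the convex-by-MGL piece $f_2(x - k\log 2, y - k\log 2) + k\log 2$ on $[k\log 2, (k+1)\log 2]$, convex in $x$ by Theorem \ref{mgl}; and (iii) a linear piece $x$ of slope $1$ on $[(k+1)\log 2, n\log 2]$, from the max case with $x \geq y$. Continuity at the two junctions is immediate from $f_2(0,z) = z$ and $f_2(\log 2, z) = \log 2$. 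When $k=0$ or $k=n-1$, one of the outer pieces collapses to a single point and the corresponding junction disappears, which only simplifies the argument.

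The remaining step is to check that the left derivative does not exceed the right derivative at each junction. At $x = k\log 2$ both sides are $0$: the constant piece contributes left derivative $0$, and for the right derivative one uses the explicit formula for $\partial f_2/\partial x$ derived in the proof of Lemma \ref{unique} to see that the factor $1/\log((1-p)/p)$ drives it to $0$ as $p \to 0^+$. At $x = (k+1)\log 2$, the right piece has slope $1$, while the left derivative evaluates, via the L'H\^{o}pital computation already carried out inside the proof of Claim \ref{dfdx<1}, to $(1 - 2h^{-1}(y - k\log 2))^2 \leq 1$. Hence left $\leq$ right at both junctions, so the piecewise function is convex. Convexity in $y$ for fixed $x$ follows by symmetry.

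I do not expect any serious obstacle: every tool needed is already in place. The only delicate check is the slope bound at $x = (k+1)\log 2$, which is essentially the content of Claim \ref{dfdx<1}, positioned earlier in the paper for precisely this purpose.
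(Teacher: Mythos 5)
Your proof is correct and follows essentially the same route as the paper: both decompose the function along one variable (you fix $y$, the paper fixes $x$, which is equivalent by the symmetry of $f_{2^n}$) into a constant piece, the $f_2$-piece convex by Mrs.\ Gerber's Lemma, and a linear piece, and both use Claim~\ref{dfdx<1} to verify the one-sided slope inequality at the junctions. You merely spell out the junction checks that the paper compresses into ``immediately seen to be convex.''
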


\begin{proof}[Proof of Theorem \ref{z2n}]
We deal with the second case first. Assume $$ k_1\log2 \leq x \leq (k_1+1)\log2~,$$ $$k_2\log2  \leq y \leq (k_2+1)\log2~,$$ where $k_1 \neq k_2$. Without loss of generality, assume $k_1 > k_2$. Note that we have the trivial lower bound
\begin{equation}
f_{2^n}(x,y) \geq x
\end{equation}
obtained from $H(X+Y) \geq H(X)$. 
Thus if we can find distributions for $X$ and $Y$ such that this lower bound is achieved, then this would imply that $f_{2^n}(x,y) = x$. This is exactly what we do. Since $y \leq k_1\log2$, let the distribution of $Y$ be any distribution supported on the subgroup $\mathbb{Z}_{2^{k_1}}$ which is contained in $\mathbb{Z}_{2^n}$ such that $H(p_Y) = y$. Here as usual the subgroup $\mathbb{Z}_{2^k}$ in $\mathbb{Z}_{2^n}$ is the set $\{0,2^{n-k}, 2. 2^{n-k},3.2^{n-k},..., (2^k - 1)2^{n-k}\}$. Also,  as $k_1 \log2 \leq x \leq (k_1+1)\log2$, we can find a distribution of $X$ which is supported on the subgroup $\mathbb{Z}_{2^{k_1+1}}$ and is constant over the cosets  $\mathbb{Z}_{2^{k_1+1}}/ \mathbb{Z}_{2^{k_1}}$.  The distribution of $X+Y$ is given by the cyclic convolution $p_X \circledast_{2^n} p_Y$ which in this case is $p_X$ again. Thus $H(X+Y) = H(X)$, and $f_{2^n}(x,y) = x$. \\
\smallskip

Before considering the remaining case, we derive some preliminary inequalities. We'll use induction, assume that the theorem and the corollary is true for $2^{n-1}$ and prove it for $2^n$. We'll think of distributions on $\mathbb{Z}_{2^n}$ as a combination of distributions supported on the cosets of $\mathbb{Z}_{2^{n-1}}$ in $\mathbb{Z}_{2^n}$. For a random variable $X$, we can write 
$$p_X = \alpha p_E + (1-\alpha)p_O~,$$
where $1 \geq  \alpha \geq 0$, with $p_E$ supported only on the subgroup $\mathbb{Z}_{2^{n-1}}$ of $\mathbb{Z}_{2^n}$ and $p_O$ supported on the remaining half of $\mathbb{Z}_{2^n}$.
Similary we write 
 $$p_Y = \beta q_E + (1-\beta) q_O~, $$
 where $1 \geq \beta \geq 0$.

Let $X+Y = Z$. The distribution of $Z$ is given by
\begin{align}
p_Z &= p_X \circledast_{2^n} p_Y\\
&= \bigg(\alpha p_E+ (1-\alpha) p_O \bigg) \circledast_{2^n} \bigg(\beta q_E + (1-\beta)q_O\bigg)\\
&= \bigg(\alpha\beta p_E \circledast_{2^n} q_E +  (1-\alpha)(1-\beta)p_O \circledast_{2^n} q_O\bigg)\\
&+  \bigg(\alpha(1-\beta)p_E \circledast_{2^n} q_O +  (1-\alpha)\beta p_O \circledast_{2^n} q_E \bigg)~. \nonumber
\end{align}
Thus
\begin{align}
H(p_Z) &= h(\alpha \star \beta) \label{hn}\\
&+ (1-\alpha\star\beta)H\left(\frac{\alpha\beta}{1-\alpha\star\beta } p_E \circledast_{2^{n-1}} q_E + \frac{(1-\alpha)(1-\beta)}{1-\alpha\star\beta} p_O \circledast_{2^{n-1}}q_O \right) \nonumber\\
&+ (\alpha\star\beta)H\left(\frac{\alpha(1-\beta)}{\alpha\star\beta }p_E \circledast_{2^{n-1}} q_O + \frac{(1-\alpha)\beta}{\alpha\star\beta} p_O \circledast_{2^{n-1}} q_E \right) \nonumber\\
&\geq h(\alpha\star\beta) + \alpha\beta H\bigg(p_E \circledast_{2^{n-1}} q_E \bigg) + (1-\alpha)(1-\beta)H\bigg(p_O \circledast_{2^{n-1}} q_O\bigg) \label{concavehn}\\ 
 &+ \alpha(1-\beta)H\bigg(p_E \circledast_{2^{n-1}} q_O \bigg) + (1-\alpha)\beta H\bigg(p_O \circledast_{2^{n-1}} q_E \bigg) \nonumber \\
 &\geq f_2\bigg(h(\alpha), h(\beta)\bigg) + \alpha\beta f_{2^{n-1}}\bigg(H(p_E),H(q_E)\bigg) +  (1-\alpha)(1-\beta)f_{2^{n-1}}\bigg(H(p_O),H(q_O)\bigg) \label{deff2n}\\ 
 &+ \alpha(1-\beta)f_{2^{n-1}}\bigg(H(p_E), H(q_O)\bigg) + (1-\alpha)\beta f_{2^{n-1}}\bigg(H(p_O),H(q_E)\bigg) \nonumber\\
 &\geq  f_2\bigg(h(\alpha), h(\beta)\bigg) + \alpha f_{2^{n-1}}\bigg(H(p_E), \beta H(q_E) + (1-\beta)H(q_O) \bigg) \label{convexf21n}\\ 
 &+ (1-\alpha)f_{2^{n-1}}\bigg(H(p_O), \beta H(q_E) + (1-\beta)H(q_O) \bigg) \nonumber\\
 &\geq f_2\bigg(h(\alpha), h(\beta)\bigg) + f_{2^{n-1}}\bigg(\alpha H(p_E) + (1-\alpha)H(p_O),  \beta H(q_E) + (1-\beta)H(q_O)\bigg) \label{convexf22n}\\
 &= f_2\bigg(h(\alpha), h(\beta)\bigg) + f_{2^{n-1}}\bigg(H(X)-h(\alpha), H(Y)-h(\beta)\bigg)~.
 \end{align}
 In this sequence of inequalities, (\ref{hn}) is a simple expansion of entropy, (\ref{concavehn}) is got via concavity of entropy, (\ref{deff2n}) is using the definition of $f$, (\ref{convexf21n}) and (\ref{convexf22n}) are obtained using Mrs. Gerber's Lemma for $2^{n-1}$ (by induction hypothesis), and the last equality follows from the chain rule of entropy.
\smallskip

Coming back to the remaining cases of Theorem \ref{z2n}, we can write down the following inequalities as consequences of the preceding sequence of inequalities:\\
For $0 \leq x,y \leq \log2$, 
\begin{equation}\label{c3n}
f_{2^n}(x,y) \geq \min_{u,v} f_2(u,v) + f_{2^{n-1}}(x-u,y-v)~,
\end{equation}
where $0 \leq u \leq x$ and $0 \leq v \leq y$.\\
 For $(n-1)\log2 \leq x,y \leq n\log2$,
\begin{equation}\label{c4n}
f_{2^n}(x,y) \geq \min_{u,v} f_2(u,v) + f_{2^{n-1}}(x-u,y-v)~,
\end{equation}
where $x-(n-1)\log2 \leq u \leq \log2$ and $y - (n-1)\log2 \leq v \leq \log2$. \\
For $(k-1)\log2 \leq x,y \leq k\log2$, $k \neq 1, n$,
\begin{equation}\label{c5n}
f_{2^n}(x,y) \geq \min_{u,v} f_2(u,v) + f_{2^{n-1}}(x-u,y-v)~,
\end{equation}
where $0 \leq u \leq \log2$ and $0  \leq v \leq \log2$. \\

\smallskip

We'll consider the above three cases separately and prove the theorem in each of those three cases.
\begin{claim}\label{case1}
For $0 \leq x,y \leq \log2$ we have
$$f_{2^n}(x,y) = f_2(x,y)~.$$
\end{claim}
\begin{proof}[Proof of Claim \ref{case1}]
From equation (\ref{c3n}) we have 
$$f_{2^n}(x,y) \geq \min_{u,v} f_2(u,v) + f_{2^{n-1}}(x-u,y-v)~,$$
where the maximum is over $0 \leq u,v \leq \log 2$.
However, by our induction hypothesis 
$$f_2(u,v) + f_{2^{n-1}}(x-u,y-v) = f_2(u,v) + f_2(x-u,y-v)~,$$
and from the proof of the $\mathbb{Z}_4$ case, the value of this minimum is $f_2(x,y)$. Since this value is clearly achieved, we have $f_{2^n}(x,y) = f_2(x,y)$.
\end{proof}

\begin{claim}\label{case2}
For $(k-1)\log2 \leq x,y \leq k\log2$, $k \neq 1, n$, we have
$$f_{2^n}(x,y) = (k-1)\log2 + f_2(x - (k-1)\log2, y-(k-1)\log2)~.$$
\end{claim}
\begin{proof}[Proof of Claim \ref{case2}]
From (\ref{c4n}) we have 
$$f_{2^n}(x,y) \geq \min_{u,v} f_2(u,v) + f_{2^{n-1}}(x-u,y-v)~.$$
We first note that if the minimum of the above expression occurs at $(u^\star, v^\star)$ then we must have
\begin{equation}\label{case2.1}
(k-1)\log2 \leq x-u^\star,y-v^\star \leq k\log2~,
\end{equation}
or
\begin{equation}\label{case2.2}
(k-2)\log2 \leq x-u^\star,y-v^\star \leq (k-1)\log2~.
\end{equation}
To see this, suppose that w.l.o.g. we have $$(k-2)\log2 <  x-u^\star < (k-1)\log2~,$$  $$(k-1)\log2 < y-v^\star < k\log2~.$$
Let $\tilde{u}$ be such that $x - \tilde{u} = (k-1)\log2$. We have $\tilde{u} < u^\star$. By induction hypothesis, 
$$f_{2^{n-1}}(x - u^\star, y - v^\star) = f_{2^{n-1}}(x - \tilde{u}, y - v^\star) = y - v^\star~.$$ But since $\tilde{u} < u^\star$ we also have
$$f_2(\tilde{u}, v^\star) <  f_2(u^\star, v^\star)~.$$ 
This leads us to conclude that
$$f_2(\tilde{u}, v^\star) +  f_{2^{n-1}}(x - \tilde{u}, y - v^\star)  < f_2(u^\star, v^\star) + f_{2^{n-1}}(x - u^\star, y - v^\star)~,$$
which contradicts $(u^\star, v^\star)$ being the minimizer.
Now suppose we minimize over all pairs $u,v$ such that (\ref{case2.1}) holds. By induction hypothesis, 
\begin{align*}
\min_{u,v} f_2(u,v) + f_{2^{n-1}}(x-u,y-v) &= \min_{u,v} f_2(u,v) + f_2\bigg(x-u - (k-1)\log2,y-v - (k-1)\log2\bigg)\\
 & ~~~~~+ (k-1)\log2\\
&= (k-1)\log2 +  \min_{u,v} f_2(u,v) + f_2\bigg(x- (k-1)\log2 - u,y- (k-1)\log2 - v\bigg)~.
\end{align*}
From the proof of the $\mathbb{Z}_4$ case, we have that the minimum of the above expression is when $u, v = 0$ which gives us 
\begin{equation}\label{case2.1.0}
\min_{u,v} f_2(u,v) + f_{2^{n-1}}(x-u,y-v) = (k-1)\log2 + f_2\bigg(x-(k-1)\log2, y-(k-1)\log2\bigg)~,
\end{equation}
where it is implicit that the minimization is taken over all pairs $u,v$ such that (\ref{case2.1}) holds.
\smallskip

Now we minimize over all pairs $u,v$ such that (\ref{case2.2}) holds. By induction hypothesis, 
\begin{align*}
\min_{u,v} f_2(u,v) + f_{2^{n-1}}(x-u,y-v) &= \min_{u,v} f_2(u,v) + f_2\bigg(x-u - (k-2)\log2,y-v - (k-2)\log2\bigg)\\
 & ~~~~~+ (k-2)\log2\\
&= (k-2)\log2 +  \min_{u,v} f_2(u,v) + f_2\bigg(x- (k-2)\log2 - u,y- (k-2)\log2 - v\bigg)~.
\end{align*}
Again, by the proof of the $\mathbb{Z}_4$ case we have that the minimum value of the above expression is attained when $u, v = \log2$. Substituting we get
\begin{align}
\min_{u,v} f_2(u,v) + f_{2^{n-1}}(x-u,y-v)  &= (k-2)\log2 + f_2(\log2, \log2) + f_2\bigg(x - (k-1)\log2, y-(k-1)\log2\bigg) \nonumber \\ 
&= (k-1)\log2 +  f_2\bigg(x - (k-1)\log2, y-(k-1)\log2\bigg)~. \label{case2.2.0}
\end{align}
Comparing (\ref{case2.1.0}) and (\ref{case2.2.0}) we arrive at 
\begin{align*}
f_{2^n}(x,y) &\geq \min_{u,v} f_2(u,v) + f_{2^{n-1}}(x-u,y-v)\\
 &= (k-1)\log2 +  f_2\bigg(x - (k-1)\log2, y-(k-1)\log2\bigg) \\&
 = f_{2^{n-1}}(x,y)~.
 \end{align*}
Since $f_{2^{n-1}}(x,y)$ is achieved by supporting $X$ and $Y$ on $\mathbb{Z}_{2^{n-1}}$ we have $f_{2^n}(x,y) = (k-1)\log2 +  f_2\bigg(x - (k-1)\log2, y-(k-1)\log2\bigg)$, proving the claim.
\end{proof}

\begin{claim}\label{case3}
For $(n-1)\log2 \leq x,y \leq n\log2$,
$$f_{2^n}(x,y) = (n-1)\log2 + f_2\bigg(x-(n-1)\log2, y-(n-1)\log2\bigg)~.$$
\end{claim}
\begin{proof}[Proof of Claim \ref{case3}]
We have 
$$f_{2^n}(x,y) \geq \min_{u,v} f_2(u,v) + f_{2^{n-1}}(x-u,y-v)~,$$
where $x-(n-1)\log2 \leq u \leq \log2$ and $y - (n-1)\log2 \leq v \leq \log2$. Using our induction hypothesis,  
\begin{align*}
\min_{u,v} f_2(u,v) + f_{2^{n-1}}(x-u,y-v) &=  \min_{u,v} f_2(u,v) + f_2\bigg(x-u-(n-2)\log2,y-v-(n-2)\log2\bigg)\\
&~~~~~~ + (n-2)\log2\\
&= (n-2)\log2 + f_2(\log2, \log2) + f_2\bigg(x - (n-1)\log2, y-(n-1)\log2\bigg)\\
&= (n-1)\log2 + f_2\bigg(x - (n-1)\log2, y-(n-1)\log2\bigg)~,
\end{align*}
where the second equality follows from the proof on $\mathbb{Z}_4$, where we had that the minimum of such an expression is attained when $u,v = \log2$. To show that equality is attained, consider $p_X$ such that it takes a constant value $\frac{\alpha}{2^{n-1}}$ on the subgroup of size $2^{n-1}$ of $\mathbb{Z}_{2^n}$ and a constant value $\frac{1-\alpha}{2^{n-1}}$ on the remaining half of $\mathbb{Z}_{2^n}$ such that $H(p_X) = x$. Similarly choose $\beta$ such that $p_Y$ takes a constant value $\frac{\beta}{2^{n-1}}$ on the subgroup of size $2^{n-1}$ of $\mathbb{Z}_{2^n}$ and a constant value $\frac{1-\beta}{2^{n-1}}$ on the remaining half of $\mathbb{Z}_{2^n}$, such that $H(p_Y) = y$. We have
$$x = H(p_X) = (n-1)\log2 + h(\alpha),$$
$$y = H(p_Y) = (n-1)\log2 + h(\beta).$$
It is easy to verify that $$H(X+Y) = h(\alpha \star \beta) + (n-1)\log2 = f_2(x-(n-1)\log2,y-(n-1)\log2)+(n-1)\log2.$$
This completes the proof of the claim.
\end{proof}

The above claims complete the proof of Theorem \ref{z2n}.
\end{proof}

\begin{proof}[Proof of Corollary \ref{mgl_z2n}]
Consider $k\log2 \leq x \leq (k+1)\log2$ and the function $f_x(y) = f_{2^n}(x,y)$. We have
$$f_x(y) = \begin{cases} x, & \mbox{if } 0 \leq y \leq k\log2~,\\ f_2(x-k\log2,y-k\log2)+k\log2 & \mbox{if } k\log2 \leq y \leq (k+1)\log2~,\\y & \mbox{if } (k+1)\log2 \leq y~. \end{cases}$$
This is immediately seen to be convex using MGL and Claim \ref{dfdx<1}.
\end{proof}

\section{An EPI and MGL for abelian groups of order $2^n$}
We first prove a lemma.

\begin{lemma}\label{lowerbound}
Consider two abelian groups $G$ and $H$ with the corresponding $f_G$ and $f_H$ functions, such that $f_G$ satisfies the generalized MGL. Then the following lower bound holds for $f_{G\oplus H}$:
\begin{equation}
f_{G\oplus H}(x,y) \geq \min_{u, v} f_H(u,v) + f_G(x-u,y-v)~,
\end{equation}
where $u,v$ vary over 
\begin{align}
&\max(0, x - \log |G|) \leq u \leq \min(\log |H|,x)~,\\
&\max(0, y - \log |G|) \leq v \leq \min(\log |H|,y)~.
\end{align}
\end{lemma}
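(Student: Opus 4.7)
The plan is to replicate the inequality chain that produced (\ref{c3n})--(\ref{c5n}) in the proof of Theorem \ref{z2n}, now with $G$ playing the role of the ``inner subgroup'' $\mathbb{Z}_{2^{n-1}}$ and $H$ playing the role of the ``quotient'' $\mathbb{Z}_2$. Let $X, Y$ be independent $(G\oplus H)$-valued random variables with $H(X) = x$ and $H(Y) = y$. Write $X = (X_G, X_H)$ and $Y = (Y_G, Y_H)$, and decompose $p_X$ along the cosets of $G$ in $G\oplus H$: set $\alpha_h = P(X_H = h)$ for $h \in H$, and let $p^X_h$ denote the conditional distribution of $X_G$ given $X_H = h$, viewed as a distribution on $G$. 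Define $\beta_h$ and $p^Y_h$ analogously.

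Because the group operation on $G\oplus H$ is componentwise, writing $Z = X + Y$ gives, for every $s \in H$,
$$p_Z(g, s) = \sum_{s' + s'' = s} \alpha_{s'} \beta_{s''}\,(p^X_{s'} \circledast_G p^Y_{s''})(g),$$
so the $H$-marginal of $Z$ is $\gamma := \alpha \circledast_H \beta$ and, within each coset, the conditional distribution of $Z_G$ is a mixture of $G$-convolutions. The entropy chain rule gives $H(Z) = H(Z_H) + H(Z_G \mid Z_H)$. For the first summand, $Z_H = X_H + Y_H$ is a sum of two independent $H$-valued random variables of entropies $H(\alpha), H(\beta)$, so the definition of $f_H$ yields $H(Z_H) \ge f_H(H(\alpha), H(\beta))$. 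For the second summand, concavity of entropy against the coset mixture followed by the definition of $f_G$ gives
$$H(Z_G \mid Z_H) \;\ge\; \sum_{s', s''} \alpha_{s'} \beta_{s''}\, H(p^X_{s'} \circledast_G p^Y_{s''}) \;\ge\; \sum_{s', s''} \alpha_{s'} \beta_{s''}\, f_G\bigl(H(p^X_{s'}),\, H(p^Y_{s''})\bigr).$$

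The key step, and the only place the MGL hypothesis enters, is collapsing this double sum. Since $f_G$ is convex in each argument separately, Jensen's inequality applied first in the first slot (with weights $\alpha_{s'}$, fixing the second argument) and then in the second slot (with weights $\beta_{s''}$) yields
$$\sum_{s', s''} \alpha_{s'} \beta_{s''}\, f_G\bigl(H(p^X_{s'}), H(p^Y_{s''})\bigr) \;\ge\; f_G\!\left(\sum_{s'} \alpha_{s'} H(p^X_{s'}),\; \sum_{s''} \beta_{s''} H(p^Y_{s''})\right) \;=\; f_G\bigl(x - H(\alpha),\; y - H(\beta)\bigr),$$
the final equality being the chain rule $\sum_h \alpha_h H(p^X_h) = H(X_G \mid X_H) = x - H(\alpha)$, and analogously for $Y$. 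I expect this two-fold Jensen step to be the main technical obstacle, since it is essential that convexity hold in each argument individually (joint convexity would fail, as already observed for $f_{\mathbb{Z}_2}$).

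Combining the three bounds above gives
$$H(X+Y) \;\ge\; f_H(H(\alpha), H(\beta)) + f_G\bigl(x - H(\alpha),\, y - H(\beta)\bigr).$$
Setting $u = H(\alpha)$ and $v = H(\beta)$, the elementary bounds $0 \le H(\alpha) \le \log|H|$ combined with $0 \le H(X_G \mid X_H) = x - u \le \log|G|$ translate exactly into $\max(0, x - \log|G|) \le u \le \min(\log|H|, x)$, and symmetrically for $v$. Since the lower bound holds for every choice of $X, Y$ with entropies $x, y$ (and the induced $(u,v)$ always lies in the admissible range), taking the minimum of the right-hand side over admissible $(u,v)$ and the infimum of the left-hand side over such $X, Y$ yields the stated inequality.
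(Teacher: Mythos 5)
Your proposal is correct and follows essentially the same route as the paper's proof: the same coset decomposition of $p_X$, $p_Y$, and $p_Z$ indexed by $H$, the chain rule, concavity of entropy, the definitions of $f_H$ and $f_G$, and the two successive applications of Jensen's inequality using convexity of $f_G$ in each argument separately. The only (welcome) addition is your explicit justification of the admissible range for $(u,v)$, which the paper states without comment.
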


\begin{proof}[Proof of Lemma \ref{lowerbound}]
We can write any probability distribution on $G \oplus H$ in terms of a convex combination of probability distributions supported on the cosets of $G$. Note that there will be $|H|$ such cosets. Suppose $X$ and $Y$ are random variables taking values in $G \oplus H$. We can write $p_X$ and $p_Y$ as
\begin{align}
p_X &= \sum_{h \in H} \alpha_h p_h~, \label{ph}\\
p _Y &= \sum_{h \in H} \beta_h q_h~, \label{qh}
\end{align}
where each $p_h$ is a distribution supported on the coset $(G,0)+(0,h)$. The distribution of $Z = X+Y$ can be broken down in a similar fashion as in (\ref{ph}), (\ref{qh}).
\begin{equation}
p_Z = \sum_{h \in H} \gamma_h r_h~. \label{rh}
\end{equation}
Here we have
\begin{align}
\gamma_h &= \sum_{i \in H} \alpha_i\beta_{h-i} = (\alpha \circledast_H \beta)_h~,\\
r_h &= \frac{\sum_{i \in H} (\alpha_i\beta_{h-i})(p_i \circledast_G q_{h-i})}{(\alpha \circledast_H \beta)_h}~.
\end{align}
Thus, using chain rule of entropy, we can write $H(Z)$ as
\begin{align}
H(Z) &= H(\gamma) + \sum_{h \in H}\gamma_h H(r_h)\\
&= H(\alpha \circledast_H \beta) + \sum_{h \in H}(\alpha \circledast_H \beta)_hH\left( \frac{\sum_{i \in H} (\alpha_i\beta_{h-i})(p_i \circledast_G q_{h-i})}{(\alpha \circledast_H \beta)_h}\right)\\
&\geq H(\alpha \circledast_H \beta) + \sum_h \sum_i (\alpha_i\beta_{h-i})H(p_i \circledast_G q_{h-i}) \label{lb1}\\
&\geq f_H(H(\alpha),H(\beta)) + \sum_h \sum_i (\alpha_i\beta_{h-i})f_G(H(p_i),H(q_{h-i})) \label{lb2}\\
&= f_H(H(\alpha),H(\beta)) + \sum_i \sum_h (\alpha_i\beta_{h-i})f_G(H(p_i),H(q_{h-i}))\\
&= f_H(H(\alpha),H(\beta)) +  \sum_i \alpha_i \left(\sum_h \beta_{h-i}f_G(H(p_i),H(q_{h-i}))\right)\\
&\geq f_H(H(\alpha),H(\beta)) +  \sum_i \alpha_i f_G \left(H(p_i), \sum_h \beta_{h-i}H(q_{h-i})\right)\label{lb3}\\
&=  f_H(H(\alpha),H(\beta)) +  \sum_i \alpha_i f_G \left(H(p_i), \sum_h \beta_{h}H(q_{h})\right)\\
&\geq  f_H(H(\alpha),H(\beta)) + f_G\left(\sum_h \alpha_{h}H(p_{h}), \sum_h \beta_{h}H(q_{h})\right)\label{lb4}\\
&= f_H(H(\alpha),H(\beta)) + f_G\bigg(H(X)-H(\alpha),H(Y)-H(\beta)\bigg)~.
\end{align}
Here (\ref{lb1}) follows from concavity of entropy, (\ref{lb2}) follows from the definition of $f$, (\ref{lb3}) and (\ref{lb4}) follow from $f_G$ satisfying the generalized MGL.
Using the above, we can get the lower bound
\begin{equation}
f_{G\oplus H}(x,y) \geq \min_{u, v} f_H(u,v) + f_G(x-u,y-v)~,
\end{equation}
where $u,v$ vary over 
\begin{align}
&\max(0, x - \log |G|) \leq u \leq \min(\log |H|,x)~,\\
&\max(0, y - \log |G|) \leq v \leq \min(\log |H|,y)~.
\end{align}
\end{proof}

\begin{theorem}\label{abelian}
If $G$ is an abelian group of order $2^n$, then $f_G(x,y) = f_{2^n}(x,y)$.
\end{theorem}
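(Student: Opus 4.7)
The plan is to combine the structure theorem for finite abelian $2$-groups with Lemma \ref{lowerbound}, inducting on $n = \log_2 |G|$. The base case $n = 1$ gives $G = \mathbb{Z}_2$ and the conclusion is immediate. For the inductive step, if $G$ is cyclic then $G = \mathbb{Z}_{2^n}$ and Theorem \ref{z2n} applies directly. Otherwise the fundamental theorem furnishes a non-trivial direct sum $G = A \oplus B$ with $|A| = 2^a$, $|B| = 2^b$, $a + b = n$ and $a, b \ge 1$. The inductive hypothesis gives $f_A = f_{2^a}$ and $f_B = f_{2^b}$, both of which satisfy the generalized MGL by Corollary \ref{mgl_z2n}. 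Applying Lemma \ref{lowerbound} (with $A$ in the role of ``$G$'') yields
\begin{equation}\label{myLB}
f_G(x,y) \;\ge\; \min_{u,v}\; f_{2^b}(u, v) + f_{2^a}(x - u, y - v),
\end{equation}
with $(u,v)$ ranging over the rectangle specified by the lemma.

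For a matching upper bound, $G$ inherits a subgroup chain $\{0\} = G_0 < G_1 < \cdots < G_n = G$ with $|G_j| = 2^j$ and $[G_{j+1}:G_j] = 2$, obtained by concatenating natural chains within each cyclic factor. Along this chain I replicate the extremal constructions of Theorem \ref{z2n} verbatim. For instance, when $k\log 2 \le x, y \le (k+1)\log 2$, I put $X$'s mass $\alpha$ uniformly on $G_k$ and $1-\alpha$ uniformly on a fixed coset $g + G_k \subset G_{k+1}$, and likewise for $Y$ with parameter $\beta$. Because $2g \in G_k$, the convolution $X + Y$ has probability $\alpha \star \beta$ of lying in $g + G_k$ (uniformly conditioned there) and $1 - \alpha \star \beta$ of lying in $G_k$ (uniformly conditioned there), so $H(X+Y) = k\log 2 + h(\alpha \star \beta) = f_{2^n}(x, y)$. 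The cases with $x, y$ in different levels are handled by supporting $X$ and $Y$ on the smallest subgroups $G_j$ accommodating their entropies, mirroring Theorem \ref{z2n}. Hence $f_G(x,y) \le f_{2^n}(x,y)$.

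It remains to verify
\begin{equation}\label{mymatch}
\min_{u,v}\; f_{2^b}(u, v) + f_{2^a}(x - u, y - v) \;\ge\; f_{2^n}(x, y),
\end{equation}
for then \eqref{myLB} together with the upper bound forces equality throughout. The cleanest route is to iterate the identity $f_{2^N}(x,y) = \min_{u,v} f_2(u,v) + f_{2^{N-1}}(x-u, y-v)$ established inside the proof of Theorem \ref{z2n}, obtaining $f_{2^n}(x,y) = \min \sum_{i=1}^n f_2(u_i, v_i)$ over nonnegative $u_i, v_i \in [0, \log 2]$ with $\sum u_i = x$ and $\sum v_i = y$; grouping the first $a$ summands (reassembling $f_{2^a}$) and the last $b$ summands (reassembling $f_{2^b}$) then recovers \eqref{mymatch}. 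The main obstacle is exactly this reconciliation: the constraint rectangles on $(u,v)$ inherited from Lemma \ref{lowerbound} must be matched against those produced by the iterated minimization, and the piecewise formulas for $f_{2^a}$ and $f_{2^b}$ require careful case splits near the level thresholds $j \log 2$.
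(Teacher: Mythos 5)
Your overall architecture coincides with the paper's: induction on $n$ via the structure theorem, the matching upper bound from coset-uniform distributions along a subgroup chain $\{0\}<G_1<\cdots<G_n$ (your computation with $2g\in G_k$ is exactly the paper's Claim \ref{abelianstep1}), and the lower bound from Lemma \ref{lowerbound} with the inductive identifications $f_A=f_{2^a}$, $f_B=f_{2^b}$. The place where you diverge --- and where the proposal is incomplete --- is the evaluation of $\min_{u,v} f_{2^b}(u,v)+f_{2^a}(x-u,y-v)$. You propose to iterate the one-step identity $f_{2^N}(x,y)=\min_{u,v} f_2(u,v)+f_{2^{N-1}}(x-u,y-v)$ to obtain $f_{2^n}(x,y)=\min\sum_{i=1}^n f_2(u_i,v_i)$ over all decompositions $\sum u_i=x$, $\sum v_i=y$ with $u_i,v_i\in[0,\log 2]$, and then regroup. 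But that one-step identity is only established in the paper for $(x,y)$ lying in the same level $[k\log 2,(k+1)\log 2]^2$ and for $(u,v)$ ranging over specific restricted domains; when $x$ and $y$ sit in different levels the paper proves $f_{2^n}(x,y)=\max(x,y)$ by a direct construction and never relates it to a minimization. Iterating therefore breaks as soon as an intermediate partial remainder $(x-\sum_{i\le j}u_i,\,y-\sum_{i\le j}v_i)$, or an individual $(u_i,v_i)$, lands in mismatched levels --- which generic decompositions in your full simplex certainly do. Establishing the $n$-fold decomposition claim $\sum_{i=1}^n f_2(u_i,v_i)\ge f_{2^n}(x,y)$ over the unrestricted simplex is a genuine new lemma, and proving it amounts to redoing precisely the case analysis you defer.

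The paper closes this gap differently: it shows by a monotonicity argument that any minimizer $(u^*,v^*)$ of $f_{H_2}(u,v)+f_{H_1}(x-u,y-v)$ can be moved into a ``diagonal box'' (with $(x-u^*,y-v^*)$ also in one), which confines the search to finitely many rectangles $R_m$, $S_m$; on each rectangle the induction hypothesis collapses both terms to shifted copies of $f_2$, and the two-variable minimization already carried out in the $\mathbb{Z}_4$ proof evaluates each rectangle's minimum to exactly $f_{2^n}(x,y)$. If you want to keep your iterated-decomposition route, you would need to first prove the unrestricted two-variable inequality $f_2(u,v)+f_{2^{N-1}}(x-u,y-v)\ge f_{2^N}(x,y)$ for \emph{all} admissible $(u,v)$ including mismatched-level configurations (the mixed-level cases do follow cheaply from $f_2(u_i,v_i)\ge\max(u_i,v_i)$, but the same-level cases with $(u_1,v_1)$ outside a diagonal box require the paper's sliding argument or an equivalent), and only then induct. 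As written, the proposal names the obstacle but does not overcome it, so the proof is not complete.
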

\begin{proof}[Proof of Theorem \ref{abelian}]
Assume $$ k_1\log2 \leq x \leq (k_1+1)\log2,$$ $$k_2\log2  \leq y \leq (k_2+1)\log2,$$ where $k_1 \neq k_2$. Without loss of generality, assume $k_1 > k_2$. Note that we have the trivial lower bound
\begin{equation}
f_{G}(x,y) \geq x
\end{equation}
obtained from $H(X+Y) \geq H(X)$. 
Thus if we can find distributions for $X$ and $Y$ such that this lower bound is achieved, then it implies $f_{G}(x,y) = x$. This is exactly what we do. Let $G_1$ be a subgroup of $G$ of size $2^{k_1+1}$, and let $G_2$ be a subgroup of $G_1$ of size $2^{k_1}$. Consider the cosets of $G_2$ with respect to $G_1$, call them $C_0 ( = G_2)$ and $C_1$. Now consider the distribution of $X$ as taking a constant value on $C_0$ and on $C_1$ such that $H(X) = x$. Let the distribution of $Y$ be any arbitrary distribution on $C_0$ such that $H(Y) = y$. Notice that (in terms of coset addition)
\begin{align*}
&C_0 + C_0 = C_0~,\\
&C_0 + C_1 = C_1 + C_0 = C_1~,\\
&C_1+C_1 = C_0.\\
\end{align*}
Since $Y$ is supported only on $C_0$, and $X$ is uniform on $C_0$ and $C_1$ it is easy to see that $X+Y$ is also uniform on $C_0$ and $C_1$ and in fact has the same distribution as that of $X$. This takes care of all cases when $k_1 \neq k_2$ and we can only concern ourselves with the case  $k_1 = k_2 =: k$.
\smallskip

Now either $G$ is a a cyclic group of size $2^n$, or $G$ can be written as a direct sum $H_1 \oplus H_2$ where $H_1$ and $H_2$ are themselves abelian of size $2^{l_i}, i = 1,2$ respectively. In the first case, there is nothing to prove. So assume the second case holds, and without loss of generality let $l_1 \leq l_2$. Our proof will proceed in two steps, in the first step we show that $f_G(x,y) \leq f_{2^n}(x,y)$ and in the second we show that $f_G(x,y) \geq f_{2^n}(x,y)$. We'll use induction in the second step, where we assume the theorem holds true for the smaller groups $H_1$ and $H_2$ and prove it for $G$.

\begin{claim}\label{abelianstep1}
$f_G(x,y) \leq f_{2^n}(x,y)$
\end{claim} 
\begin{proof}[Proof of Claim \ref{abelianstep1}]
As before, let $$ k\log2 \leq x \leq (k+1)\log2,$$ $$k\log2  \leq y \leq (k+1)\log2.$$ 
Consider a subgroup $G_1$ of size $2^{k+1}$, and a subgroup $G_2$ of $G_1$ of size $2^k$. Let $C_0$ and $C_1$ be the cosets of $G_2$ in $G_1$. We consider a distribution of $X$ which takes a constant value on  $\frac{x_0}{2^k}$ on $C_0$ and a constant value $\frac{1-x_0}{2^k}$ on  $C_1$ and has $H(X) =x$. Similarly consider a distribution of $Y$ which takes constant values $\frac{y_0}{2^k}$ on $C_0$ and $\frac{y_1}{2^k}$ on $C_1$ and has $H(Y) = y$.
We have
$$H(X) = x = h(x_0) + k \log2,$$
$$H(Y) = y = h(y_0) + k \log 2.$$
It is easy to verify that 
$$H(X+Y) = h(x_0 \star y_0) + k\log2 = f_2(x-k\log2,y-k\log2)+k\log2 = f_{2^n}(x,y).$$
By the definition of $f_G$, we  get $$f_G(x,y) \leq f_{2^n}(x,y).$$
\end{proof}
\begin{claim}\label{abelianstep2}
$f_G(x,y) \geq f_{2^n}(x,y)$
\end{claim}
\begin{proof}
By our assumptions, $G = H_1 \oplus H_2$ where $|H_1| = 2^{l_1}$, $|H_2| = 2^{l_2}$ where $l_1 + l_2 = n$ and without loss of generality $0 < l_1 \leq l_2$. We also assume that the theorem holds for $H_1$ and $H_2$ and prove it by induction for $G$. By Lemma \ref{lowerbound} we have the lower bound
\begin{equation}\label{gh1h2}
f_{H_1\oplus H_2}(x,y) \geq \min_{u, v} f_{H_2}(u,v) + f_{H_1}(x-u,y-v),
\end{equation}
where $u,v$ vary over 
\begin{align}
&\max(0, x - \log |H_1|) \leq u \leq \min(\log |H_2|,x), \label{condu}\\
&\max(0, y - \log |H_1|) \leq v \leq \min(\log |H_2|,y).\label{condv}
\end{align}
Note that (\ref{condu}) and (\ref{condv}) are equivalent, respectively to 
\begin{align}
\max(0,x-\log|H_2|) \leq (x-u) \leq \min(\log|H_1|,x) \tag{24a}, \label{condua}\\
\max(0,y-\log|H_2|) \leq (y-v) \leq \min(\log|H_1|,y) \tag{25a} \label{condva}.
\end{align}
To facilitate the discussion, we term as a `diagonal box'  any square of the form
$$[t\log2, (t+1)\log2] \times [t\log2 (t+1)\log2],$$ for some integer $0 \leq t \leq n-1$.
\smallskip

First note that that if $(u^*, v^*)$ achieves the minimum in (\ref{gh1h2}), then it must be that $(u^*, v^*)$ is inside a diagonal box, and so is $(x - u^*, y-v^*)$. To see this consider for instance the case when $(u^*,v^*)$ lies `below' a diagonal box. In this case we can increase $v^*$ (till we hit the diagonal box) while keeping the value of $f_{H_2}(u^*,v^*)$ constant ($ = u^*$) and simultaneously decrease the value of  $f_{H_1}(x-u^*,y-v^*)$, thus decreasing the value of the sum. To be precise, suppose that 
$$k\log2 \leq x,y \leq (k+1)\log2,$$
$$m\log2 \leq u^* \leq \min(x,(m +1)\log2),$$
where $m \leq k$. Suppose also that
$$v^* < m\log2.$$
Then we have
$$f_{H_2}(u^*,v^*)=  f_{H_2}(u^*,m\log2) = u^*$$
and by monotonicity of $f_{H_1}$ we also have
$$f_{H_1}(x-u^*,y-v^*) \geq f_{H_1}(x-u^*,y-m\log2).$$
Note also that we have 
$$m\log2 \leq k\log2 \leq y$$ 
and also that 
$$m\log2 \leq u^* \leq \log|H_2|.$$ Thus $m\log2$ satisfies (\ref{condv}) and is a valid choice for $v$. This shows that  the optimal $(u^*, v^*)$ can be taken to lie in the diagonal box $[m\log2, (m+1)\log2] \times [m\log2, (m+1)\log2]$. Similar logic holds for when $(u^*, v^*)$ lies `to the left' of a diagonal box, or when $(x-u^*, y-v^*)$ lies `above' or `to the left' of a diagonal box.
\medskip

Our strategy will be as follows, we first use the above criteria on the optimal $(u^*,v^*)$ to restrict the domain of $(u,v)$ to a number of sub-rectangles of the diagonal boxes. We then use the induction hypothesis and reduce the problem of minimizing $ f_{H_2}(u,v) + f_{H_1}(x-u,y-v)$ to that of minimizing $f_{2^{l_2}}(u,v) + f_{2^{l_1}}(x-u,y-v)$. We examine the value of  $\min f_{2^{l_2}}(u,v) + f_{2^{l_1}}(x-u,y-v)$ over the rectangles, one rectangle at a time. The minimum over a single rectangle can be determined from the proof of the $\mathbb{Z}_{2^n}$ case, and it turns out to be $f_{2^n}(x,y)$ independent of which rectangle we choose. Thus the overall minimum also turns out to be $f_{2^n}(x,y)$ .
\medskip
 
Let $k\log2 \leq x,y \leq (k+1)\log2$. 

Let us write 
$$x = k\log2 + x',$$
$$y = k\log2 + y',$$
where $0 \leq x',y' \leq \log2$ and define the rectangles
\begin{align*}
R_0 &\triangleq [0,x'] \times [0,y']~,\\
R_1 &\triangleq [\log2, \log2+x'] \times [\log2, \log2+y']~,\\
&.\\
&.\\
R_{k} &\triangleq [k\log2, k\log2+x'] \times [k\log2, k\log2+y']~,\\
\end{align*}
and
\begin{align*}
S_1 &\triangleq [x', \log2] \times [y', \log2]~,\\
S_2 &\triangleq [\log2+x', 2\log2] \times [\log2+y', 2\log2]~,\\
&.\\
&.\\
S_{k} &\triangleq [[(k-1)\log2+x', k\log2] \times [(k-1)\log2+y', k\log2]~.\\
\end{align*}
We consider three separate cases $$k+1 \leq l_1,$$ $$l_1 +1 \leq k+1 \leq l_2,$$ $$l_2+1 \leq k+1 \leq n.$$ 
In the first case, the set of $(u,v)$ that satisfy (\ref{condu}),(\ref{condv}) and such that $(u,v)$ and $(x-u,y-v)$ both lie in diagonal boxes is $\left(\cup_{m=0}^k R_m\right) \cup \left(\cup_{m=1}^{k}S_m\right)$. In the second case it is  $\left(\cup_{m=k-l_1+1}^k R_m\right) \cup \left(\cup_{m=k-l_1+1}^{k}S_m\right)$, and in the third case it is  $\left(\cup_{m=l-l_1+1}^{l_2-1} R_m\right) \cup \left(\cup_{m=k-l_1+1}^{l_2}S_m\right)$.
\medskip

Fix $0 \leq m \leq k$ and consider
$$\min_{(u,v) \in R_m} \left( f_{H_2}(u,v) + f_{H_1}(x-u,y-v) \right)$$
assuming that we are in one of the three cases where all $(u,v) \in R_m$ satisfy equations (\ref{condu}),(\ref{condv}). Let us write 
$$u = m\log2 + u',$$
$$v = m\log2 + v',$$
where $0 \leq u' \leq x'$ and $0 \leq v' \leq y'$. By induction hypothesis we have 
$$f_{H_2}(u,v) = f_{2^{l_2}}(u,v) = m\log2 + f_2(u',v')$$
and
$$f_{H_1}(x-u,y-v) = f_{2^{l_1}}(x-u,y-v) = (k-m)\log2 + f_2(x'-u',y'-v')~.$$
Hence
\begin{align*}
\min_{(u,v) \in R_m} \left( f_{H_2}(u,v) + f_{H_1}(x-u,y-v) \right) &= k\log2 + \min_{u',v'} f_2(u',v') + f_2(x'-u',y'-v')\\
&\stackrel{(a)}{=} k\log2 + f_2(x',y')~.
\end{align*}
Here $(a)$ follows from the proof of the $\mathbb{Z}_4$ case. Note that this equals $f_{2^n}(x,y)$. \medskip

Now fix $1 \leq m \leq k$ and consider
$$\min_{(u,v) \in S_m} \left( f_{H_2}(u,v) + f_{H_1}(x-u,y-v) \right)$$
assuming that we are in one of the three cases where all $(u,v) \in S_m$ satisfy equations (\ref{condu}),(\ref{condv}). Note that this is equivalent to requiring that we are in one of the cases where all $(x-u,y-v)$ for $(u,v)$ satisfy (\ref{condua}),(\ref{condva}). Let us write
$$u = (m-1)\log2 + u',$$
$$v = (m-1)\log2 + v',$$
 where $x' \leq u' \leq \log2$ and $y' \leq v' \leq \log2$. By inductive hypothesis we have
$$f_{H_2}(u,v) = f_{2^{l_2}}(u,v) = (m-1)\log2 + f_2(u',v')~.$$
Further, since
$$x-u = (k-m)\log2 + \log2 + x' - u',$$
$$y-v = (k-m)\log2 + \log2 + y' - v',$$
with $x' \leq \log2 + x' - u' \leq \log2$ and $y' \leq \log2 + y' - v' \leq \log2$, by inductive hypothesis we have
$$f_{H_1}(x-u,y-v) = f_{2^{l_1}}(x-u,y-v) = (k-m)\log2 + f_2(\log2 + x' - u', \log2 + y' - v').$$
Hence we have
\begin{align*}
&\min_{(u,v) \in S_m} \left( f_{H_2}(u,v) + f_{H_1}(x-u,y-v) \right)\\
 &= (k-1)\log2 + \min_{x' \leq u' \leq \log2, 0 \leq v' \leq \log2} f_2(u',v') +f_2(\log2 + x' - u', \log2 + y' - v')\\
&\stackrel{(a)}{=} (k-1)\log2 + f_4(\log2+x',\log2+y')\\
&\stackrel{(b)}{=} k\log2 + f_2(x',y')~,
\end{align*}
where $(a),(b)$ again follows from the proof of the $\mathbb{Z}_4$ case. Note that that this equals $f_{2^n}(x,y)$. This completes the proof of Claim \ref{abelianstep2}, and thus of Theorem \ref{abelian}.
\end{proof}
\end{proof}

\section{Extensions}
In this section we will prove some extensions of the earlier results that seem to be of potential interest.

\subsection{Scalar and Vector MGL}

\begin{claim}\label{scalarMGL}

Let $X$, $Y$ and $Z$ be random variables taking values in an abelian group $G$ of order $2^n$, and let $U$ be an arbitrary random variable. Suppose $Z$ is independent of $(U,X)$ and $Y = X + Z$ where the addition is understood to be the group addition. Then
$$H(Y|U) \geq f_G(H(X|U),H(Z))$$
\end{claim}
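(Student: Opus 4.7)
The plan is a standard two-step argument: apply the unconditional EPI pointwise after conditioning on $U$, and then average using Jensen's inequality together with the convexity of $f_G$ in its first argument (Corollary \ref{mgl_z2n} combined with Theorem \ref{abelian}).

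First I would fix a realization $U = u$. Because $Z$ is independent of $(U, X)$, conditioning on $U = u$ leaves the marginal law of $Z$ unchanged, so $H(Z \mid U = u) = H(Z)$, and it preserves the independence of $X$ and $Z$ given $U = u$. Hence the definition of $f_G$ in \eqref{fgdefn}, applied to the pair $(X \mid U = u,\, Z)$, yields
\begin{equation*}
H(Y \mid U = u) \;=\; H(X + Z \mid U = u) \;\ge\; f_G\bigl(H(X \mid U = u),\, H(Z)\bigr).
\end{equation*}

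Next I would average over $U$. Using $H(Y \mid U) = E_U[H(Y \mid U = u)]$ and $H(X \mid U) = E_U[H(X \mid U = u)]$, this gives
\begin{equation*}
H(Y \mid U) \;\ge\; E_U\bigl[f_G(H(X \mid U = u),\, H(Z))\bigr] \;\ge\; f_G\bigl(E_U[H(X \mid U = u)],\, H(Z)\bigr) \;=\; f_G\bigl(H(X \mid U),\, H(Z)\bigr),
\end{equation*}
where the second inequality is Jensen's inequality applied with the second argument of $f_G$ fixed at $H(Z)$, invoking the convexity of the map $x \mapsto f_G(x,\, H(Z))$ supplied by the generalized Mrs.\ Gerber's Lemma for abelian groups of order $2^n$.

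I do not expect any real obstacle here. The only step that deserves a word of care is the conditional independence of $X$ and $Z$ given $U$, which is immediate from the hypothesis $Z \perp (U, X)$. The argument does not use the group structure directly beyond the unconditional EPI and the convexity of $f_G$, so the same proof would extend verbatim to any abelian group for which Conjecture \ref{conjecture} is known.
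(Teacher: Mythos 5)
Your proposal is correct and is essentially identical to the paper's proof: condition on $U=u$, apply the definition of $f_G$ to the (still independent) pair $(X\mid U=u, Z)$, then average and use the convexity of $f_G(\cdot, H(Z))$ via Jensen. No gaps.
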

\begin{remark}
In the case of binary random variables $X$, $Y$, and $Z$, where $Z \sim Bern(p)$, $U$ is an arbitrary random variable, and $Z$ is independent of $(U,X)$, one has the scalar MGL given by 
$$H(Y|U) \geq h(h^{-1}(H(X|U)) \star p)~.$$
Thus, Claim \ref{scalarMGL} can be thought of as the generalization of this scalar MGL for random variables taking 
values in an abelian group of order $2^n$.
\end{remark}

\begin{proof}[Proof of Claim \ref{scalarMGL}]
\begin{align}
H(Y|U) &= \sum_{u} P(U=u)H(Y|U=u)\\
&= \sum_{u}P(U=u)H(X+Z|U=u)\\
&\geq \sum_{u}P(U=u)f_G(H(X|U=u),H(Z)) \label{scalarMGL3}\\
& \geq f_G \left( \sum_{u} P(U=u)H(X|U=u), H(Z)\right) \label{scalarMGL4}\\
&= f_G(H(X|U),H(Z))~,
\end{align}
where (\ref{scalarMGL3}) follows from the definition of $f_G$ and (\ref{scalarMGL4}) follows from the convexity of $f_G(x,y)$
in $x$ for fixed $y$.
\end{proof}

\begin{claim}\label{vectorMGL}
Let $X^k$ be a random vector each of whose coordinates takes values in an abelian group $G$ of order $2^n$, and let $U$ be an arbitrary random variable. If $Z^k$ is a vector of independent and identically distributed $G$-valued random variables, each distributed according to $p_Z$, and $Z^k$ is independent of $(X^k, U)$, with $Y^k = X^k + Z^k$ then
$$\frac{H(Y^k|U)}{k} \geq f_G \left( \frac{H(X^k|U)}{k}, H(Z) \right)~. $$
\end{claim}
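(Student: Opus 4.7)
My plan is to combine the chain rule of entropy with the scalar MGL of Claim \ref{scalarMGL} in a coordinate-by-coordinate fashion, along the lines of the standard argument that upgrades scalar MGL to vector MGL in the binary case. First, I expand
$$H(Y^k|U) = \sum_{i=1}^k H(Y_i \mid Y^{i-1}, U)$$
via the chain rule. For each $i$, I apply Claim \ref{scalarMGL} with side information $(Y^{i-1}, U)$, observing that $Z_i$ is independent of the triple $(X_i, Y^{i-1}, U)$: since $Z^k$ is i.i.d.\ and independent of $(X^k, U)$, and since $Y^{i-1}$ is a deterministic function of $(X^{i-1}, Z^{i-1})$, this independence is immediate. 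The scalar MGL then produces
$$H(Y_i \mid Y^{i-1}, U) \;\geq\; f_G\bigl(H(X_i \mid Y^{i-1}, U),\, H(Z)\bigr).$$

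I sum these inequalities over $i$ and invoke Jensen's inequality, using the convexity of $f_G(\cdot, H(Z))$ in the first argument (Corollary \ref{mgl_z2n} combined with Theorem \ref{abelian}) to pull the average inside:
$$H(Y^k|U) \;\geq\; k \cdot f_G\!\left(\frac{1}{k}\sum_{i=1}^k H(X_i \mid Y^{i-1}, U),\; H(Z)\right).$$
I then want to replace $\tfrac{1}{k}\sum_i H(X_i \mid Y^{i-1}, U)$ by $\tfrac{1}{k} H(X^k|U)$. The inequality $H(X_i \mid Y^{i-1}, U) \geq H(X_i \mid X^{i-1}, U)$ follows by adding $Z^{i-1}$ to the conditioning (which can only decrease entropy), noting that the pair $(Y^{i-1}, Z^{i-1})$ determines $X^{i-1}$, and then dropping $Z^{i-1}$ at no cost because $Z^{i-1}$ is independent of $(X^k, U)$. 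Summing and invoking the chain rule gives $\sum_i H(X_i \mid Y^{i-1}, U) \geq H(X^k|U)$.

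To convert this last bound into one on $f_G$, I still need monotonicity of $f_G$ in its first argument. This is not stated outright but falls out of the prior results: $f_G(0, y) = y$ (take $X$ deterministic) and $f_G(x, y) \geq H(X+Y \mid X) = y$, so $f_G(\cdot, y)$ attains its minimum at the left endpoint of $[0, \log|G|]$, and being convex must therefore be non-decreasing on that interval. Applying this monotonicity and dividing by $k$ yields the desired inequality. The only genuine subtleties are the conditional-independence bookkeeping that licenses each coordinate-wise use of Claim \ref{scalarMGL} and the short justification of monotonicity of $f_G$ via convexity plus the boundary value $f_G(0,y) = y$; neither constitutes a real obstacle.
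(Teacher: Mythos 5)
Your proof is correct. It follows the same overall template as the paper's --- chain rule over the coordinates, a coordinate-wise application of the scalar MGL (Claim \ref{scalarMGL}), and then convexity of $f_G(\cdot,H(Z))$ to pull the average inside --- but the order of operations differs in one place, and this creates a small divergence worth noting. The paper first replaces $H(Y_i\mid U,Y^{i-1})$ by $H(Y_i\mid U,X^{i-1})$ (conditioning reduces entropy, then memorylessness of the additive noise), and only then invokes the scalar MGL with side information $(U,X^{i-1})$; the resulting sum $\sum_i H(X_i\mid U,X^{i-1})$ telescopes to $H(X^k\mid U)$ exactly by the chain rule, so nothing further is needed. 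You instead apply the scalar MGL with side information $(U,Y^{i-1})$ directly, which obliges you to supply two extra facts: that $\sum_i H(X_i\mid U,Y^{i-1})\ge H(X^k\mid U)$, and that $f_G$ is non-decreasing in its first argument. Both are handled soundly --- the conditional-independence bookkeeping (adding $Z^{i-1}$ to the conditioning, noting $(Y^{i-1},Z^{i-1})$ determines $X^{i-1}$, and dropping $Z^{i-1}$ by independence from $(X^k,U)$) is right, and monotonicity does follow from convexity together with $f_G(0,y)=y\le f_G(x,y)$; it is also visible directly from the explicit formula in Theorem \ref{z2n} combined with Theorem \ref{abelian}, and the paper itself uses this monotonicity elsewhere without separate proof. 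The paper's ordering is marginally cleaner because it sidesteps the monotonicity lemma entirely; your route is a legitimate, slightly longer variant of the same argument.
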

\begin{remark}
Claim \ref{vectorMGL} in the case of binary random variables where $Z^k$ is a vector of i.i.d. random variables having distribution Bern($p$), is given by
$$\frac{H(Y^k|U)}{k} \geq h \left(h^{-1}\left( \frac{H(X^k|U)}{k}\right) \star p \right)~, $$
and is known to be true.
Thus, Claim \ref{vectorMGL} can be thought of as the vector MGL for random variables taking values in an abelian group of order $2^n$.
\end{remark}

\begin{proof}[Proof of Claim \ref{vectorMGL} ]

\begin{align}
\frac{H(Y^k|U)}{k} &= \sum_{i=1}^{k} \frac{H(Y_i|U, Y^{i-1})}{k}\\
& \geq \sum_{i=1}^{k} \frac{H(Y_i|U, Y^{i-1},X^{i-1})}{k} \label{vectorMGL2}\\
&= \sum_{i=1}^{k} \frac{H(Y_i|U, X^{i-1})}{k} \label{vectorMGL3} \\
& \geq \sum_{i=1}^{k} \frac{ f_G(H(X_i | U, X^{i-1}), H(Z))}{k} \label{vectorMGL4}\\
& \geq  f_G \left( \sum_{i=1}^{k}\frac{H(X_i | U, X^{i-1})}{k}, H(Z)\right) \label{vectorMGL5}\\
&= f_G \left( \frac{H(X^k | U)}{k}, H(Z) \right)~.
\end{align}
Here, (\ref{vectorMGL2}) is because conditioning reduces entropy, (\ref{vectorMGL3}) is because the channel from $X^k$ to $Y^k$ is a DMC, (\ref{vectorMGL4}) follows from the scalar MGL, (\ref{vectorMGL5}) is because of the convexity of $f_G(x,y)$ in $x$ for fixed $y$.
\end{proof}

\subsection{The minimum entropy of a sum of $k \ge 2$ independent $G$-valued random variables with fixed entropies}

Consider an abelian group $G$ of order $2^n$ and $k \ge 2$ independent random variables $X_1,X_2,...,X_k$ taking values in $G$. We define the function 
\begin{equation}
f_{G,k}(x_1,x_2,...,x_k) := \min_{H(X_i) = x_i, 1\leq i \leq k}H(X_1+X_2+...+X_k)~.
\end{equation}
The function $f_{G,1}$ is the identity function, whereas our earlier function $f_G$ can be thought of as $f_{G,2}$. 

We divide the interval $[0,n\log2]$  into $n$ blocks of size $\log2$, namely $[i\log2, (i+1)\log2]$, where  $0\leq i \leq n-1$. We bin $x_1,x_2,...,x_k$ into these $n$ bins and consider the largest $m$ such that $m\log2 \leq x_l \leq (m+1)\log2$ for some
$1 \le l \le k$. Let the contents of this bin be  $x^1,x^2,...,x^r$ where $r \leq k$. Call the corresponding random variables $X^1,X^2,...,X^r$. We claim the following:
\begin{claim}\label{krvsclaim}
$$\min_{H(X^i) = x_i, 1\leq i \leq r}H(X^1+X^2+...+X^r) = f_{G,2}(x^1, f_{G,2}(x^2,(...(f_{G,2}(x^{r-1},x^r)))..))~.$$  
\end{claim}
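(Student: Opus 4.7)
The plan is to proceed by induction on $r$; the base case $r = 2$ is the definition of $f_{G,2}$. For the inductive step, I would establish matching upper and lower bounds, using Theorem \ref{abelian} to replace $f_G$ by the explicit formula of Theorem \ref{z2n} throughout.

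For the lower bound, set $W = X^2 + \cdots + X^r$, which is independent of $X^1$. The definition of $f_{G,2}$ gives $H(X^1 + W) \geq f_{G,2}(x^1, H(W))$, and the inductive hypothesis applied to $X^2, \ldots, X^r$ gives $H(W) \geq f_{G,2}(x^2, f_{G,2}(x^3, \ldots, f_{G,2}(x^{r-1}, x^r) \cdots))$. Monotonicity of $f_{G,2}(x^1, \cdot)$ then yields the desired bound. Monotonicity follows directly from Theorem \ref{z2n}: inside any diagonal box $[k\log 2, (k+1)\log 2]^2$ it reduces to the monotonicity of $f_2$, which in turn follows from $p \star q = p + q - 2pq$ being non-decreasing in $q$ on $[0, \tfrac{1}{2}]$ together with the monotonicity of $h$ on $[0, \tfrac{1}{2}]$; outside a box, $f_{G,2}$ is simply $\max$, which is manifestly non-decreasing.

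For the upper bound, I would construct explicit distributions achieving equality. Pick a subgroup $H_0 \leq G$ of order $2^m$ and an element $g \in G$ with $2g \in H_0$ and $g \notin H_0$; such $g$ exists whenever $m < n$, since the finite abelian $2$-group $G/H_0$ has an element of order two. For each $i$, let $U^i$ be uniform on $H_0$, and let $V^i$ take the value $g$ with probability $\alpha^i := h^{-1}(x^i - m\log 2)$ and the value $0$ otherwise, with all of $U^1, V^1, \ldots, U^r, V^r$ mutually independent; set $X^i = U^i + V^i$. Then $H(X^i) = m\log 2 + h(\alpha^i) = x^i$, and
$$X^1 + \cdots + X^r = (U^1 + \cdots + U^r) + (V^1 + \cdots + V^r),$$
where the first sum is uniform on $H_0$ and independent of the second, which remains in $\{0, g\}$ with Bernoulli parameter $\alpha^1 \star \cdots \star \alpha^r$. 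Hence $H(X^1 + \cdots + X^r) = m\log 2 + h(\alpha^1 \star \cdots \star \alpha^r)$. Unfolding the nested expression via the bin-$m$ formula $f_{G,2}(a,b) = m\log 2 + f_2(a - m\log 2, b - m\log 2)$, valid since $f_{G,2}$ of two bin-$m$ values is again in bin $m$, gives exactly the same value. The case $m = n$ is trivial, as all $X^i$ are then forced to be uniform on $G$.

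The main obstacle I anticipate is not any single deep step but the careful bookkeeping in the lower bound: one must confirm that the inductive iterate and $H(W)$ are comparable via the global monotonicity of $f_{G,2}$, even though \emph{a priori} $H(W)$ could lie well above the bin $[m\log 2, (m+1)\log 2]$. This is settled by the global formula for $f_{G,2}$ from Theorem \ref{abelian}, so no tool beyond the induction, the earlier theorems, and the basic properties of $f_2$ is required.
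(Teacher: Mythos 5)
Your proof is correct and follows essentially the same route as the paper: the lower bound by iterating the definition of $f_{G,2}$ together with its monotonicity, and the upper bound by exhibiting distributions that are constant on the two cosets of a subgroup of order $2^m$ inside a subgroup of order $2^{m+1}$ (your $X^i = U^i + V^i$ construction is the same family of extremal distributions, merely parametrized so that the entropy of the sum can be computed in closed form as $m\log 2 + h(\alpha^1 \star \cdots \star \alpha^r)$ rather than iteratively). The only cosmetic slip is that $V^1+\cdots+V^r$ lies in $\{0,g\}$ only modulo $H_0$, but since the independent uniform summand on $H_0$ absorbs everything within a coset, your entropy computation stands.
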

\begin{proof}[Proof of Claim \ref{krvsclaim}]
Note that $$H(X^r +X^{r-1}) \geq f_{G,2}(x^r,x^{r-1}),$$ by definition of $f_{G,2}$. Now by monotonicity of $f_{G,2}$, we also have 
$$H(X^{r-2} + X^{r-1} + X^{r}) \geq f_{G,2}(x^{r-2}, H(X^{r-1}+X^r)) \geq f_{G,2}(x^{r-2}, f_{G,2}(x^{r-1},x^r)).$$
Continuing in a similar fashion, we get
$$H(X^1+X^2+...+X^r) \geq f_{G,2}(x^1, f_{G,2}(x^2,(...(f_{G,2}(x^{r-1},x^r)))..))$$
for whatever choice of distributions of $X^1,X^2,...,X^r$. This gives us the lower bound
\begin{equation}
\min_{H(X^i) = x^i, 1\leq i \leq r}H(X^1+X^2+...+X^r) \geq f_{G,2}(x^1, f_{G,2}(x^2,(...(f_{G,2}(x^{r-1},x^r)))..)).
\end{equation}
Now consider a group $H_1$ of order $2^{m+1}$ and its subgroup $H_2$ of order $2^m$. Let $X^1,X^2,...X^{r}$  have distributions supported on $H_1$ such that they take constant values on the cosets  $H_1/H_2$ and satisfy $H(X^i) = x^i$ for $1 \leq i \leq r$. Let these distributions be $p_{X^1},p_{X^2},...,p_{X^r}$. For this choice of distributions, we have 
$$H(X^r +X^{r-1}) = f_{G,2}(x^r,x^{r-1}),$$
since these distributions achieve equality for $f_{G,2}$. We also have
$$H(X^{r-2} + X^{r-1} + X^{r}) = f_{G,2}(x^{r-2}, f_{G,2}(x^{r-1},x^r)),$$
as $p_{X^{r-2}}$ and $p_{X^{r-1}} \circledast_G p_{X^r}$ are equality achieving distributions for $f_{G,2}$. 
Continuing similarly, we see that the lower bound is achieved, thus proving the claim.
\end{proof}

\begin{claim}\label{krvsclaim2}
$$f_{G,k}(x_1,...,x_k)= f_{G,2}(x^1, f_{G,2}(x^2,(...(f_{G,2}(x^{r-1},x^r)))..))~.$$
\end{claim}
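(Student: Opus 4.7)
The plan is to prove Claim \ref{krvsclaim2} by matching a straightforward lower bound with an explicit achievability construction, in the same spirit as the proof of Claim \ref{krvsclaim}. After relabeling, I assume that the contents of the highest occupied bin are $x_1,\ldots,x_r$ (so these are the $x^i$), hence $m\log 2 \leq x_i \leq (m+1)\log 2$ for $1\leq i\leq r$, while $x_j \leq m\log 2$ for every $j>r$ by maximality of $m$.

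For the lower bound, set $Y = X_1+\cdots+X_r$ and $W = X_{r+1}+\cdots+X_k$, which are independent $G$-valued random variables. Since adding a constant preserves entropy in any group, $H(Y+W\mid W=w) = H(Y+w) = H(Y)$ for every $w$, so
\begin{equation}
H(X_1+\cdots+X_k) \;\geq\; H(Y+W\mid W) \;=\; H(Y) \;\geq\; f_{G,r}(x_1,\ldots,x_r),
\end{equation}
where the first inequality is conditioning reduces entropy and the last is the definition of $f_{G,r}$. By Claim \ref{krvsclaim}, $f_{G,r}(x_1,\ldots,x_r)$ coincides with the nested expression $f_{G,2}(x^1, f_{G,2}(x^2,\ldots,f_{G,2}(x^{r-1},x^r))\cdots)$. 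Taking the infimum over admissible joint laws yields the required lower bound on $f_{G,k}(x_1,\ldots,x_k)$.

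For the upper bound I would exhibit distributions saturating this value. Since $G$ is abelian of order $2^n$ and $m+1\leq n$, the structure theorem provides a chain of subgroups $H_2\subset H_1\subset G$ of orders $2^m$ and $2^{m+1}$. Choose $p_{X_1},\ldots,p_{X_r}$ as in the achievability construction of Claim \ref{krvsclaim}: each is supported on $H_1$ and constant on the two cosets of $H_2$ in $H_1$. A short direct computation shows that convolution of two such laws preserves this property, so by induction the sum $Y$ is also supported on $H_1$ with law constant on $H_2$-cosets, and moreover attains $H(Y) = f_{G,2}(x^1, f_{G,2}(x^2,\ldots))$. For each $j>r$, since $x_j\leq m\log 2 = \log|H_2|$, pick any distribution $p_{X_j}$ supported on $H_2$ with $H(X_j)=x_j$, so that $W$ is also supported on $H_2$. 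The crucial calculation is then that for every $g\in G$,
\begin{equation}
P(Y+W=g) \;=\; \sum_{h\in H_2} P(Y=g-h)\,P(W=h) \;=\; P(Y=g)\sum_{h\in H_2}P(W=h) \;=\; P(Y=g),
\end{equation}
because $P(Y=g-h)$ depends only on the $H_2$-coset of $g-h$, which equals that of $g$ whenever $h\in H_2$. Hence $Y+W$ is identically distributed to $Y$, and $H(X_1+\cdots+X_k)=H(Y)$ matches the lower bound.

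I expect no serious obstacle here: the one delicate point is arranging the non-big-bin variables to live inside the subgroup $H_2$, so that they are absorbed by a distribution which is constant on $H_2$-cosets, and this is exactly what the alignment $H_2\subset H_1$ is designed to accomplish. Granted Claim \ref{krvsclaim}, what remains is a routine matching-bounds argument.
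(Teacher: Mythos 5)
Your proposal is correct and follows essentially the same route as the paper: the lower bound comes from dropping the variables outside the top bin (monotonicity of entropy under adding independent summands) and invoking Claim \ref{krvsclaim}, while the upper bound uses the same construction of supporting $X^1,\dots,X^r$ on a chain $H_2\subset H_1$ with laws constant on $H_2$-cosets and absorbing the remaining variables into a coset of $H_2$. The only cosmetic difference is that the paper allows the leftover variables to live on either coset of $H_2$ in $H_1$ rather than on $H_2$ itself, which changes nothing.
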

\begin{proof}[Proof of Claim \ref{krvsclaim2}]
Note that since $k \geq r$, and by Claim \ref{krvsclaim}, we have the lower bound
\begin{equation}\label{lbkrvs}
f_{G,k}(x_1,...,x_k) \geq f_{G,r}(x^1,x^2,...,x^r) = f_{G,2}(x^1, f_{G,2}(x^2,(...(f_{G,2}(x^{r-1},x^r)))..)).
\end{equation}
We'll show that this lower bound is attained. Consider a group $H_1$ of size $2^{m+1}$ and its subgroup $H_2$ of size $2^{m}$. Define distributions of  $X^1,X^2,...X^{r}$ supported on $H_1$ such that they take constant values on the cosets  $H_1/H_2$ and satisfy $H(X^i) = x^i$ for $1 \leq i \leq r$. Let the remaining random variables take arbitrary distributions supported on either of the two cosets of $H_2$ in $H_1$, and such that they satisfy the entropy constraints. 
It is easily checked that 
\begin{equation}\label{krvsconv}
p_{X_1} \circledast_G p_{X_2} \circledast_G ... \circledast_G p_{X_k} = p_{X^1} \circledast_G p_{X^2} \circledast_G ... \circledast_G p_{X^r},
\end{equation}
giving us
$$H(X_1+X_2+...+X_k) = H(X_1+X_2+...+X_r) =  f_{G,2}(x_1, f_{G,2}(x_2,(...(f_{G,2}(x_{r-1},x_r)))..))$$
where the second equality follows from Claim \ref{krvsclaim}. By the definition of $f_{G,k}$, this gives us
\begin{equation}\label{ubkrvs}
f_{G,k}(x_1,...,x_k) \leq f_{G,2}(x_1, f_{G,2}(x_2,(...(f_{G,2}(x_{r-1},x_r)))..))~.
\end{equation}
\medskip
Equations (\ref{lbkrvs}) and (\ref{ubkrvs}) prove Claim \ref{krvsclaim2}. 
\end{proof}

\begin{theorem}\label{krvs}
Given any $x_1,x_2,...,x_k$ we have 
$$f_{G,k}(x_1,x_2,...,x_k) = f_{G,2}(x_1, f_{G,2}(x_2,(...(f_{G,2}(x_{k-1},x_k)))..))~.$$
\end{theorem}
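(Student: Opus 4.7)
The proof plan is to combine Claim \ref{krvsclaim2} with the explicit formula for $f_{G,2} = f_{2^n}$ given by Theorem \ref{z2n} and Theorem \ref{abelian}. Claim \ref{krvsclaim2} already establishes
$$f_{G,k}(x_1,\ldots,x_k) = f_{G,2}\bigl(x^1,\,f_{G,2}(x^2,\ldots,f_{G,2}(x^{r-1},x^r)\ldots)\bigr)~,$$
where $x^1,\ldots,x^r$ are the entropies lying in the top populated bin $[m\log 2,(m+1)\log 2]$, listed in their original order. Thus Theorem \ref{krvs} reduces to the purely arithmetic identity
$$f_{G,2}\bigl(x_1,\,f_{G,2}(x_2,\ldots,f_{G,2}(x_{k-1},x_k)\ldots)\bigr) = f_{G,2}\bigl(x^1,\,f_{G,2}(x^2,\ldots,f_{G,2}(x^{r-1},x^r)\ldots)\bigr)~,$$
i.e., that inserting lower-bin entries $x_j < m\log 2$ into the right-nested evaluation does not alter its value.

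The plan is to prove this by backward induction on $j$. Define $y_k := x_k$ and $y_j := f_{G,2}(x_j,y_{j+1})$ for $j < k$, and let $T_j$ denote the right-nested application of $f_{G,2}$ to the top-bin entries among $x_j,\ldots,x_k$ (in their original order). The induction hypothesis is: if no top-bin entry occurs in $\{x_j,\ldots,x_k\}$ then $y_j < m\log 2$ strictly; otherwise $y_j = T_j$ and $y_j \in [m\log 2,(m+1)\log 2]$. Setting $j=1$ then yields the identity. Two facts drive the step, both pulled directly from the closed-form description of $f_{G,2}$ in Theorem \ref{z2n}. First, iterating $f_{G,2}$ over values strictly below $m\log 2$ stays strictly below $m\log 2$: the $\max$ branch trivially preserves this, while the in-bin branch $f_2(a,b)+l\log 2$ with $l \le m-1$ and $a,b \in [0,\log 2)$ satisfies $f_2(a,b) < \log 2$, because $p_1 \star p_2 = \tfrac12$ forces $p_1 = \tfrac12$ or $p_2 = \tfrac12$. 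Second, whenever $y_{j+1} \in [m\log 2,(m+1)\log 2]$ and $x_j < m\log 2$, Theorem \ref{z2n} yields $f_{G,2}(x_j,y_{j+1}) = y_{j+1}$: either the two arguments lie in disjoint bins, giving $\max(x_j,y_{j+1}) = y_{j+1}$, or we hit the boundary case $y_{j+1} = m\log 2$ with $x_j \in [(m-1)\log 2, m\log 2)$, where $f_2(\log 2,\cdot) \equiv \log 2$ gives $m\log 2 = y_{j+1}$ again.

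The remaining induction step treats the case when $x_j$ itself is a top-bin entry. If $y_{j+1}$ is already in $[m\log 2,(m+1)\log 2]$, the in-bin branch of Theorem \ref{z2n} directly gives $f_{G,2}(x_j,y_{j+1}) = f_2(x_j - m\log 2, y_{j+1} - m\log 2) + m\log 2$, which is precisely the recipe for prepending $x_j$ to $T_{j+1}$ to produce $T_j$. If instead $y_{j+1} < m\log 2$ (i.e., $x_j$ is the first top-bin entry encountered from the right), then either the two arguments lie in disjoint bins and $\max(x_j,y_{j+1}) = x_j$, or we are at the boundary $x_j = m\log 2$ with $y_{j+1} \in [(m-1)\log 2,m\log 2)$, where $f_2(\log 2,\cdot)\equiv\log 2$ again yields $m\log 2 = x_j$; in both sub-cases $y_j = x_j = T_j$. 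The main obstacle in executing this plan is the careful bookkeeping at bin boundaries, where the two branches of Theorem \ref{z2n} overlap and one must check that the transition from the ``below the top bin'' regime to the ``inside the top bin'' regime happens consistently under either branch of the formula. Once this is settled, the induction closes with $y_1 = T_1$, matching the right-hand side of Claim \ref{krvsclaim2} and proving the theorem.
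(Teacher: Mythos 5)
Your proposal is correct and takes essentially the same route as the paper: reduce to Claim \ref{krvsclaim2} and show that the full right-nested expression collapses to the nested expression over the top-bin entries. The paper dispatches that collapse with ``it is easy to see \dots continuing in a similar manner,'' whereas you carry it out as an explicit backward induction using the closed form of Theorem \ref{z2n}, including the bin-boundary cases the paper glosses over; the substance is the same.
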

\begin{proof}[Proof of Theorem \ref{krvs}]
Let $r$ be as before and let $x_{i_1}, x_{i_2},...,x_{i_r}$ be those $x_i$'s which land in the largest bin, $[m\log2, (m+1)\log2]$. Let $1 \leq i_1 < i_2 < ... < i_r \leq k$. It is easy to see that 
$$f_{G,2}(x_{i_{r-1}}, f_{G,2}(x_{i_{r-1}+1}, f_{G,2}( ..., x_k))..)) = f_{G,2}((x_{i_{r-1}}, x_{i_r})).$$
Continuing in a similar manner, we get
$$f_{G,2}(x_1, f_{G,2}(x_2,(...(f_{G,2}(x_{k-1},x_k)))..)) =  f_{G,2}(x_{i_1}, f_{G,2}(x_{i_2},(...(f_{G,2}(x_{i_{r-1}},x_{i_r})))..))$$
which by Claim \ref{krvsclaim2} is $f_{G,k}(x_1,x_2,...,x_k)$ thus proving Theorem \ref{krvs}.
\end{proof}

\begin{corollary}\label{corkrvs}
$f_{G,k}(x_1,x_2,...,x_k)$ is convex in each variable, when the remaining are kept fixed.
\end{corollary}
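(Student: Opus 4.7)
The plan is to leverage Theorem \ref{krvs} to express $f_{G,k}$ as an iterated composition of $f_{G,2}$'s, and then prove convexity in each variable by a short induction, using two facts about $f_{G,2}$: it is convex in each argument when the other is fixed (the generalized MGL, Theorem \ref{abelian} together with Corollary \ref{mgl_z2n}), and it is monotonically non-decreasing in each argument. Concretely, I would set $h_k(x_k) := x_k$ and recursively $h_j(x_j,\ldots,x_k) := f_{G,2}(x_j, h_{j+1}(x_{j+1},\ldots,x_k))$ for $j < k$, so that Theorem \ref{krvs} gives $h_1 = f_{G,k}$.

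Fix an index $i \in \{1,\ldots,k\}$ and treat all $x_\ell$ with $\ell \neq i$ as constants. The key claim is that $h_j$ is convex in $x_i$ for every $j \leq i$, which I would prove by downward induction on $j$. For the base case $j = i$: if $i = k$ then $h_k(x_k) = x_k$ is linear; if $i < k$ then $h_i = f_{G,2}(x_i, h_{i+1})$ with $h_{i+1}$ independent of $x_i$, so convexity in $x_i$ is immediate from the generalized MGL. For the inductive step with $j < i$: the outer map $y \mapsto f_{G,2}(x_j, y)$ is convex and non-decreasing, while $x_i \mapsto h_{j+1}$ is convex by the inductive hypothesis; since the composition of a non-decreasing convex function with a convex function is convex, $h_j$ is convex in $x_i$. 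Taking $j = 1$ yields the corollary.

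The only minor point to verify is the monotonicity of $f_G$ in each argument, which is already implicitly used in the paper (see the proof of Claim \ref{krvsclaim}) and follows from the explicit formula in Theorem \ref{z2n}: on each diagonal box $[k\log 2,(k+1)\log 2]^2$ one has $f_G = f_2(\,\cdot - k\log 2,\, \cdot - k\log 2) + k\log 2$, which is non-decreasing in each argument because $h^{-1}(a) \star h^{-1}(b)$ moves monotonically toward $1/2$ as either coordinate increases on $[0,\log 2]^2$; off-diagonal, $f_G(x,y) = \max(x,y)$ is trivially non-decreasing; the two expressions agree on the common boundary, giving global monotonicity. I do not expect any serious obstacle here, since the heavy lifting has already been carried out in establishing the explicit form of $f_{G,2}$ and its convexity in each argument, and the corollary then reduces to a routine composition argument.
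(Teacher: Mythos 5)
Your proof is correct, but it proceeds by a genuinely different mechanism than the paper's. The paper fixes all variables but $x_k$, invokes the binning structure behind Theorem \ref{krvs} to write $f_{G,k}$ as an explicit piecewise function of $x_k$ --- constant for $x_k < m\log 2$, equal to $f_{G,2}\bigl(x_k, f_{G,k-1}(x_1,\ldots,x_{k-1})\bigr)$ on the bin $[m\log 2,(m+1)\log 2]$, and equal to $x_k$ above it --- and then glues the three pieces into a convex function using the derivative bound $\bigl|\frac{\partial f_2}{\partial x}\bigr| \leq 1$ from Claim \ref{dfdx<1} to check the slopes match up monotonically at the junctions. You instead work with the full iterated composition from Theorem \ref{krvs} and run a downward induction using the standard fact that a non-decreasing convex function composed with a convex function is convex; this requires you to supply monotonicity of $f_{G,2}$ in each argument, which you correctly extract from the explicit formula in Theorem \ref{z2n} (and which the paper also uses, without comment, in the proof of Claim \ref{krvsclaim}). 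Your route is more modular --- it needs only convexity-in-each-argument and monotonicity of $f_{G,2}$, not the quantitative slope bound or the piecewise description, and so would transfer verbatim to any group for which a generalized MGL and monotonicity are known. The paper's route is shorter given that Claim \ref{dfdx<1} and the binning reduction are already on the table, and it makes the qualitative shape of $x_k \mapsto f_{G,k}$ (flat, then convex, then linear of slope one) explicit. Both arguments are complete; there is no gap in yours.
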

\begin{proof}[Proof of Corollary \ref{corkrvs}]
Without loss of generality, consider $x_k$ as varying and the remaining variables fixed. As before, let the largest bin in which atleast one $x_i$ is present be $[m\log2,(m+1)\log2]$. Now as long as $x_k < m\log2$, $$f_{G,k}(x_1,...,x_k) = f_{G,k-1}(x_1,x_2,..,x_{k-1})$$ which is constant as a function of $x_k$. For $m\log2 \leq x_k \leq (m+1)\log2$, we have that $$f_{G,k}(x_1,x_2,...,x_k) = f_{G,2}(x_k,f_{G,k-1}(x_1,x_2,....,x_{k-1}))$$ which is convex in $x_k$ by MGL. For $x_k > (m+1)\log2$, $$f_{G,k}(x_1,x_2,...,x_k) = x_k.$$ Now the convexity easily follows from MGL and Claim \ref{dfdx<1}.
\end{proof}

\section{Acknowledgements}
Research support from the ARO MURI grant W911NF-08-1-0233, ``Tools for the Analysis and Design of Complex Multi-scale Network", from the NSF grant CNS-0910702, from the NSF Science \& Technology Center grant CCF-0939370, ``Science of Information", from Marvell Semiconductor Inc., and from the U.C. Discovery program is gratefully acknowledged.

\bibliographystyle{ieeetr}	
\bibliography{myrefs}		
\appendix
\section{Proof of Lemma \ref{dfdx}} \label{lemma1}
We'll first compute $\frac{\partial f}{\partial x}$. 
Let $x = h(p)$ and $y = h(q)$ with $0 \le p, q \le \frac{1}{2}$,
so $f(x,y) = h(p \star q)$.

\begin{align}
\frac{\partial f}{\partial x} =& \frac{\partial f}{\partial p}  \frac{\partial p}{\partial x}\\
=&(1-2q)\log\left(\frac{1-p \star q}{p \star q}\right) \times \frac{1}{\log\left(\frac{1-p}{p}\right)}~.
\end{align}
Notice that as $x$ moves along a line with slope $\theta >  0$, $p$ and $q$ both strictly increase and consequently the function $(1-2q)$ strictly decreases. Therefore to show $\frac{\partial f}{\partial x}$ strictly decreases, it is enough to show that 
$\log\left(\frac{1-p \star q}{p \star q}\right) \times \frac{1}{\log\left(\frac{1-p}{p}\right)}$ monotonically decreases along the line. Now let us compute the directional derivative of $g(x,y) = \frac{\log\left(\frac{1-p \star q}{p \star q}\right)}  {\log\left(\frac{1-p}{p}\right)}$ at a point $(x,y) \in (0,1) \times (0,1)$, as we move in a direction $(1,\theta)$.
\begin{align}
\frac{\partial g}{\partial x} + \theta\frac{\partial g}{\partial y} &= \frac{\partial g}{\partial p}\frac{\partial p }{\partial x} +\theta \frac{\partial g}{\partial q}\frac{\partial q}{\partial y}\\
&= \frac{\log\left(\frac{1-p}{p}\right)\left(-\frac{1}{1-p \star q} - \frac{1}{p \star q}\right)(1-2q) - \log\left(\frac{1-p \star q}{p \star q}\right)\left(-\frac{1}{1-p}-\frac{1}{p}\right)}{\left(\log\left(\frac{1-p}{p}\right)\right)^2} \times \frac{1}{\log\left(\frac{1-p}{p}\right)}\\ 
&+ \theta \frac{\left(-\frac{1}{1-p \star q} - \frac{1}{p \star q}\right)(1-2p)}{\log\left(\frac{1-p}{p}\right)} \times \frac{1}{\log\left(\frac{1-q}{q}\right)}  \nonumber\\
\label{dgdx}
&= \frac{-(1-2q)}{(p \star q)(1- p\star q)\left(\log\left(\frac{1-p}{p}\right)\right)^2} + \frac{\log\left(\frac{1-p \star q}{p \star q}\right)}{p(1-p)\left(\log\left(\frac{1-p}{p}\right)\right)^3}\\ 
&- \theta \frac{(1-2p)}{(p \star q)(1- p\star q)\log\left(\frac{1-p}{p}\right)\log\left(\frac{1-q}{q}\right)}~. \nonumber
\end{align}
Now we choose $\theta = \frac{h(q)}{h(p)}$. We want to show that with this choice of $\theta$, $(\ref{dgdx}) \leq 0$, since this would mean $g(x,y)$ decreases as we move in the desired direction. Thus we see that it suffices to show
\begin{equation}
 \frac{-(1-2q)}{(p \star q)(1- p\star q)\left(\log\left(\frac{1-p}{p}\right)\right)^2} + \frac{\log\left(\frac{1-p \star q}{p \star q}\right)}{p(1-p)\left(\log\left(\frac{1-p}{p}\right)\right)^3} -   \frac{h(q)(1-2p)}{h(p)(p \star q)(1- p\star q)\log\left(\frac{1-p}{p}\right)\log\left(\frac{1-q}{q}\right)} \stackrel{?}{\leq} 0~.
\end{equation}
Note that  since $(x,y)$ lies in the interior, $0< p,q < \frac{1}{2}$. Multiplying throughout by $\log \left(\frac{1-p}{p}\right)$ we need to show
\begin{equation}
 \frac{-(1-2q)}{(p \star q)(1- p\star q)\log\left(\frac{1-p}{p}\right)} + \frac{\log\left(\frac{1-p \star q}{p \star q}\right)}{p(1-p)\left(\log\left(\frac{1-p}{p}\right)\right)^2} -   \frac{h(q)(1-2p)}{h(p)(p \star q)(1- p\star q)\log\left(\frac{1-q}{q}\right)} \stackrel{?}{\leq} 0~.
\end{equation}
Taking the negative terms on the other side, we need to show
\begin{equation}
 \frac{\log\left(\frac{1-p \star q}{p \star q}\right)}{p(1-p)\left(\log\left(\frac{1-p}{p}\right)\right)^2}   \stackrel{?}{\leq} \frac{(1-2q)}{(p \star q)(1- p\star q)\log\left(\frac{1-p}{p}\right)} + \frac{h(q)(1-2p)}{h(p)(p \star q)(1- p\star q)\log\left(\frac{1-q}{q}\right)}~.
\end{equation}
Multiplying by $(p \star q)(1- p \star q)$ on both sides, we need to show
\begin{equation}
 \frac{(p \star q)(1- p\star q)\log\left(\frac{1-p \star q}{p \star q}\right)}{p(1-p)\left(\log\left(\frac{1-p}{p}\right)\right)^2}   \stackrel{?}{\leq} \frac{(1-2q)}{\log\left(\frac{1-p}{p}\right)} + \frac{h(q)(1-2p)}{h(p)\log\left(\frac{1-q}{q}\right)}~.
\end{equation}
Now multiplying both sides by $p(1-p)\left(\log\left(\frac{1-p}{p}\right)\right)^2$, we need to show
\begin{equation}\label{pstarq}
(p \star q)(1- p\star q)\log\left(\frac{1-p \star q}{p \star q}\right)  \stackrel{?}{\leq} (1-2q)p(1-p)\log\left(\frac{1-p}{p}\right) + \frac{p(1-p)(1-2p)\left(\log\left(\frac{1-p}{p}\right)\right)^2 h(q)}{h(p)\log\left(\frac{1-q}{q}\right)}~.
\end{equation}
We'll now analyse (\ref{pstarq}) by keeping the left side fixed and finding the minimum of the right side. Let $p \star q = k$. Note that $p \leq k$ and $q = \frac{k-p}{1-2p}$. Observe that when $p=k$, $q=0$ and the first term on the right side equals the left side, whereas the second term is 0 (it is easy to see that $\frac{h(q)}{\log\left(\frac{1-q}{q}\right)} \to 0$ as $q \to 0$). Thus, it will be sufficient to show that the right hand side is a decreasing function of $p$ if  $p \star q$ is fixed. Substitute $q$ in the first term to get
\begin{equation}\label{pstarq2}
\frac{(1-2k)}{(1-2p)}p(1-p)\log\left(\frac{1-p}{p}\right) + \frac{p(1-p)(1-2p)\left(\log\left(\frac{1-p}{p}\right)\right)^2 h(q)}{h(p)\log\left(\frac{1-q}{q}\right)}~.
\end{equation}
Showing (\ref{pstarq2}) decreases in $p$ for a fixed $k$ is equivalent to showing $A(p,k)$ decreases in $p$ where $A$ is given by
\begin{equation}
A(p,k) = \frac{1}{(1-2p)}p(1-p)\log\left(\frac{1-p}{p}\right) + \frac{p(1-p)(1-2p)\left(\log\left(\frac{1-p}{p}\right)\right)^2 h(q)}{h(p)\log\left(\frac{1-q}{q}\right)(1-2k)}~.
\end{equation}
For ease of notation, rename the following functions
 $$B(p,k) := \frac{\partial A}{\partial p}~,$$
 $$M(p) :=  \frac{p(1-p)}{(1-2p)}\log\left(\frac{1-p}{p}\right)~,$$
 $$N(p) := \frac{p(1-p)(1-2p)\left(\log\left(\frac{1-p}{p}\right)\right)^2 }{h(p)}~,$$
 $$L(q) := \frac{h(q)}{\log\left(\frac{1-q}{q}\right)}~.$$
 So we have (note that in the equation below $q$ is thought of as a function of $p$ and $k$)
 \begin{equation}
 A(p,k) = M(p) + \frac{N(p)L(q)}{1-2k}~.
 \end{equation}
 Differentiating w.r.t $p$, we get
 \begin{align}
 B(p,k) &= M'(p) + \frac{N'(p)L(q)}{1-2k} + \frac{N(p)\frac{d L(q)}{dq}\frac{dq}{dp}}{1-2k}\\
 \label{pstarq3}
 &= M'(p) + N'(p) \frac{L(q)}{1-2k} + \frac{N(p)}{(1-2p)^2}\left(-\frac{d L(q)}{dq}\right)
 \end{align}
 where (\ref{pstarq3}) is got by $\frac{dq}{dp} = -\frac{1-2k}{(1-2p)^2}$.
 We want to show that $B(p,k) \leq 0$ for all valid pairs $(p,k)$ (a pair is valid if $0 < p \leq k$). It is therefore sufficient to show that $\max_{k \geq p} B(p,k) \leq 0$. We now make two claims.
 \begin{claim}\label{np}
 $N'(p) \leq 0$ i.e $N(p)$ is a decreasing function of $p$, as $p$ goes from $0$ to $\frac{1}{2}$.
 \end{claim}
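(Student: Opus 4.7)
The goal is to show $N'(p)\le 0$ for $p\in(0,\tfrac12)$, and since $N(p)>0$ on this interval it suffices to show $(\log N)'(p)\le 0$. Using $h'(p)=\log((1-p)/p)$ and $\tfrac{d}{dp}\log((1-p)/p)=-1/(p(1-p))$, logarithmic differentiation yields
\begin{equation*}
\frac{N'(p)}{N(p)} \;=\; \frac{1-2p}{p(1-p)} \;-\; \frac{2}{1-2p} \;-\; \frac{2}{p(1-p)\log\!\bigl(\tfrac{1-p}{p}\bigr)} \;-\; \frac{\log\!\bigl(\tfrac{1-p}{p}\bigr)}{h(p)},
\end{equation*}
so the claim reduces to showing that the last three (negative) terms dominate $(1-2p)/(p(1-p))$ in absolute value for every $p\in(0,\tfrac12)$.

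To streamline the analysis I would substitute $t:=\log((1-p)/p)\in(0,\infty)$, under which $p=1/(1+e^t)$, $p(1-p)=e^t/(1+e^t)^2$, $1-2p=\tanh(t/2)$, and $h(p)$ becomes $H(t):=\log(1+e^t)-te^t/(1+e^t)$. Using the elementary identities $(1+e^t)^2/e^t=4\cosh^2(t/2)$ and $(e^t-1)/(1+e^t)=\tanh(t/2)$, the inequality becomes
\begin{equation*}
2\sinh t \;\le\; 2\coth(t/2) \;+\; \frac{8\cosh^2(t/2)}{t} \;+\; \frac{t}{H(t)}, \qquad t\in(0,\infty).
\end{equation*}
The endpoints are easy: as $t\to 0^+$ the left side vanishes while the middle term on the right blows up like $8/t$; as $t\to\infty$ the asymptotics $H(t)\sim(1+t)e^{-t}$ give $t/H(t)\sim te^t/(1+t)$, which together with $8\cosh^2(t/2)/t\sim 2e^t/t$ dominates $2\sinh t=e^t-e^{-t}$.

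The substantive work is the intermediate range, and the main obstacle is that $H(t)$ involves the transcendental combination $\log(1+e^t)-te^t/(1+e^t)$, which does not marry cleanly with the hyperbolic terms on the right. My approach would be to introduce an explicit, tight upper bound on $H(t)$ — exploiting the fact that $H$ decreases monotonically from $\log 2$ to $0$ with $H'(t)=-te^t/(1+e^t)^2$, so that $H$ is concave-like in a controllable way — and then verify the inequality after this replacement, handling the small-$t$ and large-$t$ regimes separately. Equivalently, one can set $F(t)$ equal to the difference (right minus left) in the simplified inequality above, confirm $F>0$ near the endpoints, and rule out interior zeros by analyzing $F'$; the hope here is that after clearing denominators $F'$ admits a sign-definite factorization. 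Either route ultimately rests on finding a single well-chosen elementary bound for $H(t)$ that is tight enough to make the estimate go through uniformly in $t$, and that is the heart of the proof.
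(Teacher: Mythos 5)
Your setup is correct as far as it goes: the logarithmic derivative
\begin{equation*}
\frac{N'(p)}{N(p)} = \frac{1-2p}{p(1-p)} - \frac{2}{1-2p} - \frac{2}{p(1-p)\log\bigl(\frac{1-p}{p}\bigr)} - \frac{\log\bigl(\frac{1-p}{p}\bigr)}{h(p)}
\end{equation*}
is right, the substitution $t=\log((1-p)/p)$ is carried out correctly, and the reformulated inequality $2\sinh t \le 2\coth(t/2) + 8\cosh^2(t/2)/t + t/H(t)$ together with the two endpoint asymptotics all check out. But this is a reformulation, not a proof. The entire content of the claim lives in the intermediate range of $t$, and there you only describe what you \emph{would} do: introduce some unspecified ``tight upper bound on $H(t)$,'' or hope that $F'$ ``admits a sign-definite factorization'' after clearing denominators. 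You say yourself that finding the right bound ``is the heart of the proof,'' and that heart is missing. As written, the argument could not be checked or completed by a reader without doing the substantive work themselves, so this is a genuine gap rather than a stylistic shortcoming.

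It is worth contrasting this with how the paper avoids the difficulty entirely. Rather than differentiating all of $\log N$ at once (which keeps the five factors entangled and forces you to fight the transcendental combination $t/H(t)$), the paper first peels off the three factors $(1-p)$, $(1-2p)$, and $\log\bigl(\frac{1-p}{p}\bigr)$, each of which is positive and decreasing on $(0,\tfrac12)$, reducing the claim to the monotonicity of $\hat N(p)=p\log\bigl(\frac{1-p}{p}\bigr)/h(p)$. Two successive differentiations of $\hat N$ (using $\tilde N(0)=0$ at the intermediate stage) then collapse the problem to
\begin{equation*}
2\log(1-p)+\frac{p}{1-p}\log(p)+\frac{1-p}{p}\log(1-p)\le 0,
\end{equation*}
which is a sum of manifestly nonpositive terms. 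If you want to salvage your route, the lesson is to factor out the easy monotone pieces \emph{before} differentiating; otherwise you are left proving a strictly harder inequality than necessary.
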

 \begin{figure}[h!]
  
  \centering
    \includegraphics[scale = 0.75]{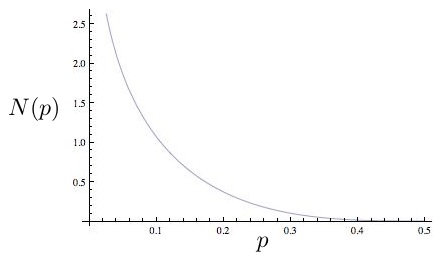}
    \caption{Plot of $N(p)$}
\end{figure}

 \begin{claim}\label{lp}
 $L(x) = \frac{h(x)}{\log\left(\frac{1-x}{x}\right)}$ is an increasing function as $x$ goes from $0$ to $\frac{1}{2}$, and $L'(x)$ is minimum at $x = 0$.
 \end{claim}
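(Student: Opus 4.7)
The plan is to compute $L'(x)$ by brute force and exploit a clean simplification. Setting $\ell(x) := \log\frac{1-x}{x}$ so that $L = h/\ell$, I will use the identities $h'(x) = \ell(x)$ and $\ell'(x) = -\frac{1}{x(1-x)}$. The quotient rule then yields
\begin{equation}
L'(x) = \frac{h'(x)\ell(x) - h(x)\ell'(x)}{\ell(x)^2} = 1 + \frac{h(x)}{x(1-x)\,\ell(x)^2},
\end{equation}
where the key simplification is that the $h'\ell$ contribution cancels the $\ell^2$ in the denominator and leaves the constant $1$.

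From this formula both halves of the claim follow. On $(0, \frac{1}{2})$ each of $h(x)$, $x(1-x)$, and $\ell(x)^2$ is strictly positive, so $L'(x) > 1$ throughout the open interval, giving $L$ strictly increasing. For the second assertion I will let $x \to 0^+$ and use the standard asymptotics $h(x) \sim -x\log x$ and $\ell(x) \sim -\log x$, from which the correction term behaves like $\frac{-x\log x}{x(\log x)^2} = \frac{-1}{\log x} \to 0^+$. Hence $L'(0^+) = 1$, while $L'(x) > 1$ on the open interval, so the minimum of $L'$ on $[0, \frac{1}{2})$ is attained exactly at $x = 0$.

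There is no real obstacle here: after the quotient rule the structure $L'(x) = 1 + (\text{positive})$ makes both claims transparent. The only subtlety is interpreting $L'(0)$, since $L(0) = 0/\infty$ and $\ell(0) = +\infty$ force us to read $L'(0)$ as the one-sided limit $L'(0^+)$; the asymptotic computation above shows this limit equals $1$ and coincides with the infimum of $L'$ on $(0, \frac{1}{2})$, which is all that the surrounding argument requires.
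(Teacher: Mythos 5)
Your proof is correct and follows essentially the same route as the paper: both derive $L'(x) = 1 + \frac{h(x)}{x(1-x)\left(\log\frac{1-x}{x}\right)^2}$, conclude monotonicity from positivity of the correction term, and show the minimum of $L'$ occurs at $x=0$ by evaluating the limit of that term (the paper via L'H\^{o}pital, you via the equivalent asymptotics $h(x)\sim -x\log x$, $\ell(x)\sim -\log x$). No gaps.
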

 
 \begin{figure}[h!]
  
  \centering
    \includegraphics[scale = 0.75]{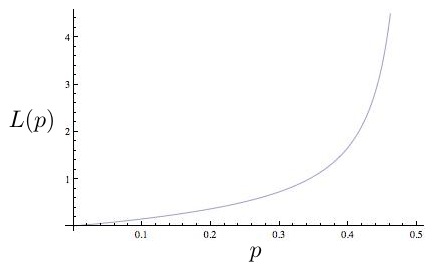}
    \caption{Plot of $L(p)$}
\end{figure}

 \smallskip
 
 Suppose we did term-by-term maximisation of $B(p,k)$ as $k$ varies. $M'(p)$ does not depend on $k$, so we don't need to care about it. Now for the second term, since $N'(p) \leq 0$ (by Claim \ref{np}), to maximise $B$  we need to minimise $\frac{L(q)}{1-2k}$ as a function of $k$. Now we note that as $k \uparrow$, $q \uparrow$ and by claim 2 we get $L(q) \uparrow$. Also clearly as $k \uparrow$, $\frac{1}{1-2k} \uparrow$. Thus $\frac{L(q)}{1-2k}$ increases in $k$, and to minimise it, the best choice of $k$ is the minimum possible $k$, which is $p$. For the third term, because of the minus sign, we need to minimise $\frac{dL(q)}{dq}$. By Claim \ref{lp}, we see that this happens when $q = 0$ which happens when $k$ equals $p$. Thus, the above discussion leads us to conclude that $\arg\max_{k \geq p} B(p,k) = p$. It therefore suffices to show that 
 \begin{equation}\label{bpp}
 B(p,p) \stackrel{?}{\leq} 0 \mbox{      ~~~~~for all } 0 < p < \frac{1}{2}~.
 \end{equation}
 \medskip
 
 Having motivated the claims, we'll now prove them.
 \begin{proof}[Proof of Claim \ref{np}]
 Let's recall $N(p)$
  $$N(p) = \frac{p(1-p)(1-2p)\left(\log\left(\frac{1-p}{p}\right)\right)^2 }{h(p)}~.$$
Since $(1-p)$, $(1-2p)$ and $\log\left(\frac{1-p}{p}\right)$ are decreasing functions of $p$, we conclude that it suffices to prove $$\hat{N}(p) := \frac{p\log\left(\frac{1-p}{p}\right)}{h(p)}$$ decreases in $p$. Differentiating $\hat{N}$ and simplifying, we get that it suffices to show
\begin{equation}
\tilde{N}(p) := h(p)\log\left(\frac{1-p}{p}\right) - \frac{h(p)}{(1-p)} -p\left(\log\left(\frac{1-p}{p}\right)\right)^2 \stackrel{?}{\leq} 0~.
\end{equation}
Now as $p \to 0$, $\tilde{N}$ tends to $0$. Thus, to show that it is negative we'll show that $\tilde{N}' \leq 0$. Differentiating again, and simplifying we get that it suffices to show 
\begin{equation}\label{nprime}
h(p)\left(-\frac{1}{p} - \frac{1}{1-p}\right) + \log\left(\frac{1-p}{p}\right) \stackrel{?}{\leq} 0~.
\end{equation}
Now we expand $h(p) = -p\log(p)-(1-p)\log(1-p)$ and simplify $(\ref{nprime})$ to get 
\begin{align}
&\left(p\log(p)+(1-p)\log(1-p)\right)\left(\frac{1}{p} + \frac{1}{1-p}\right) + \log\left(\frac{1-p}{p}\right)   \stackrel{?}{\leq} 0\\
&\Leftrightarrow \bcancel{\log(p)} + \log(1-p) + \frac{p}{1-p}\log(p) + \frac{1-p}{p}\log(1-p) + \log(1-p) - \bcancel{\log(p)} \stackrel{?}{\leq} 0\\
&\Leftrightarrow 2\log(1-p) + \frac{p}{1-p}\log(p) + \frac{1-p}{p}\log(1-p)   \stackrel{?}{\leq} 0~,
\end{align}
which is immediate since $0 < p, 1-p < 1 $. This proves Claim \ref{np}.
 \end{proof}

\medskip
\begin{proof}[Proof of Claim \ref{lp}]
Recalling $L(x)$,
$$L(x) := \frac{h(x)}{\log\left(\frac{1-x}{x}\right)}~.$$ 
Differentiating, 
\begin{equation}
L'(x) = 1 + \frac{\frac{h(x)}{x(1-x)}}{\left(\log\left(\frac{1-x}{x}\right)\right)^2} \geq 1 > 0~.
\end{equation} 
Thus $L(x)$ is clearly an increasing function. To show that $L'(x)$ is minimum at $x=0$, we'll show that $L'(0) = 1$.
\begin{align*}
\lim_{x \to 0}\frac{h(x)}{x(1-x)\left(\log\left(\frac{1-x}{x}\right)\right)^2}  &= \lim_{x \to 0}\frac{h(x)}{x\left(\log\left(\frac{1-x}{x}\right)\right)^2}\\
&= \lim_{x \to 0}\frac{\log\left(\frac{1-x}{x}\right)}{\left(\log\left(\frac{1-x}{x}\right)\right)^2 - \frac{2}{1-x}\log\left(\frac{1-x}{x}\right)}\\
&= \lim_{x \to 0}\frac{1}{\log\left(\frac{1-x}{x}\right)-2}\\
&=0~.
\end{align*}
This proves claim \ref{lp}. 

\end{proof}

\medskip

Coming back to $(\ref{bpp})$,
 \begin{equation}
 B(p,p) = M'(p) + N'(p)\frac{L(0)}{1-2p} - \frac{N(p)}{(1-2p)^2}\frac{d L(q)}{dq}\bigg |_{q=0}~.
 \end{equation}
 $$L(0) = \lim_{q \to 0} \frac{H(q)}{\log\left(\frac{1-q}{q}\right)} = \frac{0}{\infty} = 0~.$$
 By the proof on Claim \ref{lp}, we also know that 
$$ \lim_{x \to 0} \frac{dL(x)}{dx} = 1~. $$
Using this, we get 
 \begin{equation}
 B(p,p) = M'(p)  - \frac{N(p)}{(1-2p)^2}~.
 \end{equation}
 We want to show that 
 \begin{align}
 &M'(p) \stackrel{?}{\leq} \frac{N(p)}{(1-2p)^2}\\
&\Leftrightarrow \frac{d}{dp}\left(\frac{p(1-p)}{(1-2p)}\log\left(\frac{1-p}{p}\right)\right) \stackrel{?}{\leq} \frac{ \frac{p(1-p)(1-2p)\left(\log\left(\frac{1-p}{p}\right)\right)^2 }{h(p)}}{(1-2p)^2}\\
&\Leftrightarrow \frac{-1}{(1-2p)} +  \log\left(\frac{1-p}{p}\right) + \frac{2p(1-p)}{(1-2p)^2}\log\left(\frac{1-p}{p}\right)  \stackrel{?}{\leq} \frac{p(1-p)\left(\log\left(\frac{1-p}{p}\right)\right)^2 }{h(p)(1-2p)}\\
&\Leftrightarrow -(1-2p) + (1-2p)^2\log\left(\frac{1-p}{p}\right) + 2p(1-p)\log\left(\frac{1-p}{p}\right) \stackrel{?}{\leq}  \frac{p(1-p)(1-2p)\left(\log\left(\frac{1-p}{p}\right)\right)^2 }{h(p)}\\
&\Leftrightarrow h(p)\left(-(1-2p) + (2p^2 - 2p +1)\log\left(\frac{1-p}{p}\right)\right) \stackrel{?}{\leq} p(1-p)(1-2p)\left(\log\left(\frac{1-p}{p}\right)\right)^2~. 
 \end{align}
 Now we expand $h(p) = -p\log(p) - (1-p)\log(1-p)$ and $\log\left(\frac{1-p}{p}\right) = \log(1-p) - \log(p)$ and evaluate both sides of this inequality while collecting the coefficients of $(\log p)^2$, $(\log(1-p))^2$, $\log(p)\log(1-p)$, $\log(p)$ and $\log(1-p)$. After cancellation, we get that we need to show  \begin{equation}\label{final}
 p^2(\log(p))^2 - (1-p)^2(\log(1-p))^2 + (1-2p) \left( \log(p)\log(1-p) + p\log(p)+(1-p)\log(1-p)\right) \stackrel{?}{\leq} 0~.
 \end{equation}
 Define 
 \begin{equation}
F(p) :=  p^2(\log(p))^2 - (1-p)^2(\log(1-p))^2 + (1-2p) \left( \log(p)\log(1-p) + p\log(p)+(1-p)\log(1-p)\right)~. 
 \end{equation}
 \begin{figure}[h!]
  
  \centering
    \includegraphics[scale = 0.75]{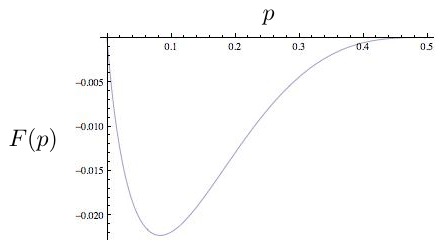}
    \caption{Plot of $F(p)$}
\end{figure}

  \begin{claim}\label{Fp}
 $F(p) \leq 0~.$
 \end{claim}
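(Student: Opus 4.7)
The plan is to exploit a symmetry to halve the problem, then refactor $F$ into a structured form and handle the reduced comparison by an endpoint-plus-monotonicity analysis.

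First, a direct substitution shows $F(1-p) = -F(p)$: the two squared terms swap with a sign flip, the prefactor $(1-2p)$ becomes $-(1-2p)$, and the bracket $\log p \log(1-p) + p\log p + (1-p)\log(1-p)$ is invariant under $p \leftrightarrow 1-p$. Hence $F(1/2) = 0$, and it suffices to prove $F(p) \leq 0$ for $p \in (0,1/2)$. Moreover, elementary limits (using $p\log p \to 0$ and $\log(1-p) \sim -p$ as $p \to 0^+$) give $F(0^+) = 0$, so $F$ vanishes at both endpoints of the interval where the claim needs to be verified.

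Using $h(p) = -p\log p - (1-p)\log(1-p)$ and the factorization
$p^2(\log p)^2 - (1-p)^2(\log(1-p))^2 = -h(p)\bigl(p\log p - (1-p)\log(1-p)\bigr)$,
I would rewrite
\begin{equation*}
F(p) = -h(p)\bigl[p\log p - (1-p)\log(1-p) + (1-2p)\bigr] + (1-2p)\log p\,\log(1-p).
\end{equation*}
On $(0,1/2)$ one checks that $h(p)$, $\log p\,\log(1-p)$, $(1-2p)$, and the bracket $p\log p - (1-p)\log(1-p) + (1-2p)$ are all non-negative, so the claim reduces to the pointwise comparison
\begin{equation*}
h(p)\bigl[\,p\log p - (1-p)\log(1-p) + (1-2p)\,\bigr] \;\geq\; (1-2p)\,\log p\,\log(1-p),
\end{equation*}
with equality at both $p \to 0^+$ and $p = 1/2$.

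To establish this comparison I would first record that $F'(1/2) = 0$: using $h'(p) = \log\tfrac{1-p}{p}$, which vanishes at $1/2$, and differentiating the other factors at $p=1/2$, all pieces cancel. Combined with the skew-symmetry, this means $F$ has at least a double zero at $p = 1/2$ and its Taylor expansion there contains only odd powers of $(p-1/2)$ with vanishing linear coefficient. Then, substituting $t = \log\tfrac{1-p}{p} \in (0,\infty)$ recasts $F$ as an explicit function of $t$ alone, via $\log p = -L(t)$, $\log(1-p) = t - L(t)$, $h(p) = L(t) - (1-p)t$, and $1-2p = \tanh(t/2)$, where $L(t) = \log(1+e^t)$. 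The non-positivity of the resulting expression should then be accessible either through a monotonicity analysis of $F(p)/(1-2p)^2$, which is well-defined by the double zero at $p=1/2$, or by a series expansion about $t = 0$ with signed tail bounds. The hard part will be this final sign analysis: $F$ admits no obvious algebraic factorization certifying non-positivity, so the technical challenge is controlling the derivatives of the bracket $p\log p - (1-p)\log(1-p) + (1-2p)$ against those of $(1-2p)\log p \log(1-p)$ in a way that survives uniformly up to the endpoints, using the shared zeros at $p = 0$ and $p = 1/2$ to propagate the inequality throughout the interval.
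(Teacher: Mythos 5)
Your preliminary reductions are correct and genuinely illuminating: the antisymmetry $F(1-p) = -F(p)$ holds (the prefactor $1-2p$ flips sign while the bracket is symmetric), the difference-of-squares factorization
$p^2(\log p)^2 - (1-p)^2(\log(1-p))^2 = -h(p)\bigl(p\log p - (1-p)\log(1-p)\bigr)$
is valid, and the resulting rewrite of $F$ is a cleaner starting point than the paper's raw expansion. The sign claims for the individual factors on $(0,1/2)$ also check out; in particular the bracket $p\log p - (1-p)\log(1-p) + (1-2p)$ has derivative $\log(p(1-p)) < 0$ and vanishes at $p=1/2$, so it is positive on $(0,1/2)$.

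However, the proof stops exactly where the actual difficulty begins. The claim $F(p)\leq 0$ is equivalent to the comparison $h(p)\bigl[p\log p - (1-p)\log(1-p) + (1-2p)\bigr] \geq (1-2p)\log p\,\log(1-p)$, and you never establish it: you explicitly label it ``the hard part'' and offer only two candidate strategies (monotonicity of $F(p)/(1-2p)^2$, or a series expansion in $t=\log\frac{1-p}{p}$ with ``signed tail bounds'') without executing either. This inequality is numerically very tight --- at $p=0.25$ the two sides differ by roughly $0.008$ against terms of size $0.2$, and both sides agree to first order as $p\to 0^+$ --- so there is no reason to expect a soft argument to close it. The paper's own proof of this claim requires computing $F^{(1)},F^{(2)},F^{(3)},F^{(5)}$, a repeated-Rolle argument counting zeros of $F^{(3)}$, a concavity certificate $F^{(5)}\leq 0$ obtained via polynomial upper bounds on $\log p$ and $\log(1-p)$, and Sturm-sequence verifications that two quintics have no roots in $[0,1/2]$. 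Your reformulation might well make an alternative finish possible, but as written the central inequality --- which is the entire content of the claim --- remains unproven, so this is a genuine gap rather than a complete alternative proof.
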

 \begin{proof}[Proof of Claim \ref{Fp}]
 
 We'll compute the first few derivatives of $F$, and their values at $p = 0$ and $p = \frac{1}{2}$. We'll use $F^{(n)}(p)$ to indicate the $n$-th derivative.
 \begin{align}\label{d1}
 F^{(1)}(p) = \frac{1}{(1-p) p}&(2 (1-p)^2 p \log(1 - p)^2 - 
  p^2 \log(p)(1 - 2 p - 2 (1-p) \log(p)) \\
  &+(1-p) \log(1 - p) (1 - 3 p + 2 p^2 - 2 p \log(p)))~. \nonumber
 \end{align}
 We observe that $F^{(1)}(0) = -1$ and $F^{(1)}(0.5) = 0$. (Note that $F^{(1)}(0)$ is computed in the limit.)
 Now consider the second derivative
 \begin{align}
  F^{(2)}(p) =  \frac{1}{(1-p)^2 p^2}&((-1 + p^2 + 2 p^3 - 2 p^4) \log(1 - p) - 
  2 (1-p)^2 p^2 \log(1 - p)^2\\
   &+ p (-1 + 3 p - 2 p^2 + p (5 - 6 p + 2 p^2) \log(p) + 
     2 (1-p)^2 p \log(p)^2))~. \nonumber
 \end{align}
 Again, evaluating in the limit we see $F^{(2)}(0) \to +\infty$ and $F^{(2)}(0.5) = 0$. Now we compute the third derivative
 \begin{align}
 F^{(3)}(p) = \frac{1}{(1-p)^3 p^3}
 &2 ((1-p)^2 (1 - p^2 + 2 p^3) \log(1 - p)\\
  &+  p (1 + p - 4 p^2 + 2 p^3 + p (2 - 4 p + 5 p^2 - 2 p^3) \log(p)))~. \nonumber
 \end{align}
\begin{figure}[h!]
    \centering
    \includegraphics[scale = 0.75]{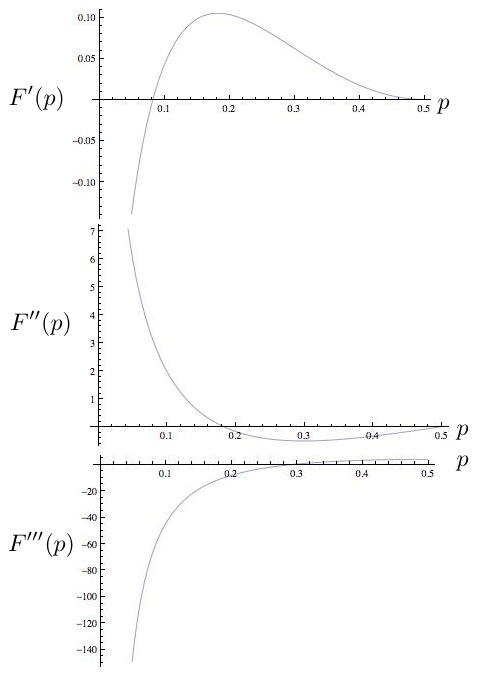}
    \caption{Plot of the derivatives $F(p)$}

\end{figure}

 We evaluate and check that in the limit $F^{(3)}(0) \to -\infty$ and $F^{(3)}(0.5) > 0$.
 Now suppose for some $0 < p < \frac{1}{2}$, it were to be the case that $F(p) > 0$. Since $F(0) = 0$, $F(0.5)=0$ and $F^{(1)}(0) = -1$ we see that $F$ must have a zero in $(0,0.5)$. Now applying Rolle's theorem \cite{apostol1966} twice, we get that $F^{(1)}$ must have $2$ zeros in $(0,0.5)$. We also have $F^{(1)}(0.5) = 0$, which means we can use Rolle's theorem again to conclude that $F^{(2)}$ must have atleast $2$ zeros in $(0,0.5)$. Using $F^{(2)}(0.5) = 0$, and using Rolle's theorem again, we get that $F^{(3)}$ must have atleast $2$ zeros in $(0,0.5)$. Thus, if we can show that $F^{(3)}$ has exactly $1$ zero in $(0,0.5)$, (note that it has atleast $1$ zero since $F^{(3)}(0)$ and $F^{(3)}(0.5)$ have opposite signs) then it implies that $F \leq 0$. Our strategy is to prove $F^{(3)}$ is concave, and based on the values it takes at $0$ and $0.5$, it must have exactly $1$ root in $(0,0.5)$.\\
 To this end, we compute the fifth derivate of $F$  
 \begin{equation}\label{f5}
 F^{(5)}(p) = \frac{2}{(1-p)^5p^5}\left(P_1(p)\log(p) + P_2(p)\log(1-p) + P_3(p)\right)
 \end{equation}
 where
 \begin{eqnarray}
 P_1(p) &=& 2 p^2 (2 - 10 p + 20 p^2 - 11 p^3 + 7 p^4 - 2 p^5)~,\\
 P_2(p) &=& 2 (1-p)^2 (6 - 15 p + 9 p^2 + 3 p^3 - 3 p^4 + 2 p^5)~,\\
 P_3(p) &=& p (12 - 49 p + 70 p^2 - 25 p^3 - 12 p^4 + 4 p^5)~.
 \end{eqnarray}

  \begin{claim}\label{p1p2}
  $P_1(p) \geq 0$, $P_2(p) \geq 0$ for $0 \leq p \leq \frac{1}{2}$.
  \end{claim}
  \begin{figure}[h!]
    \centering
    \includegraphics[scale = 0.75]{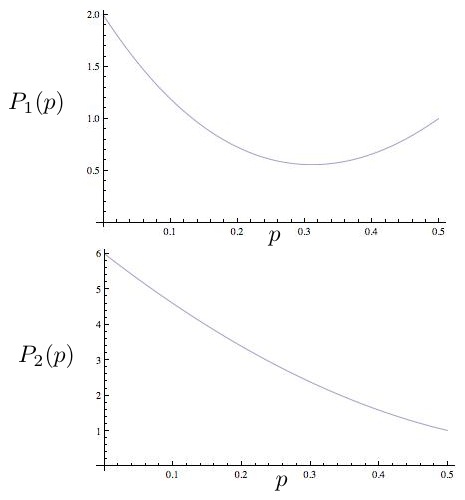}
    \caption{Plots of the $P_1$ and $P_2$}

\end{figure}

  \smallskip

 Assuming Claim \ref{p1p2} is true, we use the following polynomial approximations for $\log(p)$ and $\log(1-p)$:
 \begin{eqnarray}
\log(p) &\leq& -(1-p) - \frac{(1-p)^2}{2}~,\\
\log(1-p) &\leq& -p - \frac{p^2}{2}~.
 \end{eqnarray}
 Thus, 
 \begin{equation}\label{poly}
P_1(p)\log(p) + P_2(p)\log(1-p) + P_3(p)\leq P_1(p)\left( -(1-p) - \frac{(1-p)^2}{2}\right) + P_2(p)\left(-p - \frac{p^2}{2}\right) + P_3(p)~.
 \end{equation}
 The right hand side of the above expression simplifies to to 
 \begin{equation}
 -12(1-p)^2 (p-0.5)^2  p^2 (p^2-p+\frac{7}{3})
 \end{equation}
 which is immediately seen to be $\leq 0$. Thus $(\ref{poly})$ and $(\ref{f5})$ give us 
 $$F^{(5)}(p) \leq 0$$ which means $F^{(3)}$ is concave, thus proving Claim \ref{Fp}. 
 \end{proof}
\begin{proof}[Proof of Claim \ref{p1p2}]
To show $P_1 \geq 0$, we need to show that 
 \begin{equation}
\hat{P_1}(p) =2 - 10 p + 20 p^2 - 11 p^3 + 7 p^4 - 2 p^5 \stackrel{?}{\geq} 0~.
 \end{equation}
 $\hat{P_1}(0) = 2$ and $\hat{P_1}(0.5) = 1$. Thus if we show that $\hat{P_1}$ has no real roots in $(0,0.5)$, we'll be done. We show this using Sturm's theorem \cite{jacobson1985}. Using Mathematica, we construct the Sturm sequence which is
 \begin{align*}
&g_0 = +2 - 10 p + 20 p^2 - 11 p^3 + 7 p^4 - 2 p^5~,\\
&g_1 = -10 + 40 p - 33 p^2 + 28 p^3 - 10 p^4~,\\
&g_2 = -(3/5 - (12 p)/5 + (369 p^2)/50 - (12 p^3)/25)~,\\
&g_3 = -(-(2675/16) + (2625 p)/4 - (61325 p^2)/32)~,\\
&g_4 = -(4436544/150430225 - (16965504 p)/150430225)~,\\
&g_5 = -31638033631325/249850977408~.
\end{align*}
Evaluating the above sequence at $0$, we get the signs $(+,-,-,+,-,-)$ which has $3$ sign changes. Evaluating at $\frac{1}{2}$,we get the signs $(+,+,-,+,+.-)$. Since this also has $3$ sign changes, we are assured that $\hat{P_1}$ has no real roots in $[0, \frac{1}{2}]$.
\smallskip

Similarly, we consider $\hat{P_2}(p)$, for which we need to show
\begin{equation}
\hat{P_2}(p) = 6 - 15 p + 9 p^2 + 3 p^3 - 3 p^4 + 2 p^5 \stackrel{?}{\geq} 0~.
\end{equation}
$\hat{P_2}$ takes values $6$ and $1$ at $0$ and $0.5$ respectively. Again, constructing the Sturm sequence for $\hat{P_2}$ we get
\begin{align*}
&g_0 = 6 - 15 p + 9 p^2 + 3 p^3 - 3 p^4 + 2 p^5~,\\
&g_1 = -15 + 18 p + 9 p^2 - 12 p^3 + 10 p^4~,\\
&g_2 = -(51/10 - (273 p)/25 + (297 p^2)/50 + (12 p^3)/25)~,\\
&g_3 = -(45675/32 - (50825 p)/16 + (61325 p^2)/32)~,\\
&g_4 = -(-(2505792/30086045) + (16965504 p)/150430225)~,\\
&g_5 = (31638033631325/249850977408)~.
\end{align*}
Evaluating at $0$ gives the sign sequence $(+,-,-,-,+,+)$, and evaluating at $0.5$ gives the sign sequence $(+,-,-,-,+,+)$. Since they have the same number of sign changes, we conclude that $\hat{P_2}$ has no zeros in $[0,0.5]$. This proves Claim \ref{p1p2}, and completes the proof of Lemma \ref{dfdx}.

\end{proof}

\end{document}